\newtheorem*{theorem*}{Theorem}
\newtheorem{lemma}{Lemma}[section]
\newtheorem{theorem}[lemma]{Theorem}
\newtheorem{corollary}[lemma]{Corollary}
\newtheorem{defn}[lemma]{Definition}
\theoremstyle{definition}
\newtheorem{remark}[lemma]{Remark}
 \newenvironment{proofof}[1]{\smallskip\noindent{\bf Proof of #1:}}%
\def\E{\ensuremath{\mathrm{\mathbf{E}}}}
\def\Var{\ensuremath{\mathrm{\mathbf{Var}}}}
\def\Bin{\ensuremath{\mathrm{\mathbf{Bin}}}}
\def\Poi{\ensuremath{\mathrm{\mathbf{Poi}}}}
\def\OPT{\ensuremath{\mathrm{\mathbf{OPT}}}}
\def\VV{\ensuremath{\mathcal{V}}}
\def\AA{\ensuremath{\mathcal{A}}}
\def\HH{\ensuremath{\mathcal{H}}}
\def\XX{\ensuremath{\mathcal{X}}}
\def\PP{\ensuremath{\mathcal{P}}}
\def\NN{\ensuremath{\mathcal{N}}}
\def\LP[#1]{\ensuremath{\mathrm{\mathbf{LP#1}}}}
\def\OP[#1]{\ensuremath{\mathrm{\mathbf{OP#1}}}}
\def\Big{\ensuremath{\mathsf{Big}}}
\def\Mon{\ensuremath{\mathsf{Mon}}}
\def\pmax{\ensuremath{p_{\textrm{max}}}}
\def\accept{{\fontfamily{cmss}\selectfont accept}\xspace}
\def\reject{{\fontfamily{cmss}\selectfont reject}\xspace}
\def\ceil#1{\lceil {#1} \rceil}
\title{Towards Testing Monotonicity of Distributions Over General Posets}
\author{
Maryam Aliakbarpour
\thanks{MA is supported by funds from the MIT-IBM Watson AI Lab (Agreement No. W1771646),  the NSF grants IIS-1741137, and CCF-1733808.}\\
CSAIL, MIT\\
\texttt{maryama@mit.edu}
\and
Themis Gouleakis
\thanks{TG is supported by the NSF grants CCF-1740751, CCF-1650733, CCF-1733808, and IIS-1741137. Part of this work was done while TG was a postdoctoral researcher at USC supported by Ilias Diakonikolas' USC startup grant.
}\\
Max Planck Institute\\
\texttt{tgouleak@mpi-inf.mpg.de}
\and 
John Peebles
\thanks{JP is supported by the NSF grants CCF-1565235, CCF-1650733, CCF-1733808, and IIS-1741137. 
}
\\
CSAIL, MIT\\
\texttt{jpeebles@mit.edu}
\and
Ronitt Rubinfeld\thanks{RR is supported by by funds from the MIT-IBM Watson AI Lab (Agreement No. W1771646),
the NSF grants CCF-1650733, CCF-1733808, IIS-1741137 and CCF-1740751.
}\\
CSAIL, MIT, TAU \\
\texttt{ronitt@csail.mit.edu}
\and 
Anak Yodpinyanee
\thanks{AY is supported by the NSF grants CCF-1650733, CCF-1733808, IIS-1741137 and the DPST scholarship, Royal Thai Government. This work was completed while AY was at CSAIL, MIT.}
\\
CSAIL, MIT\\
\texttt{anak@mit.edu}
}
\begin{document}
\maketitle

\vspace{-0.4cm}
\begin{abstract}{In this work, we consider the sample complexity required for testing the monotonicity of distributions over partial orders. 
A distribution $p$ over a poset is {\em monotone} if, for any pair of domain elements $x$ and $y$ such that $x \preceq y$, $p(x) \leq p(y)$.

To understand the sample complexity of this problem, we introduce a new property called \emph{bigness} over a finite domain, where the distribution is $T$-big if the minimum probability for any domain element is at least $T$.
We establish a lower bound of $\Omega(n/\log n)$ for testing bigness of distributions on  domains of size $n$. We then build on 
these lower bounds to give $\Omega(n/\log{n})$ lower bounds for 
testing monotonicity over a matching poset of size $n$ and significantly improved
lower bounds over
the hypercube poset.
 
We give sublinear sample complexity bounds for testing bigness and for testing
monotonicity over the matching poset.  
 We then give a number of tools for analyzing upper bounds on the sample complexity of
 the monotonicity testing problem. 
 
 \textbf{Keywords:} Property Testing; Monotone Distributions; Partially Ordered Sets;
} \end{abstract}


\section{Introduction}

We consider the problem of testing whether a distribution is monotone: an essential property that captures many observed phenomena of real-world probability distributions. For instance, monotone distributions over \emph{totally ordered sets}
might be used to describe distributions on diseases for which the probability of being affected by the disease increases with age.  More generally, an important class of distributions are characterized by being monotone
over a \emph{partially ordered set} (poset). For these distributions, if a domain element $u$ lower bounds $v$ in the partial ordering (denoted $u \preceq v$), then $p(u) \leq p(v)$ (whereas if $u$ and $v$ are unrelated in the poset, then $p$ needs not satisfy any
particular requirement on the relative probabilities of $u$ and $v$).   Such distributions might include distributions on diseases for which the probability of being affected increases by some combination
of several risk factors. 
Many commonly studied distributions, e.g. exponential distributions or multivariate exponential distributions, are or can be approximated by piecewise monotone functions.
As monotone distributions are a fundamental class of distributions, the problem of testing whether a distribution is monotone is a key building block for distribution
testing algorithms.  

Given an unknown distribution, over a poset domain, 
the goal is to distinguish whether the distribution is 
monotone 
or far from any  monotone distribution, using as few samples as possible.
This problem has been considered in the literature:
the problem of testing whether a distribution is monotone was first considered in the work of \cite{BatuKR04},
where testing the monotonicity of distributions over totally  ordered domains and 
partially ordered domains that corresponded to two-dimensional grids  were considered.
The work of \cite{BhattacharyyaFRV10} introduced the study of testing the monotonicity of
distributions over general partially ordered domains, and in particular, considered
the Boolean hypercube  ($\{0, 1\}^d$).  Several other works considered these questions
\cite{DaskalakisDSV13,AcharyaDK15,CanonneDGR18} under various different domains and achieved improved
sample complexity bounds.

The sample complexity of the testing problem varies greatly with the structure of the poset: On the one hand, for domains of size $n$ that are total orders, $\Theta(\sqrt{n})$ samples suffice for distinguishing monotone distributions, from those that are $\epsilon$-far in total variation distance from any monotone distribution \cite{BatuKR04,AcharyaDK15,CanonneDGR18}. On the other hand,
testing distributions defined over the matching poset requires nearly linear in $n$, specifically $\Omega(n^{1 - o(1)})$, samples \cite{BhattacharyyaFRV10}.   Furthermore,
for a large class of familiar posets, such as the Boolean hypercubes, little is understood
about the sample complexity of the testing problem.

\paragraph{Our results and approaches:} 
We first define a new property called the {\em bigness} property, which we use as our main building block for establishing sample complexity lower bounds for monotonicity testing. A distribution is {\em $T$-big} if every domain element is assigned probability mass at least $T$. 

Though the bigness property is a {\em symmetric} property (i.e.,  
permuting the labels of the elements does not change 
whether the distribution has the property or not), 
we use lower bounds for testing the bigness property in order to prove 
lower bounds on testing monotonicity, which is not a symmetric property.  In addition, the bigness property is a natural property, and thus of interest in its own right.


We show that the sample complexity 
of the bigness testing problem is $\Theta(n/\log n)$ when $T=\Theta(1/n)$. The upper bound follows from applying the algorithm of \cite{ValiantV17} that learns the underlying distribution up to a permutation of the domain elements.
Our lower bound approach is inspired by the framework of \cite{wu2015chebyshev}, used to lower
bound the number of samples needed to estimate support sizes. Our lower bound is established by showing that the distribution of samples, one generated from $T$-big distributions ($p$'s) and the other generated from distributions that are $\epsilon$-far from $T$-big ($p'$'s), are statistically close. In contrast with the standard lower bound framework, $p$ and $p'$ are not picked from two sets of distributions.
Instead, the distribution $p$ (resp.~$p'$) is constructed by having each domain element $i$ choose its probability $p(i)$, in an i.i.d.~fashion, from the distribution $P_V$ (resp.~$P_{V'}$) over possible probabilities in $[0,1]$. 
To design $P_V$ and $P_{V'}$, we introduce a new optimization problem that maximizes $\epsilon$ while keeping the distribution of samples statistically close. This constraint is established via the \emph{moments matching} technique, which
allows us to show that the distributions are
indistinguishable with $o(n/\log{n})$ samples, but also plays a crucial role  in many other settings  \cite{RaskhodnikovaRSS09,Valiant08,BhattacharyyaFRV10,ValiantV16,ValiantV17,wu2015chebyshev,WuY16Entropy}.

By reducing from the bigness testing problem, we next give a  lower bound 
of $\Omega(n/\log{n})$
on the sample complexity of the monotonicity testing problem over the matching poset, 
improving on the $\Omega\left(n/2^{\Theta(\sqrt {\log n})}\right)$ lower bound in \cite{BhattacharyyaFRV10}.  
In addition to improving the sample complexity lower bound, one particularly useful byproduct of our approach is that the maximum probability of an element in the constructed lower bound distribution families can be made small, which assists us in proving lower bounds for other posets in the following.

Finally, we leverage the lower bound for the monotonicity testing problem over the matching poset to prove a lower bound of $N^{1-\delta}$ for $\delta = \Theta(\sqrt{\epsilon}) +o(1)$ for monotonicity testing over the Boolean hypercube of size $N=2^d$, greatly improving upon the standard ``Birthday Paradox'' lower bound of $\Omega(\sqrt{N})$. 
Our reduction follows from finding a large embedding of the matching poset in the hypercube, and its efficiency follows from the previously mentioned
upper bound on the maximum element probability from the bigness lower bound construction above.

We then give a number of new tools for analyzing upper bounds on the sample complexity of the monotonicity testing problem:
 \begin{enumerate}
 \item We prove that the distance of a distribution to monotonicity can be characterized approximately as the weight of a \emph{maximum weighted matching} in the \emph{transitive closure} of the poset, where the weight of the edge $(u,v)$ is the amount of violation from being monotone: $\max(0, p(u) - p(v))$. This characterization gives a structural result about distributions that are $\epsilon$-far from monotone. 
Moreover, this results extends the work of \cite{FischerLNRRS02} to non-boolean valued functions. The work of \cite{FischerLNRRS02} shows that the distance of a boolean function $f$ to monotonicity is related to the number of ``violating edges" in the transitive closure of the underlying poset.  
 
 \item Via the characterization above, we show that the monotonicity testing problem over \emph{bipartite} posets (where all edges are directed in the same direction) captures the monotonicity testing problem in its full generality. That is, we give a reduction from monotonicity testing over any poset to monotonicity testing over a bipartite poset.  Our reduction preserves the number of vertices and the distance parameter up to a constant multiplicative factor. 
As in the previous, this result extends the work of \cite{FischerLNRRS02} to non-boolean valued functions.

\item Leveraging the learning algorithms for {\em symmetric} distributions in \cite{ValiantV17}, we propose algorithms with sample complexity $O(n/(\epsilon^2\log n))$ for testing bigness of a distribution, and for testing monotonicity on matching posets. The proof of our latter result requires certain subtle details: (1) an additional reduction that allows us to scale our distribution for ``each side'' of the matching, in order to generate sufficient samples from each side, as required by the algorithm of \cite{ValiantV17}, and (2) technical lemmas establishing bounds between the total variation distance and the distance notion in \cite{ValiantV17}, under the scaling mentioned earlier.

\item We give a reduction from monotonicity testing on a bipartite poset, to monotonicity testing on the matching (for which the testing algorithm is constructed above). This reduction gives an algorithm for monotonicity testing on any bipartite poset (which is the most general problem, as argued earlier), in which the overhead in the sample complexity depends only on the maximum degree of the bipartite graph.
 \item  We give another upper bound for testing monotonicity on bipartite
 posets: $O((\log M)/\epsilon^2)$ where $M$ is the number of ``endpoint sets'' of all possible matchings contained in the given bipartite graph (or equivalently, the number of induced subgraphs that admit a perfect matching over their respective vertex sets). Note that for the matching poset, $M = 2^n$ yields an $O(n/\epsilon^2)$ upper bound, and therefore for matching posets
 our previous algorithm is preferable.   However, this bound yields an upper bound of $O(n/\epsilon^2)$   for {\em all} posets, and could potentially be even smaller for certain classes of graphs, such as collections of large stars.
 \item Finally, we give an upper bound of $O(\frac{n^{2/3}}{\epsilon}+\frac{1}{\epsilon^2})$ samples for monotonicity testing on bipartite posets, under the guarantee that the distribution being tested is a uniform distribution on some subset of known size of the domain. This special case is of interest in that it relates to the well studied problem of testing monotonicity of \emph{Boolean functions}, in a somewhat different setting where instead of getting query access to the function, we are given uniform ``positive'' samples of domain elements $x$ for which $f(x) = 1$.
 \end{enumerate}

\paragraph{Other related work}
Batu, Kumar, and Rubinfeld \cite{BatuKR04} initiated the study of testing monotonicity of distributions.  For the case where the domain is totally ordered, the sample complexity is known to be ${\Theta}(\sqrt n)$ \cite{BatuKR04, AcharyaDK15, CanonneDGR18}.
Several works have considered distributions over higher dimensional domains. In \cite{BatuKR04,BhattacharyyaFRV10}, it is shown that  testing monotonicity of a distribution on the  two dimensional grid $[m]\times[m]$  (here $N=m^2$) can be performed using $\widetilde{O}(N^{3/4})$ samples. 
For higher dimensional grids $[m]^d$ (where $N=m^d$), Bhattacharyya et al.  provided an algorithm that uses $\widetilde{O}(m^{d - 1/2}) = \widetilde{O}(N/\sqrt[2d]{N})$ samples \cite{BhattacharyyaFRV10}.   Acharya et al.~gave an upper bound of $O(\frac{\sqrt{N}}{\epsilon^2} + (\frac{d \log m}{\epsilon^2})^d \cdot \frac{1}{\epsilon^2})$ and a lower bound of $\Omega(\sqrt{N}/\epsilon^2)$  \cite{AcharyaDK15}. While their result gives a tight bound of $\Theta(\sqrt{N}/\epsilon^2)$ when $d$ is relatively small compared to $m$, it does not yield  a tester for Boolean hypercubes using a sublinear number of samples.

Bhattacharyya et al.~considered the problem of monotonicity testing over general posets \cite{BhattacharyyaFRV10}. In particular, they proposed an algorithm for testing the monotonicity of distributions over hypercubes (where $N=2^d$) using $\tilde{O}(N/(\log N/\log \log N)^{1/4})$ samples. They provide a lower bound of $\Omega(n^{1 - o(1)})$ for testing monotonicity of distributions over a matching of size $n$, and a lower bound of $\Omega(\sqrt{n})$ when the poset contains a linear-sized matching in the transitive closure of its Hasse digraph.

In addition to the above, testing monotonicity of distributions has been considered in various settings \cite{AdamaszekCS10, DaskalakisDS12, Canonne15}. There are several works on testing various properties, e.g. uniformity, closeness, and independence  when the underlying distribution is monotone \cite{BatuDKR05, BatuKR04, RubinfeldS05, DaskalakisDSV13,  AcharyaJOS13}.

Testing monotonicity of {\em boolean functions} is also well studied (e.g., \cite{GoldreichGLR98, DodisGLRRS99, LehmanR01, FischerLNRRS02, ChakrabartyS13a, ChakrabartyS14, BelovsB16, BlackCS18}). In the general regime,  the algorithm can query the value of the function at any element in the poset.  This ability is in sharp contrast with our model, in which the algorithm only receive samples according to  the distribution, which do not directly reveal the probability of the elements. 
It is known that one can test monotonicity of functions over hypergrids, and hypercubes using as few  as polylogarithmic queries in the size of the domain. This query complexity is exponentially smaller than the sample complexity of testing monotonicity of distributions, demonstrating that there are  inherent differences between the two problems.

\section{Preliminaries}

We use $[n]$ to indicate the set $\{1, 2, \ldots, n\}$. Throughout this paper we use  the total variation distance denoted by $d_{TV}$ unless otherwise stated. We also denote the $\ell_1$-distance by $d_{\ell_1}$. For a distribution $p$, we denote the probability of the domain element $x$ by $p(x)$. Given a multiset of samples from a distribution on $[n]$, the {\em histogram} of the samples is an $n$-dimensional vector, $h = (h_1, h_2, \ldots, h_n)$, where $h_i$ is the frequency of the $i$-th element in the sample set.

A poset $G = ([n],E)$ is called a {\em line} if and only if $E$ contains all the edges $(i,i+1)$ for $1\leq i \leq n$.
We say a poset is a {\em matching} if all of the edges in the poset are vertex-disjoint. 
We say a poset is {\em bipartite} if the set of vertices can be decomposed in two sets, the {\em top set} and the {\em bottom set}, where no two vertices in the same set are connected. Moreover, 
the direction of {\em all} the edges is from the top set to the bottom set. We use similar terminology for the matching poset as well. In addition, we say a poset $G = (V,E)$ is an {\em $n$-dimensional hypercube} 
when $V$ is $\{0,1\}^n$ and $E$ contains all edges $(u,v)$ 
where there exists a coordinate $i$ such that $u_i=0$ and $v_i=1 $ and $u_j = v_j$ for all $i \neq j$.



\paragraph{Monotonicity.} 
A partially-ordered set (poset) is described as a directed graph $G = (V,E)$, where each edge $(u,v)$ indicates the relationship $u \preceq v$ on the poset. A {\em matching poset} is a poset where the underlying graph $G$ is a matching. A distribution $p$ over a poset domain $V = \{v_1, \ldots, v_n\}$ is a distribution over the vertex set $V$. A distribution $p$ is \emph{monotone} (with respect to a poset $G$) if for every edge $(u, v) \in E$ (i.e., every ordered pair $u \preceq v$), $p(u) \leq p(v)$. Let $\Mon(G)$ be the set of all monotone distributions over the poset $G$. We say that $p$ is \emph{$\epsilon$-far from monotone} if its \emph{distance to monotonicity}, $d_{TV}(p, \Mon(G)) \coloneqq \min_{q \in \Mon(G)} d_{TV}(p,q)$, is at least $\epsilon$. 

\begin{defn}
Let $p$ be a distribution on poset $G$ and $\epsilon$ be the proximity parameter. Suppose an algorithm, $\AA$, has sample access to $p$ and the full description of poset $G$.
$\AA$ is called a {\em monotonicty tester for  distributions} if the following is true with probability at least $2/3$ when the tester has sample access to the distribution.
\begin{itemize}
\item If $p$ is monotone, then $\AA$ outputs \accept.
\item If $p$ is $\epsilon$-far from monotone, then $\AA$ outputs \reject.
\end{itemize}
\end{defn}

\paragraph{Bigness.} A probability distribution $p$ over a domain $[n] = \{1, \ldots, n\}$ is \emph{$T$-big} if, for every domain element $i \in [n]$, $p(i) \geq T$. Related notions for distance to $T$-bigness are defined analogously. The parameter $T$ is called the \emph{bigness threshold}, and may be omitted if it is clear from the context. Let $\Big(n,T)$ indicate the set of all distributions over $[n]$ that are $T$-big. We define the distance to $T$-bigness as $d_{TV}\left(p, \Big(n,T)\right) = \min_{q \in \Big(n,T)} d_{TV}(p,q)$. If this distance is at least $\epsilon$, we say the distribution is $\epsilon$-far from being $T$-big.



\begin{defn}
Let $p$ be a distribution on $[n]$. Suppose Algorithm $\AA$ receives threshold $T$ and bigness parameter $\epsilon$, and has sample access to $p$.  $\AA$ is a {\em $T$-bigness tester} if the following is true with probability at least $2/3$.
\begin{itemize}
\item If $p$ is $T$-big, then $\AA$ outputs \accept.
\item If $p$ is $\epsilon$-far from  $T$-big, then $\AA$ outputs \reject.
\end{itemize}
Also, \emph{$T$-bigness testing problem} refers to the task of distinguishing the above cases with high probability. 
\end{defn}

\begin{remark}
Note that the probability $2/3$ is arbitrary in the above definitions. One can amplify the probability of outputting the correct answer to $1-\delta$ by increasing the number of samples by an  $O(\log 1/\delta)$ factor. 
\end{remark}
\section{Overview of Our Techniques}
In this section, we give an overview of our results and the high-level idea of our techniques.

\subsection{A lower bound for the bigness testing problem}
\label{subsection:bignesslboverview}

In Section \ref{sec:BignessLB}, we provide two random processes for generating histograms of samples from two families of distributions, such that one family consists of ``big'' distributions, and the other family largely of ``$\epsilon$-far from big'' distributions.
Then, we show that unless a large number of samples have been drawn, the distributions over the histograms generated via these two random processes are statistically very close to each other, and hence appear indistinguishable to any algorithm,
as specified precisely in Theorem~\ref{thm:bigness_LB}.  The construction yields a lower bound for the general problem of testing the bigness property in Corollary~\ref{cor:bigness-test}.  Furthermore, the construction provides  a useful building block for establishing further lower bounds for monotonicity testing in various scenarios in Section~\ref{sec:big2mon}.

To generate histograms from the two families of distributions,  imagine the following process:  We have two prior distributions $P_V$ and $P_{V'}$, and we generate probability vectors (measures), $p$ and $p'$, according to the priors: Each domain element $i$ randomly picks its probability in an i.i.d fashion from the prior distribution. More precisely, let $V_1, V_2, \ldots, V_n$ be $n$ i.i.d. random variables from prior $P_V$, then $p$ is defined to be the following:
$$p = \frac 1 n (V_1, V_2, \ldots, V_n)\,.$$  
We generate $p'$ similarly according to prior $P_{V'}$. While the total probability is unlikely to sum to $1$, we will design the priors, $P_V$ and $P_{V'}$, so that we can later modify $p$ or $p'$ into a probability distribution with only small changes. We then generate histograms of samples from (the normalization of) $p$
by drawing $n$ independent random variables $h_i \sim \Poi(s \cdot p(i))$ (namely $h_i \sim \Poi(sV_i/n)$) for $i = 1, \ldots, n$, and output $h = (h_1, \ldots, h_n)$ as the histogram of the samples. 
Note that by Poissonization method, one may view the histogram as being generated from a set of $\Poi(s \cdot \sum_i {V_i}/n)$ samples from the normalization of $p$. 
Hence, if $\sum_i {V_i}/n$ is close to one, the histogram serves as a set of roughly $s$ samples. 
We set $s$ more specifically in terms of the rest of the parameters later.

The goal in Section~\ref{sec:BignessLB} is to find two prior distributions $P_V$ and $P_{V'}$, then generate two probability vectors $p$ and $p'$, and two histograms $h$ and $h'$ according to them respectively, such that the following events hold with high probability.
\begin{enumerate}
\item 
The probability vectors $p$ and $p'$ are approximate probability distributions; that is, their total probability masses are each close to $1$. 
\item
After scaling the probability vectors $p$ and $p'$ above into respective probability distributions, the normalization of  $p$ is $T$-big, and the normalization of $p'$ is $\epsilon$-far from any $T$-big distribution.
\item
The total numbers of (Poissonized) samples in $h$ and $h'$ drawn from the normalization of $p$ and $p'$ are each $\Omega(s)$, where $s$ is the sample complexity lower bound we are aiming to prove.
\item
Given $h$ or $h'$, distinguishing whether it is generated from $P_V$ or $P_{V'}$ with success probability $2/3$ requires $h$ or $h'$ to contain at least $s$  samples.
\item
Additionally, we will bound the largest probability mass $\pmax$ that the normalized distributions place on any domain element -- this part is not necessary for this section, but will be useful for the reduction between monotonicity testing and bigness testing later on. 
\end{enumerate}

Now, if we choose $P_V$ and $P_{V'}$ carefully such that $h$ and $h'$ are generated according to the above process based on $P_V$ and $P_{V'}$ are hard to distinguish, then we can establish a lower bound for the bigness testing problem. We state this result more formally as the following theorem in Section~\ref{sec:BignessLB}.

\begin{restatable*}{theorem}{bignessLB} \label{thm:bigness_LB}
For 
integer $L = O(\log n)$ and 
sufficiently small $\epsilon=\Omega(1/n)$,
there exist a parameter $\beta = \beta(L,\epsilon)$ 
and two distributions $\HH^+$ and $\HH^-$ over the set of possible histograms of size at least $s = \Omega\left(n^{1-1/L} \log^2 (1/\epsilon)/L\right)$ with the following properties:
\begin{itemize}
\item The histogram generated from $\HH^+$ 
is drawn from a $1/(\beta n)$-big distribution.
\item The histogram generated from $\HH^-$ 
is drawn from a distribution which is $\epsilon$-far from any $1/(\beta n)$-big distribution.
\item $d_{TV}\left(\HH^+, \HH^-\right) \leq 0.01$.
\item The largest probability mass among any elements in any probability distributions above (from which the histograms are drawn) is $\pmax = O(L^2/ (n\log^2(1/\epsilon)))$.
\end{itemize}
\end{restatable*}

An important case of this theorem is when $L = \Theta(\log n)$, where we establish a nearly linear sample complexity lower bound of $\Omega(n/\log n)$ for the general problem of bigness testing as follows.

\begin{restatable*}{corollary}{bignessTest} \label{cor:bigness-test}
For sufficiently small parameter $\epsilon=\Omega(1/n)$, there exists a parameter $\beta = \beta(\epsilon)$ such 
that any algorithm that can distinguish whether a distribution over $[n]$ is $1/(\beta n)$-big or $\epsilon$-far from any $1/(\beta n)$-big distribution with probability $2/3$ requires $\Omega(n \log^2(1/\epsilon)/\log n)$ 
samples. 
In particular when $\epsilon$ is a constant, $\beta$ is constant, then any such algorithm requires $\Omega(n/\log n)$ samples. 
\end{restatable*}

We propose the following optimization problem, $\OP[1]$, such that its optimal solution specifies $P_V$ and $P_{V'}$, satisfying the requirements of the theorem. Intuitively speaking, as $P_V$ aims to generate $T$-big distributions, we must ensure that $V_i$'s are bounded away from $1/\beta$, so that $p(i) = V_i/n$ has expected value higher than $T=1/(\beta n)$. At the same time, we hope to maximize the probability that $V'_i = 0$ so that $p'$ has lots of domain elements with probability zero to make its normalization far from any $T$-big distribution. In addition, we find $P_V$ and $P_{V'}$ under the constraint that the first $L$ moments of them are exactly matched, as to ensure that the resulting distributions over the histograms, $\HH$ and $\HH'$, are statistically close. The objective value of this optimization problem corresponds to the expected distance of $p'$ to the closest $T$-big distribution in the $\ell_1$-distance.

$$
\begin{array}{lll}
{\rm Definition~of~}\OP[1]: &  \sup & \frac 1 \beta \Pr[V' = 0]
\vspace{2mm} \\
& s.t. & \E[V] = \E[V'] = 1
\vspace{2mm} \\
& & \E[V^j] = \E[{V'}^j] \quad \mbox{for } j = 1, 2, \ldots, L
\vspace{2mm} \\
& & V \in \left[\frac {1+\nu} \beta, \frac \lambda \beta \right],
V' \in \{0\} \cup \left[\frac {1+\nu} \beta, \frac \lambda \beta \right]
 \mbox{ and } \beta > 0.
\end{array}
$$
In the above optimization problem, the unknowns are $P_V$, $P_V'$, and $\beta$. $\nu$ and $\lambda$ are two parameters specified latter in the proof. That is we are looking for two distributions $P_V$ and $P_V'$ such that two random variables $V$ and $V'$ drawn from them respectively have expected value one, and their first $L$ moments are matched. Also, $\beta$ controls the range of the probabilities, $p(i)$'s and $p'(i)$'s, and the distance to the bigness property.

 We relate the optimal solution for $\OP[1]$ to an LP defined by \cite{wu2015chebyshev}, who in turn relate their LP to the error from the
best polynomial approximation of the function $1/x$ over the interval $[1+\nu, \lambda]$. By doing this, we show the existence of a solution $(P_V, P_{V'})$ where the value $\Pr[V' = 0]$, which is proportional to the distance to $1/(\beta n)$-bigness in the second family, is sufficiently large. 

Our proof relies on and extends the lower bound techniques for estimating support size provided in \cite{wu2015chebyshev}, incorporating specific conditions for the bigness problem. Firstly, unlike the support size estimation problem, we need our distributions to be fully-supported on the domain $[n]$ for the big distributions, whereas in their case, both families of distributions are allowed to be partially supported. Secondly, our optimization problem treats the threshold $1/(\beta n)$ as a variable, whereas the support size problem simply imposes the strict threshold of $1/n$. Thirdly, based on this construction, we must also give a direct upper bound for the maximum probability, which facilitates our later proofs for providing lower bounds for the matching and hypercube posets.

\subsection{From bigness lower bounds to monotonicity lower bounds}
\label{subsection:monlb}

In Section \ref{sec:big2mon}, we show how to turn our lower bound results for bigness testing problem in Section \ref{sec:BignessLB}, into lower bounds for monotonicity testing in some fundamental posets, namely the matching poset and the Boolean hypercube poset. 

\paragraph{Matching poset.} To establish our lower bound for testing monotonicity of the matching poset, we construct our distribution $p$ by assigning probability masses to the endpoints of edges $(u_i, v_i)$ in our matching as follows: the vertices $u_i$'s are assigned probability masses according to the $T$-bigness construction, whereas the vertices $v_i$'s are uniformly assigned the threshold $T$ as their probability masses; the assigned probabilities are then normalized into a proper probability distribution. We show that before normalization, $p(u_i) \leq T = p(v_i)$ if the original distribution is big; and otherwise, the distance to the monotonicity of the constructed distribution measures exactly the distance to the $T$-bigness property. We then show that the normalization step scales the entire distribution $p$ down by only a constant factor, hence the lower bounds for the monotonicity testing over the matching poset with $2n$ vertices asymptotically preserves the parameters $\epsilon, s$ and $\pmax$ of the lower bound on bigness construction for $n$ domain elements.

\paragraph{Hypercube poset.} To achieve our results for the Boolean hypercube, we \emph{embed} our distributions over the matching poset into two consecutive levels $\ell$ and $\ell-1$ of the hypercube (where $\ell$ denotes the number of ones in the vertices' binary representation). 
We pair up elements in these levels in such a way that distinct edges of the matching have incomparable endpoints: the algorithm must obtain samples of these matched vertices in order to decide whether the given distribution is monotone or not. We also place probability mass $\pmax$ on all other vertices on level $\ell$ and above, and probability mass $0$ on all remaining vertices,
in order to ensure that the distribution is monotone everywhere else.
Lastly, we rescale the entire construction down into a proper probability distribution. Unlike the matching poset, sometimes this scaling factor is super-constant, shrinking the overall distance to monotonicity, $\epsilon$, to sub-constant. Here, we make use of our upper bound on $\pmax$ of the bigness lower bound construction to determine the scaling factor.

\subsection{Reduction from general posets to bipartite graphs}

In Section \ref{sec:reduction_to_bipartite}, we show that the problem of monotonicity testing of distributions over the \emph{bipartite} posets is essentially the ``hardest'' case of monotonicity testing in general poset domains. That is, we show that for any distribution $p$ over some poset domain of size $n$, represented as a directed graph $G$, there exists a distribution $p'$ over a bipartite poset $G'$ of size $2n$ such that (1) $p$ preserves the total variation distance of $p$ to monotonicity up to a small multiplicative constant factor, and (2) each sample for $p'$ can be generated using one sample drawn from $p$. 
These properties together imply the following main theorem of the section.

\begin{restatable*}{theorem}{generaltobipartite}
\label{thm:general2bipartite}
Suppose that there exists an algorithm that tests monotonicity of a distribution over a bipartite poset domain of $n$ elements using $s(n, \epsilon)$ samples for any total variation distance parameter $\epsilon>0$. Then, there exists an algorithm that tests monotonicity of a distribution over any poset domain of $n$ elements using $O(s(2n, \epsilon/4))$ samples.
\end{restatable*}

Our approach may be summarized as follows. We first show, in Theorem~\ref{thm:dist_to_mon_matching}, that we may characterize (up to a constant factor) the distance of $p'$ to monotonicity, as the size of the \emph{maximum matching} on the \emph{transitive closure} of $G$, denoted by $TC(G)$, where the weight $w(u,v) \coloneqq \max\{p(u)-p(v),0\}$ represents the amount that $(u,v)$ is \emph{violating} the monotonicity condition. In particular, we have the following theorem:

\begin{restatable*}{theorem}{distToMonMatching} \label{thm:dist_to_mon_matching}
Consider a poset $G = (E, V)$ and a distribution $p$ over its vertices. Suppose every edge $(u,v)$ in the $TC(G)$ has a weight of $\max(0, p(u) - p(v))$. Then, the total variation distance of $p$  to any monotone distribution is within a factor of two of the weight of the  maximum weighted matching in $TC(G)$.
\end{restatable*}

This crucial theorem provides a \emph{combinatorial} way to approximate the distance to monotonicity for general posets, leading to our upcoming construction of $p'$ for Theorem~\ref{thm:general2bipartite} as well as some algorithms in Section~\ref{sec:upperbounds}. Theorem~\ref{thm:dist_to_mon_matching} is shown via LP duality: the \emph{dual} LP for the problem of optimally ``fixing'' $p$ to make it monotone, turns out to align with the maximum (fractional) matching problem on $G$'s transitive closure. In particular, the dual constraints are of the form $\{Ay \leq b, y \geq 0\}$ where $A$ is a totally unimodular matrix, implying that an \emph{integral} optimal solution exists, namely the maximum matching.

To prove Theorem~\ref{thm:general2bipartite}, given the original poset $G = (V,E)$, we create a bipartite poset with two copies $u^-$ and $u^+$ of each original vertex $u \in V$: the vertices $u^-$'s and $u^+$'s form the bipartition of the new bipartite poset $G'$ of size $2n$. We add $(u^-, v^+)$ to the bipartite poset if $(u,v)$ is in the transitive closure of $G$; that is, there exists a directed path from $u$ to $v$ in $G$. The new probability distribution $p'$ on $G'$, is created from $p$ on $G$, by dividing the probability mass $p(u)$ equally among $p'(u^-)$ and $p'(u^+)$. Note that a sample from $p'$ is obtained by drawing from $p$ and adding the sign $-/+$ equiprobably. It follows via transitivity that $p'$ is monotone over $G'$ when $p$ is monotone over $G$, and via Theorem~\ref{thm:dist_to_mon_matching} that if $p$ is $\epsilon$-far from monotone on $G$, then $p'$ is also at least $\epsilon/4$-far from monotone over $G'$. These conditions allow us to test monotonicity of $p$ on any general poset $G$ by instead testing monotonicity of $p'$ on a bipartite poset $G'$ with parameter $\epsilon' = \epsilon/4$, as desired.

\subsection{Upper bounds results} 
\label{subsection:upperbounds}
In Section~\ref{sec:upperbounds}, we provide sublinear algorithms for testing bigness, and testing monotonicity of distributions over different poset domains.

\paragraph{Bigness testing.} In Section~\ref{sec:bigness-ub}, we provide an algorithm for bigness testing. Observe that the $T$-bigness property is a \emph{symmetric} property: closed under permutation of the labels of the domain elements $[n]$. Hence, we leverage the result of \cite{ValiantV17} that learns the counts of elements for each probability mass: $h_p(x) = |\{a: p(a)=x\}|$. Observe that the distance to $T$-bigness is proportional to the total ``deficits'' of elements with probability mass below $T$. Hence, this learned information suffices for constructing an algorithm for testing bigness, using a sub-linear, $O(\frac{n}{\epsilon^2 \log n})$, number of samples.

\paragraph{Monotonicity testing for matchings.} Next, in Section~\ref{sec:matching-ub}, we provide an algorithm for testing monotonicity of \emph{matching} posets. We again resort to the work of \cite{ValiantV17} for learning the counts of elements for each \emph{pair of probability masses}, with respect to a pair of distributions $p_1, p_2$ over the domain $[n]$, namely $h_{p_1,p_2}(x,y) = |\{a: p_1(a)=x, p_2(a)=y\}|$, given $O(\frac{n}{\epsilon^2 \log n})$ samples each from $p_1$ and $p_2$. We hope to consider our distribution $p$ over a matching $G = (S\cup T, E)$ with $E = \{(u_i,v_i)\}_{i \in [n]} \subset S\times T$ as a pair of distributions, namely $p_S$ and $p_T$, representing probability masses $p$ places over $u_i \in S$ and $v_i \in T$, respectively. Learning $h_{p_S,p_T}$ would intuitively allows us to approximate $p$'s distance to monotonicity by summing up the ``violation'' for pairs $x < y$. However, there are subtle challenges to this approach that do not present in the earlier case of bigness testing.

First, we must somehow rescale $p_S$ and $p_T$ up into distributions according to their total masses $w_S$, $w_T$ placed by $p$. However, it is possible that, say, $p_S = o(1)$, making samples from $S$ costly to generate by drawing i.i.d.~samples from $p$. We resolve this issue via a reduction to a different distribution $p'$ that approximately preserves the distance to bigness, while placing comparable total probability masses to $S$ and $T$. Second, the algorithm of \cite{ValiantV17} learns $h_{p_1,p_2}(x,y)$ according to a certain distance function, that we must lower-bound by the total variation distance. In particular, this bound must be established under the presence of errors in the scaling factor, as $w_S$ and $w_T$ are not known to the algorithm. We overcome these technical issues, which yields an algorithm for testing monotonicity over matchings. We maintain the same asymptotic complexity as that of \cite{ValiantV17}. 

\paragraph{Monotonicity testing for bounded-degree bipartite graphs.} Moving on, in Section~\ref{sec:bipartite-ub}, we tackle the problem of monotonicity testing in \emph{bipartite} posets; as shown in Section~\ref{sec:reduction_to_bipartite}, this bipartite problem captures the monotonicity testing problem of \emph{any} poset. We make progress towards resolving this problem by offering our solution for the \emph{bounded-degree} case. We turn the distribution $p$ on a bipartite poset $G$ of maximum degree $\Delta$, into a distribution $p'$ on a \emph{matching} poset $G'$ that approximately preserves the distance to monotonicity: applying the algorithm of Section~\ref{sec:matching-ub} above constitutes a monotonicity test for $p$ with sample complexity $O(\frac{\Delta^3 n}{\epsilon^2 \log n})$.

Our reduction simply places $\Delta$ copies $v_1, \ldots, v_\Delta$ of each vertex $v \in V(G)$ into $V(G')$, then for each edge $(u,v) \in E(G)$, connects a pair of unused endpoints $(u_i,v_j$), as to create a matching subgraph of size $|E(G)|$ on $G'$. The probability distribution $p'$ on $V(G')$ simply distributes probability mass $p(v)$ equally among all $\Delta$ copies $v_i$'s. (Each remaining, isolated vertex is matched with a dummy $0$-mass vertex, turning $G'$ into a matching poset.) This new graph $G'$ contains $O(\Delta n)$ vertices, and we show that $d_{TV}(p',\Mon(G')) \geq d_{TV}(p, \Mon(G)) / (2\Delta)$ by explicitly creating a ``low-cost'' scheme for ``fixing'' $p$ into a monotone distribution on $G$, based on the optimal scheme that turns $p'$ monotone on $G'$, charging at most an extra $2\Delta$-multiplicative factor.

\paragraph{Testing monotonicity of distributions that are uniform on a subset of the domain. } In Section \ref{sec:unif}, we show that for a specific broad family of distributions on directed bipartite graphs of arbitrary degree, we can test monotonicity of such distribution using $O(\frac{n^{2/3}}{\varepsilon}+\frac{1}{\varepsilon^2})$ samples. Namely, our result applies for 
distributions that are uniform on an arbitrary subset of the domain, 
given that every poset edge is directed from some vertex in the ``bottom'' part to some vertex in the ``top'' part of the graph. 
Our tester performs roughly the following: First, we sample a number of vertices from the graph and throw away ones that lie in the top part. For the remaining ones in the bottom part, denoted $B$, we identify their neighbors $T$ in the top part, and determine whether or not they all belong to the support of the distribution. Since the distribution is uniform in its support, this condition is sufficient for the distribution to be monotone in the induced subgraph $G[B\cup T]$. The tester accepts when it cannot rule out the possibility that $T$ has the maximum possible probability mass. Recall that if the distribution is $\epsilon$-far from monotone, there must exist a large matching of ``violated'' edges. To this end, we show that the induced subgraph $G[B\cup T]$ contains many disjoint violated edges, implying that there are many vertices in $T$ outside of the support: the probability mass on $T$ will be noticeably small and the tester will reject.

\paragraph{Upper bound via trying all matchings.} In Section \ref{sec:naive} we give another upper bound for testing monotonicity of a distribution with respect to a bipartite graph which, in this case, has a small number of induced subgraphs that contains a perfect matching of their vertices. In particular, we show that $O(\frac{\log M}{\epsilon^2})$ samples are sufficient for this task, where $M$ is the number of such induced subgraphs.  We note that this bound matches the general learning upper bound of $O(n/\epsilon^2)$ when $M$ attains its maximum value of $2^{\Theta(n)}$, but can potentially be better when $M$ is asymptotically smaller.    
The main idea of our tester is as follows: if the distribution is $\epsilon$-far from monotone, there exists a matching of violated edges that is $\Theta(\epsilon)$-far from monotone. Hence, for each subgraph of $G$ that admits a perfect matching, we may approximate the weight (violation amount) of this matching by simply comparing the total probability masses between the top part and the bottom part of the subgraph. We approximate these masses with error probability $O(1/M)$ for each subgraph, which allows us to apply a union bound over all subgraphs at the end. Our tester rejects if the weight of one such subgraph exceeds $\epsilon$, or accepts otherwise.

\section{A Lower Bound for the Bigness Testing Problem} \label{sec:BignessLB}

In this section, we give a lower bound for the bigness testing problems. As described in the overview in Section ~\ref{subsection:bignesslboverview}, 
 we provide two random processes for generating samples from two families of distributions, such that one family consists of ``big'' distributions, and the other family largely of ``$\epsilon$-far from big'' distributions, and then show
 that they are hard to distinguish.

First, we define a random process that, given a prior distribution, $P_V$,  over non-negative numbers, generates a random probability distribution over the domain elements $[n]$, and then draws samples from it. More specifically, let $V$ be a random variable drawn from $P_V$, and we also use $P_V$ to denote the probability density function (PDF) over $V$; for now we require $\E[V]=1$, and will specify further desired properties momentarily. We generate an \emph{approximate} probability distribution $p$ according to $P_V$. The distribution $p$ is constructed by having each domain element $i$ choose its probability $p(i)$, in an i.i.d.~fashion, from the prior distribution, $P_V$,  over possible probabilities. Then, we construct a histogram of roughly $s$ samples from $p$ according to the following steps:
\begin{itemize}
\item \textbf{Step 1:}
Generate $n$ i.i.d.~random variables $V_1, V_2, \ldots, V_n$ according to $P_V$, then form the following \emph{probability vector} over $[n]$:
$$p = \frac 1 n (V_1, V_2, \ldots, V_n)\,.$$
Remark that, while $p$ is not necessarily a probability distribution under this notion, the condition $\E[V]=1$ suggests that the total probability masses of $p$ is likely to be centered around $1$. So, $p$ is likely to be approximately a probability distribution, and can be normalized into one while modifying individual entries $p(i)$'s by only a small multiplicative factor.

\item \textbf{Step 2:} Draw $n$ independent random variables $h_i \sim \Poi(s \cdot p(i))$ (namely $h_i \sim \Poi(sV_i/n)$) for $i = 1, 2, \ldots, n$, and output $h = (h_1, h_2, \ldots, h_n)$ as the histogram of the samples. While we do not explicitly normalize $p$, since $p$ is an approximate probability distribution, this histogram still captures (with high probability) $\Omega(s)$ Poissonized samples drawn from the normalization of $p$.
\end{itemize}

The goal in this section is to find two prior distributions $P_V$ and $P_{V'}$, to generate two probability vectors $p$ and $p'$ according to the above process such that after the normalization, $p$ and $p'$ have the desire properties: $p$ is  big (every $p(i)$ is at least the threshold $T$), and $p'$ is $\epsilon$-far from any big distribution ($p'$ contains a significant number of entries $i$ with $p'(i)=0$). 
Then, we generate two histograms $h$ and $h'$ according to $p$ and $p'$ respectively. 
If the histograms $h$ and $h'$ are hard to distinguish, then we can establish a lower bound for the bigness property. This requirement will show up as constraints for designing two prior distributions, $P_V$ and $P_{V'}$, to achieve these families of distributions with high probability. 
Below, we summarize the conditions that we need the prior distributions to hold (with high probability):

\begin{enumerate}
\item 
The probability vectors $p$ and $p'$ are approximate probability distributions;
that is, all of their coordinates are non-negative and their total probability masses are each close to one.

\item
After scaling the probability vectors $p$ and $p'$ above into respective probability distributions, the normalization of $p$ is $T$-big, and the normalization of $p'$ is $\epsilon$-far from any $T$-big distribution.
\item
The total numbers of (Poissonized) samples in $h$ and $h'$ drawn from the normalization of $p$ and $p'$ are each $\Omega(s)$.
\item
Given $h$ or $h'$, distinguishing whether it is generated from $P_V$ or $P_{V'}$ with success probability $2/3$ requires $h$ or $h'$ to contain a large number of samples.
\item
Additionally, we will bound the largest probability mass $\pmax$ that the normalized distributions place on any domain element -- this part is not necessary for this section, but will be useful for the reduction between monotonicity testing and bigness testing later on. 
\end{enumerate}
We state this result as the following theorem.

\bignessLB

\begin{proof}
Let positive values $\nu$, $\lambda$, $\beta$, and a positive integer $L$ be a set of parameters with the following property 
that we determine more precisely later: 
$$
0 < \nu \leq \frac{1}{2}, \quad \lambda > 1 + \nu, \quad 1\leq\beta\leq\min\left\{\frac{1}{\epsilon}, \lambda \right\} \; \mbox{ and } \; L = O(\log n).
$$
Throughout this section, we consider the bigness threshold $T=1/(\beta n)$, and note that the value $\beta$ itself may depend on the error parameter $\epsilon$, an the number of matched moments $L$. 
Note also that $\nu$ is a constant. 

We propose the following optimization problem, $\OP[1]$, such that its optimal solution, specifying $P_V$ and $P_{V'}$ satisfies the requirements of the theorem. 
Recall that $p$ and $p'$ are generated by drawing $n$ i.i.d samples, $V_i$'s and $V'_i$'s, from $P_V$ and $P_{V'}$ respectively:
\begin{align*}
p = \frac 1 n (V_1, V_2, \ldots, V_n) \quad \quad \quad p' = \frac 1 n (V'_1, V'_2, \ldots, V'_n)
\end{align*}
Intuitively speaking, as $P_V$ aims to generate $T$-big distributions, we must ensure that the $V_i$'s are  bounded away from $1/\beta$, so that $p(i) \sim V_i/n$ has expected value higher than $T=1/(\beta n)$. At the same time, we hope to maximize the probability that $V'_i = 0$ so that $p'$ is far from any $T$-big distribution, under the constraint that the first $L$ moments of $P_V$ and $P_{V'}$ are exactly matched, as to ensure that the resulting distributions of histograms $\HH$ and $\HH'$ are statistically close. The objective value of this optimization problem corresponds to the expected distance of $p'$ to the closest $T$-big distribution in total variation distance. To clarify the notation, $\lambda$ and $\nu$ are given to us. The unknown variables in $\OP[1]$ are the PDFs $P_V$ and $P_{V'}$ of two random variables $V$ and $V'$, respectively, as well as the scaling variable $\beta>0$. The parameter $\lambda$ roughly specifies the ratio between the largest and the smallest non-zero probabilities that $p$ and $p'$ can take.\footnote{
Note that $P_V$ and $P_{V'}$ are on a continuous domain. However, $P_{V'}$ will additionally have a non-negligible probability \emph{mass} placed at value $0$. In fact, it turns out that in the optimal solution, $P_V$ and $P_{V'}$ are only supported on a few distinct values ($\Theta(L) = O(\log n)$ of them), so the optimal $P_V$ and $P_{V'}$ assume the role of probability mass functions rather than PDFs.}

\begin{equation} \label{eq:op1}
\begin{array}{lll}
{\rm Definition~of~}\OP[1]: &  \sup & \frac 1 \beta \Pr[V' = 0]
\vspace{2mm} \\
& s.t. & \E[V] = \E[V'] = 1
\vspace{2mm} \\
& & \E[V^j] = \E[{V'}^j] \quad \mbox{for } j = 1, 2, \ldots, L
\vspace{2mm} \\
& & V \in \left[\frac {1+\nu} \beta, \frac \lambda \beta \right],
V' \in \{0\} \cup \left[\frac {1+\nu} \beta, \frac \lambda \beta \right]
 \mbox{ and } \beta > 0.
\end{array}
\end{equation}

In the following lemma, we find the optimal value of $\OP[1]$. We use $\OPT(A)$ to refer to the optimal value of optimization problem $A$. 
\begin{restatable}{lemma}{optLP}
\label{lem:optLP1}
 For any $\nu$ and $\lambda$ such that $0 < 1+\nu < \lambda$, there exists a scaling  parameter, $\beta$, in $[1+\nu, \min(\lambda, 1/\OPT(\OP[1]))]$ such that 
$$
\OPT(\OP[1]) = \left(\frac {1}{\sqrt{1 + \nu}}- \frac 1 {\sqrt \lambda}\right)^2 \left(\frac{\sqrt{\frac \lambda {1 + \nu}} - 1}{\sqrt{\frac \lambda {1 + \nu}} + 1}\right)^{L-2} \,.
$$
\end{restatable}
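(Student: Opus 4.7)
The plan is to reformulate $\OP[1]$ as a polynomial approximation problem and then invoke classical Chebyshev-extremal results, in the spirit of the support-size framework of Wu--Yang~\cite{wu2015chebyshev}. First I would rescale by setting $W = \beta V$ and $W' = \beta V'$; this moves the supports to $[a,b] := [1+\nu,\lambda]$ (plus an atom at $0$ for $W'$), keeps the first-moment condition as $\E[W]=\E[W']=\beta$, preserves all higher moment matches $\E[W^j]=\E[(W')^j]$ for $j=1,\ldots,L$, and turns the objective into $q/\beta$ where $q:=\Pr[W'=0]$. Thus $\beta$ becomes simply the common first moment, constrained to lie in $[a,b]$.

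Next I would size-bias. Writing $\mu$ for the law of $W$ and $\tilde\mu'$ for the law of $W'$ conditioned on being nonzero, let $\nu(dx):= (x/\beta)\,\mu(dx)$ and $\tilde\nu(dx):= (x/\E[\tilde W'])\,\tilde\mu'(dx)$. Using $\beta=(1-q)\E[\tilde W']$ together with the higher-moment constraints, both $\nu$ and $\tilde\nu$ are probability measures on $[a,b]$ whose first $L-1$ moments agree. Furthermore,
\[
\frac{q}{\beta} \;=\; \frac{1}{\beta} - \frac{1}{\E[\tilde W']} \;=\; \int \frac{1}{x}\,d(\nu-\tilde\nu) \;=\; \int \left(\frac{1}{x}-R(x)\right) d(\nu-\tilde\nu)
\]
for any polynomial $R$ of degree at most $L-1$. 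Hence $q/\beta \leq \|1/x - R\|_{\infty,[a,b]} \cdot \|\nu-\tilde\nu\|_{TV}$, and the upper bound on $\OPT(\OP[1])$ reduces to the classical problem of best uniform polynomial approximation to $1/x$ on $[a,b]$ by polynomials of degree at most $L-1$.

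I would then invoke the explicit Achieser formula for this best approximation; after algebraic simplification it produces exactly the claimed expression $\left(\frac{1}{\sqrt{1+\nu}} - \frac{1}{\sqrt{\lambda}}\right)^2 \left(\frac{\sqrt{\lambda/(1+\nu)}-1}{\sqrt{\lambda/(1+\nu)}+1}\right)^{L-2}$. For the matching lower bound I would use the equioscillation characterization: the extremal polynomial $R^*$ alternates sign at $L+1$ nodes $x_0<\cdots<x_L$ in $[a,b]$, and placing atomic masses at alternating $x_i$'s yields a signed extremal measure $\nu-\tilde\nu$ attaining the bound. Back-solving the size-biasing recovers explicit priors $\mu,\mu'$ (equivalently $P_V, P_{V'}$) supported on $\Theta(L)$ points, and $\beta$ is then read off as the resulting first moment. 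This automatically satisfies $\beta \in [1+\nu,\lambda]$, while $q\le 1$ forces $\OPT = q/\beta \leq 1/\beta$ and hence $\beta \leq 1/\OPT(\OP[1])$.

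The main obstacle will be to ensure that the total-variation factor $\|\nu-\tilde\nu\|_{TV}$ and the Achieser constant combine to give the claimed expression with exactly the stated leading coefficient, with no spurious factors of $2$ or $1/\sqrt{ab}$. Wu--Yang handle this bookkeeping for the support-size problem, in which the zeroth moment is the target and there is no free scaling parameter; here I must carry the extra variable $\beta$ through all computations, exploit the additional flexibility it provides to make the inequalities tight simultaneously, and verify that the atomic weights obtained by inverting the size-biasing are nonnegative and supported on $\{0\} \cup [(1+\nu)/\beta,\lambda/\beta]$ as demanded by $\OP[1]$.
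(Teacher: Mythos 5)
Your proposal is correct and follows essentially the same route as the paper: the paper likewise converts $\OP[1]$ into the Wu--Yang linear program over moment-matched measures supported on $[1+\nu,\lambda]$ via exactly your size-biasing change of variables (with $\beta$ emerging as $1/\E[1/X]$ at the optimum, hence lying in $[1+\nu,\min(\lambda,1/\OPT(\OP[1]))]$), and then quotes the known value of that program as twice the best degree-$(L-1)$ uniform approximation error of $1/x$, which gives the stated closed form. The only real difference is that you propose to re-derive the duality with the polynomial approximation problem directly (upper bound by subtracting a degree-$(L-1)$ polynomial, lower bound by equioscillation), whereas the paper cites Appendix~E of Wu--Yang for that step; the feasibility checks you flag as the main obstacle (nonnegativity of the back-solved priors, the atom at $0$ having nonnegative mass, and the range of $\beta$) are exactly the verifications the paper carries out explicitly, and they go through.
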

The proof of Lemma~\ref{lem:optLP1} is postponed to Section \ref{sec:optLP1}.

Let the value of  $\beta$ be determined by the above lemma, and set $d$ to be $\OPT(\OP[1])$.

Recall our wish list of five properties for the priors, $P_V$ and $P_V'$, that we propose in the introduction  of Section~\ref{sec:BignessLB}. We define the following ``good" events , which hold with high probability, to formalize the properties of the generated vectors $p$ and $p'$. 

$$E = \left\{
\left|\sum\limits_{i=1}^n \frac{V_i}n - 1\right| \leq \nu, \mbox{ and }
\sum\limits_{i=1}^n N_i > s(1-\nu)/2
\right\} \,.
$$
and
$$
E' \
= \left\{
\left|\sum\limits_{i=1}^n \frac{V'_i}n - 1\right| \leq \nu\, , 
\, r \geq \frac {\beta n d}{2}, \mbox{ and }
\sum\limits_{i=1}^n N'_i > s(1-\nu)/2
\right\}
$$
where $r$ is the number of elements $i$ such that $V'_i$ is zero. Roughly speaking, these events state that $p = \frac{1}{n}(V_i)_{i\in[n]}$ and $p'=\frac{1}{n}(V'_i)_{i\in[n]}$, generated in step 1, are approximate probability distributions (having total masses in $[1-\nu,1+\nu] = \Theta(1)$), and step 2 generates sufficient numbers of samples in the histogram (at least $s(1-\nu)/2 = \Omega(s)$ each). Further, $p'$ consists of as many as $r \geq \beta n d / 2$ elements with probability mass $0$, thus is at distance at least $rT \geq d/2$ from any $T$-big distribution -- we will set $d \geq 2\epsilon$ to reach the desired result.

In the following lemma, we show that conditioning on $E$ and $E'$, after running the process using the priors $P_V$ and $P_{V'}$, the generated histogram $h$ is a sufficiently large set of samples from a $1/(\beta n)$-big distribution, and histogram $h'$ is a sufficiently large set of samples from a distribution which is $\epsilon$-far from any $1/(\beta n)$-big distribution. In addition, the total variation distance between the distribution over $h$'s and $h'$'s is bounded when $P_V$, $P_{V'}$ form a solution of $\OP[1]$. More precisely, let $\HH$ denote the distribution over histograms $h$ generated by the process when the prior is $P_V$, and let $\HH_E$ be the distribution over histograms $h$ conditioning on $E$. We define $\HH'$ and $\HH_{E'}$ similarly. In the following lemma, we bound the total variation distance between $\HH_E$ and $\HH'_{E'}$ as well.

\begin{restatable}{lemma}{dtvBound}
\label{lem:dtvBound}
Let $P_V, P_{V'}$, and $\beta \in [1, 1/d]$
form a solution of $\OP[1]$ with objective value $d \geq 2 \epsilon$. Suppose $P_V$ and  $P_{V'}$ are the prior distributions to generate histograms $h$ and $h'$ according to the process. Then, $h$ given event $E$ is a histogram of a set of at least $s(1-\nu)/2$ samples from a $1/(\beta n)$-big distribution, whereas $h'$ given $E'$ is a histogram of a set of at least  $s(1-\nu)/2$ samples that are drawn from a distribution which is $\epsilon$-far from any $1/(\beta n)$-big distribution. Moreover, 
$$d_{TV}\left(\HH_{E}, \HH'_{E'}\right) \leq  \frac{2\lambda}{\beta n \nu^2} +\exp\left( - \frac {\beta n d}{8}\right) + 2 \exp{\left(-\frac{s(1-\nu)}{6}\right)} + n \left(\frac{e s\lambda}{2 n L}\right)^L
\, .$$
Lastly, the largest probability mass among any elements in any probability distributions (from which the samples are drawn) is $\lambda/(n(1-\nu))$.
\end{restatable}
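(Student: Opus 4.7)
The plan is to verify the structural properties (items 1, 2, and 4) directly from the event definitions and the constraints of $\OP[1]$, and then to bound $d_{TV}(\HH_E, \HH'_{E'})$ by introducing the unconditional laws $\HH, \HH'$ and applying the triangle inequality
$$d_{TV}(\HH_E, \HH'_{E'}) \;\leq\; \Pr[\bar E] \;+\; d_{TV}(\HH, \HH') \;+\; \Pr[\bar E'],$$
using the standard fact that a distribution and its conditional on an event of probability $1-\alpha$ are within TV distance $\alpha$. The first and third terms will produce the Chebyshev/Chernoff contributions in the stated bound, and the middle term will produce the moment-matching contribution. For the structural claims, the constraint $V \in [(1+\nu)/\beta, \lambda/\beta]$ together with the first clause of $E$ (i.e.\ $\sum_i V_i/n \in [1-\nu,1+\nu]$) forces each normalized coordinate $V_i/\sum_j V_j$ to lie in $[1/(\beta n),\, \lambda/(\beta n(1-\nu))]$, which gives both the $1/(\beta n)$-bigness claim and the $\pmax$ bound (using $\beta\geq 1$). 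For the far-from-big direction, event $E'$ guarantees at least $r \geq \beta n d/2$ zero-coordinates of $p'$, so any $1/(\beta n)$-big distribution must transport at least $r\cdot 1/(\beta n) \geq d/2 \geq \epsilon$ units of mass onto these coordinates, yielding $d_{TV} \geq \epsilon$. The $\Omega(s)$ sample-count guarantees are part of $E$ and $E'$ by construction.

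To bound $\Pr[\bar E]$ and $\Pr[\bar E']$, I would union-bound over their constituent clauses. Chebyshev applied to $\sum_i V_i/n$, whose variance satisfies $\Var[V]/n \leq \E[V^2]/n \leq (\lambda/\beta)\E[V]/n = \lambda/(\beta n)$ (using $V \leq \lambda/\beta$ and $\E[V]=1$), gives $\Pr[|\sum_i V_i/n - 1| > \nu] \leq \lambda/(\beta n \nu^2)$; summing over both experiments produces the $2\lambda/(\beta n \nu^2)$ term. For the sample-count clauses, conditioning on $\sum_i V_i \geq n(1-\nu)$ yields $\sum_i N_i \sim \Poi(s\sum_i V_i/n)$ with mean at least $s(1-\nu)$, so a Poisson Chernoff bound gives $\Pr[\sum_i N_i < s(1-\nu)/2] \leq \exp(-\Theta(s(1-\nu)))$, accounting for the $2\exp(-s(1-\nu)/6)$ term. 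For $\bar E'$ only, the extra clause $r < \beta n d/2$ is handled via $r \sim \Bin(n, \Pr[V'=0]) = \Bin(n, \beta d)$ (since $\OPT(\OP[1]) = d = \Pr[V'=0]/\beta$), so $\E[r] = \beta n d$ and a multiplicative Chernoff bound produces the $\exp(-\beta n d/8)$ term.

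The main obstacle is the moment-matching bound $d_{TV}(\HH,\HH') \leq n(es\lambda/(2nL))^L$. Since the $V_i$'s are i.i.d., the unconditional joint law of $h$ is the $n$-fold product of the Poisson mixture $X := \int P_V(v)\,\Poi(sv/n)\,dv$, and similarly for $h'$ with $X'$; subadditivity of TV under products then gives $d_{TV}(\HH,\HH') \leq n\,d_{TV}(X,X')$. To bound $d_{TV}(X,X')$, I would expand the Poisson pmf as the double series $\Poi(sv/n)(k) = \sum_{j\geq 0}(-1)^j (sv/n)^{j+k}/(j!\,k!)$, interchange sum and integral, and observe that every term with $j+k \leq L$ cancels because the first $L$ moments of $P_V$ and $P_{V'}$ agree by the constraints of $\OP[1]$. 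The residual tail is controlled by using $V, V' \leq \lambda/\beta$ to bound $|\E[V^m]-\E[V'^m]|$, then summing via the combinatorial identity $\sum_{j+k=m} 1/(j!k!) = 2^m/m!$, Stirling's lower bound on $m!$, geometric tail summation, and finally $\beta \geq 1$. This yields a per-coordinate bound of order $(es\lambda/(nL))^L$, and tensorizing over $n$ coordinates gives the claimed $n(es\lambda/(2nL))^L$ term up to constants. Summing all four contributions completes the proof.
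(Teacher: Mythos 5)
Your proposal is correct and follows essentially the same route as the paper's proof: the same triangle-inequality decomposition through the unconditional laws, the same Chebyshev/Chernoff/Poisson-tail bounds for $\Pr[\bar E]$ and $\Pr[\bar E']$, and the same normalization arguments for the bigness, far-from-bigness, and $\pmax$ claims. The only difference is that for the term $d_{TV}(\HH,\HH') \leq n\,(es\lambda/(2nL))^L$ the paper simply tensorizes and invokes Lemma~6 of \cite{wu2015chebyshev}, whereas you sketch a re-derivation of that Poisson moment-matching bound from scratch; your sketch is the standard proof of that lemma and is fine, modulo being careful that the constants in the base of the exponent work out (a factor-of-two slip there is a $2^L$ loss, not a constant).
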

The proof of Lemma~\ref{lem:dtvBound} is given in Section \ref{sec:dtvBound_proof}.

Now, we assign the parameters, $\nu, \lambda$, and $s$, as follows:
\begin{align*}
\nu \coloneqq 1/2, \quad  
\lambda \coloneqq (1+\nu)\cdot \left(\frac{4(L-2)}{\ln\left( 1/(27\epsilon)\right)}-1\right)^2\, , \mbox{ and }
 s \coloneqq  \left\lfloor \frac {L \, n} {2 e \lambda}\right\rfloor \,
\end{align*}
Recall that we set $d$ to be the optimal value of $\OP[1]$, and Lemma~\ref{lem:optLP1} tells us its value. We show that in this setting $d$ is at least $2\epsilon$. Let $\rho$ be $\sqrt{\lambda/(1+\nu)}$. Then, we have:
\begin{align*}
d & \coloneqq  \left(\frac {1}{\sqrt{1 + \nu}}- \frac 1 {\sqrt \lambda}\right)^2 \left(\frac{\sqrt{\frac \lambda {1 + \nu}} - 1}{\sqrt{\frac \lambda {1 + \nu}} + 1}\right)^{L-2} 
\geq \frac{1}{1+\nu}\left(1 - \frac{1}{\sqrt{\frac{\lambda}{1 +\nu}}}\right)^2 \left(\frac{\sqrt{\frac{\lambda}{1 +\nu}} - 1}{\sqrt{\frac{\lambda}{1 +\nu}} + 1}\right)^{L-2} 
\\
& = \frac 2 3 \left(1 - \frac 1 \rho\right)^2 \left(1-\frac 2 {\rho+1}\right)^{L-2} > \frac 2 {27}  \left(\frac 1 {e^2}\right)^{\frac{2(L-2)}{\rho+1}} \geq \frac 2 {27} \exp 
\left(-\frac{4(L-2)}{\rho+1}\right)\geq 2 \epsilon
\,.
\end{align*}
as long as $\rho \geq 1.5$. 
It is not hard to see that, for sufficiently large $n$ and $\epsilon \geq c/n$ for sufficiently large constant $c$, then $\rho \geq 1.5$ holds, yielding $d \geq 2\epsilon$,
{
for every $\epsilon \leq c_0$, where $c_0 < 1/2$ is a constant.}

Let $\HH^+$ and $\HH^-$ be $\HH_{E}$ and $\HH'_{E'}$ respectively. By Lemma~\ref{lem:dtvBound}, the total variation distance between $\NN^+$ and $\NN^-$ is at most $0.01$, while $s$ and $\pmax$ behave according to the claimed respective asymptotic bounds. Hence, the proof is complete. 
\end{proof}

An important case of Theorem~\ref{thm:bigness_LB} is when $L = \Theta(\log n)$, where we establish a near-linear sample complexity lower bound of $\Omega(n/\log n)$ for the general problem of bigness testing as follows.


\bignessTest

\begin{proof}
By Theorem \ref{thm:bigness_LB}, there exist $\HH^+$ and $\HH^-$ with the aforementioned properties. Any $1/(\beta n)$-bigness tester has to distinguish between $\HH^+$ and $\HH^-$ with probability at least 2/3. On the other hand, the total variation distance between $\HH^+$ and $\HH^-$ is at most 0.01. Therefore, no algorithm can distinguish between them while receiving  $s/4 = \Theta(n\log^2(1/\epsilon)/\log n)$ samples with probability more than $(1 + 0.01)/2$. Therefore, testing $1/(\beta n)$-bigness requires $\Omega(n\log^2(1/\epsilon)/\log n)$ samples. 

Note that in the proof of Theorem \ref{thm:bigness_LB}, $\beta$ is determined by Lemma \ref{lem:optLP1}, and it is bounded by $1/\epsilon$. Thus, if $\epsilon$ is a constant then $\beta$ is also a constant. Thus, the required sample complexity becomes $\Omega(n/\log n)$.
\end{proof}

\subsection{Proof of Lemma \ref{lem:optLP1}}
\label{sec:optLP1}
\optLP*

\begin{proof}
To prove the lemma, we introduce an auxiliary linear program (\LP[2]) that is known to have an optimal value of the right hand side of the above equation. We prove the $\LP[2]$ has the same optimal objective value as $\OP[1]$ to prove the lemma. For two given parameters $\nu$ and $\lambda$, we define  the following LP over two random variables $X, X'$. 
\begin{equation}\label{eq:LP2}
\begin{array}{lll}
{\rm Definition~of~}\LP[2]: &  \sup & \E\left[\frac 1 X\right] - \E\left[\frac 1 {X'}\right]
\vspace{2mm} \\
& s.t. & \E[X^j] = \E[{X'}^j] \quad \mbox{for } j = 1, 2, \ldots, L - 1
\vspace{2mm} \\
& & X, X' \in \left[1+\nu, \lambda \right]
\end{array}
\end{equation}
To interpret this LP, assume the unknown variable is the $PDF$'s of the random variables $X$ and $X'$. Thus, for any number $x$ in $[1+ \nu, \lambda]$, we want to find $P_X(x)$ and $P_{X'}(x)$. Note that this optimization problem is linear since all the expectations above are a linear function of $P_X$ and $P_{X'}$. Moreover, there is an implicit constraint here that the integral of $P_X$ and $P_{X'}$ should be one since they are probability distributions.

Observe that there exists a trivial solution where $X$ and $X'$ are two identically-distributed random variables, so $\LP[2]$ is feasible and its optimal objective value is at least zero. Let $\XX^*$ and $\XX'^*$ be a pair of random variables forming an optimal solution for $\LP[2]$, and let $\beta^* = 1/\E[1/\XX^*]$. 
Since all $X$ and $X'$ are in $[1 +\nu, \lambda]$, then $\beta^*$ is also in $[1+\nu, \lambda]$. On the other hand, since $\XX'^*$ is positive and bounded, then $\E[1/\XX'^*] > 0$ and thus $\E[1/\XX^*] > \OPT(\LP[2])$; hence $\beta^*$ is at most $1/\OPT(\LP[2])$.

Now, we argue that $\LP[2]$ and $\OP[1]$ have the same optimal value. We introduce two new random variables $\VV^*$ and $\VV'^*$ with the following PDFs, and later we show they form an optimal solution for $\OP[1]$. 
\begin{align*}
P_{\VV^*} (v) &\coloneqq \frac {{\beta^*}} v \, P_{\XX^*}({\beta^*} v) + \left(1 - {\beta^*}\, \E \left[\frac 1 {\XX^*}\right]\right) \delta_0(v), \; \mbox{ and } \;\\
P_{\VV'^*} (v) &\coloneqq \frac {\beta^*} v \, P_{{\XX'}^*}({\beta^*} v) + \left(1 - {\beta^*} \, \E \left[\frac 1 {{\XX'}^*}\right]\right) \delta_0(v)
\end{align*}

In the above equations, with a slight abuse of notation we say that $1/v$ is zero for $v=0$; that is, the probability \emph{mass} for $v=0$ is given by the respective second terms. Since ${\beta^*}$ is defined to be $1/\E[1/\XX^*]$, the second term in $P_{\VV^*}$ is zero for all $v$ in particular for $v=0$. We define our notation in this fashion in order to make the calculations for $\VV^*$ and $\VV'^*$ analogous, so we may write our proof compactly.

Now, we show that the proposed variables $\VV^*$, $\VV'^*$ and ${\beta^*}$ form a feasible solution for $\OP[1]$. First, we show that the domain of $\VV^*$ and $\VV'^*$ are as stated in the definition of $\OP[1]$ in Equation~\ref{eq:op1}. Then, we show $P_{\VV^*}$ and $P_{\VV'^*}$ are probability distribution, and we prove the constraints of $\OP[1]$ hold as well.

First, consider the domain of the random variables. Clearly the domain does not include the numbers where the PDF is zero, so we prove that the $P_{\VV^*}$ and $P_{{\VV'}^*}$ are (potentially) non-zero only when when $\VV^*$ and $\VV'^*$ are in the rage specified by the domain constraints of the $\OP[1]$. 
Recall that the second term in $P_{\VV^*}$ is always zero. Thus, $P_{\VV^*}$ could be potentially non-zero only if $x$ equal to $\beta v$ has a non-zero error probability according to $P_{\XX^*}$. Therefore, $\VV^*$ is always in $[(1+\nu)/{\beta^*}, \lambda/{\beta^*}]$. For $\VV'^*$, in addition to the value $v\in[(1+\nu)/{\beta^*}, \lambda/{\beta^*}]$, $v$ could be zero as well since the second term in the definition of $P_{\VV'^*}$ may be non-zero at $v=0$. Thus, $\VV'^*$ is always in $\{0\}\cup [(1+\nu)/{\beta^*}, \lambda/{\beta^*}]$.

In addition, $P_{\VV^*}$ (and similarly $P_{{\VV'}^*}$) is a probability distribution since the integral of the PDF is one:
\begin{align*}
\int_{-\infty}^{\infty} P_{\VV^*}(v) dv & = \int_{(1+\nu)/{\beta^*}}^{\lambda/{\beta^*}}\frac {\beta^*} v \,  P_{\XX^*}({\beta^*} v) dv + \left(1 - {\beta^*}\, \E \left[\frac 1 {\XX^*}\right]\right) \int_{-\infty}^{\infty}\delta_0(v) dv
\\ & = 
\int_{1+\nu}^{\lambda}  \frac {{\beta^*}^2} x \,  P_{\XX^*}(x) \cdot \frac 1 {\beta^*} \, dx + \left(1 - {\beta^*}\, \E \left[\frac 1 {\XX^*}\right]\right)
\\ & = 
{\beta^*}\, \E \left[\frac 1 {\XX^*} \right]+ \left(1 - {\beta^*}\, \E \left[\frac 1 {\XX^*}\right]\right) = 1,
\end{align*}
where the second equality is derived by substituting $v$ with $x/{\beta^*}$.

Now, we focus on the constraints of $\OP[1]$. The first constraint is $\E[\VV^*] = \E[\VV'^*] = 1$. Below we show that the expected value of $\VV^*$ is $1$. 
\begin{align*}
\E[\VV^*] & = 
\int_{-\infty}^{\infty} v \, P_{\VV^*}(v) dv  = \int_{(1+\nu)/{\beta^*}}^{\lambda/{\beta^*}}{\beta^*} \,  P_{\XX^*}({\beta^*} v) dv + \left(1 - {\beta^*}\, \E \left[\frac 1 {\XX^*}\right]\right) \int_{-\infty}^{\infty}v \, \delta_0(v) dv
\\ & = 
\int_{1+\nu}^{\lambda}  {\beta^*} \,  P_{\XX^*}(x) \cdot \frac 1 {\beta^*} \, dx = 1
\end{align*}
One can similarly show that $\E[{\VV'}^*]=1$, and the constraint holds. 

The second constraint is that the first $L$ moments of $\VV^*$ and $\VV'^*$ are matched: $\E[{\VV^*}^j] = \E[{\VV'^*}^j]$ for $j$ in $[L]$. The previous constraint implies that the first moments, $\E[{\VV}^*]$ and $\E[{\VV'}^*]$, are equal, so here we focus on the second and higher moments. Fix $j$ in $\{2, \ldots, L\}$. For the $j$-th moment of $\VV^*$, we have:
\begin{align*}
\E[{\VV^*}^j] & = 
\int_{-\infty}^{\infty} v^j \, P_{\VV^*}(v) dv  = \int_{(1+\nu)/{\beta^*}}^{\lambda/{\beta^*}}{\beta^*}\, v^{j-1} \,  P_{\XX^*}({\beta^*} v) dv + \left(1 - {\beta^*}\, \E \left[\frac 1 {\XX^*}\right]\right) \int_{-\infty}^{\infty}v^j \, \delta_0(v) dv
\\ & = 
\int_{1+\nu}^{\lambda} \frac{x^{j-1}}{{\beta^*}^{j-2}} \,  P_{\XX^*}(x) \cdot \frac 1 {\beta^*} \, dx = \frac{1}{{\beta^*}^{j-1}} \, \E[{\XX^*}^{j-1}].
\end{align*}
We can similarly show the same condition for $\E[{\VV'^*}^j]$. Since $\XX^*$ and $\XX'^*$ satisfies the moment matching constraints of $\LP[2]$, we derive the moment matching constraints of $\OP[1]$ as follows:
$$
\E[{\VV'^*}^j] = \frac{1}{{\beta^*}^{j-1}} \, \E[{\XX^*}^{j-1}] = \frac{1}{{\beta^*}^{j-1}} \,\E[{\XX'^*}^{j-1}] = \E[{\VV'^*}^j] \, .
$$
Therefore, $\VV^*$, $\VV'^*$ and ${\beta^*}$ form a feasible solution for $\OP[1]$. Thus, the objective function according to $\VV^*$, $\VV'^*$ is at most the optimal value of $\OP[1]$:
$$
\OPT(\OP[1]) \geq \frac 1 {\beta^*} \, \Pr[\VV'^* = 0]
$$
On the other hand, the objective value of $\OP[1]$ and $\LP[2]$ are the same on the two solutions we discussed:
$$
\frac 1 {\beta^*} \, \Pr[\VV'^* = 0] = \frac 1 {\beta^*} \left(1 - {\beta^*} \, \E \left[\frac 1 {\XX'^*}\right]\right) = \E \left[\frac 1 {\XX^*}\right] - \E \left[\frac 1 {\XX'^*}\right] = \OPT(\LP[2])
$$
where the last equality is true, since we chose $X$ and $X'$ to be the optimal solution of $\LP[2]$ at the beginning.

\begin{equation}\label{eq:vv*=opt2}
\OPT(\OP[1]) \geq \frac 1 {\beta^*} \, \Pr[\VV'^* = 0] = \OPT(\LP[2])
\, .
\end{equation}


We continue the proof by showing that the above inequality is true in the other direction, i.e, $\OPT(\OP[1])$ is at most $\OPT(\LP[2])$. Let $P_\VV$, $P_{\VV'}$ and $\beta$ form a feasible solution for $\OP[1]$. We define random variables $\XX$ and $\XX'$ with the following PDFs, and show that they form a feasible solution for $\LP[2]$ in Equation~\ref{eq:LP2} with the same objective value as $\VV$ and $\VV'$ in the $\OP[1]$:
$$
P_\XX(x) \coloneqq \frac x {\beta^2} P_{\VV} \left(\frac{x}{\beta}\right), \mbox{ and }\ \  \
P_{\XX'}(x) \coloneqq \frac x {\beta^2} P_{\VV'}  \left(\frac{x}{\beta}\right).
$$
First, we show that the domain of $\XX$ and $\XX'$ matches with the domain constraint in $\LP[2]$. Similar to the previous part, we prove that the PDF's are zero outside the interval specified by the domain constraint $[1+\nu, \lambda]$.  Observe that $P_\XX(x)$ is non-zero if and only if $x$ and $P_\VV(x/\beta)$ are both non-zero, so $x/\beta$ has to be in $[(1+\nu)/\beta, \lambda/\beta]$. Thus, the domain of the random variable $\XX$ (and similarly $\XX'$) is $[1+\nu, \lambda]$. 

Moreover, note that $P_\XX$ (and similarly $P_{\XX'}$) is a probability distribution:
$$
\int_{-\infty}^{+\infty} P_{\XX}(x) dx = \int_{1+\nu}^{\lambda} \frac x {\beta^2} \cdot P_{\VV}\left(\frac x \beta\right) dx = \int_{(1+\nu)/\beta}^{\lambda/\beta} \frac v \beta \cdot P_{\VV}(v) \cdot \beta dv = \E[\VV] = 1 \, 
$$
where the equation is derived by replacing $x/\beta$ with a new variable $v$. Now, we show that the constraints of $\LP[2]$ are satisfied for $\XX$ and $\XX'$. Fix $j \in [L-1]$. We show the $j$-th moment of $\XX$ and $\XX'$ are equal:

\begin{align*}
\E[\XX^j] = \int_{-\infty}^{+\infty} x^j P_{\XX}(x) dx = \int_{1+\nu}^{\lambda} \frac {x^{j+1}}{\beta^2}  \cdot P_{\VV}\left(\frac x \beta\right) dx = \int_{(1+\nu)/\beta}^{\lambda/\beta} \frac {\beta^j \, v^{j+1}} {\beta} \cdot P_{\VV}(v) \cdot \beta dv = \beta^j \, \E[\VV^{j+1}] \, .
\end{align*}
Similarly, one can show $\E[{\XX'}^j]$ is equal to $\beta^j \E[{\VV'}^{j+1}]$. Since the pair $\VV$ and $\VV'$ satisfies the moment matching constraints of $\OP[1]$, then $\E[\VV^{j+1}]$ is equal to $\E[{\VV'}^{j+1}]$. Therefore, $\E[\XX^j]$ is equal to $\E[{\XX'}^j]$.

Now, we focus on the objective functions of the $\OP[1]$ and $\LP[2]$. We have:
\begin{align*}
\E\left[\frac 1 \XX\right] - \E\left[\frac 1 {\XX'} \right] & = 
\int_{1+\nu}^{\lambda} \frac 1 x \cdot P_{\XX}(x) dx - \int_{1+\nu}^{\lambda} \frac 1 {x'} \cdot P_{\XX'}(x') dx'
\\ & = 
\int_{1+\nu}^{\lambda} \frac 1 {\beta^2} \cdot P_{\VV}\left(\frac x \beta\right) dx 
- \int_{1+\nu}^{\lambda} \frac 1 {\beta^2} \cdot P_{\VV'}\left(\frac {x'} \beta\right) dx' 
\\ & =
\int_{(1+\nu)/\beta}^{\lambda/\beta} \frac {1} {\beta^2} \cdot P_{\VV}(v) \cdot \beta dv 
- \int_{(1+\nu)/\beta}^{\lambda/\beta} \frac {1} {\beta^2} \cdot P_{\VV'}(v') \cdot \beta dv' 
\\ & = \frac {1} {\beta} \cdot\left( \int_{(1+\nu)/\beta}^{\lambda/\beta}  P_{\VV}(v) \cdot dv 
- \int_{(1+\nu)/\beta}^{\lambda/\beta}  P_{\VV'}(v') \cdot dv' \right)
\\ & = \frac 1 \beta \cdot \left(1 - \left(1 - \Pr[\VV' = 0]\right)\right) = \frac 1 \beta \Pr[\VV' = 0] \,.
\end{align*}
Now that for any feasible solution of $\OP[1]$, there exists a feasible solution for $\LP[2]$ with the same objective value, one can conclude that the optimal value of $\OP[1]$, $\OPT(\OP[1])$, is at most $\OPT(\LP[2])$. Thus by Equation \ref{eq:vv*=opt2}, we have:
$$\OPT(\OP[1]) \geq \frac 1 {\beta^*} \, \Pr[\VV'^* = 0] = \OPT(\LP[2]) \geq \OPT(\OP[1])$$
which implies that $\VV^*$, $\VV'^*$ and $\beta^*$ also form an \emph{optimal} solution for $\OP[1]$, and hence $\OPT(\OP[1])$ and $\OPT(\LP[2])$ are equal. This also implies that $\beta^*$ is at most $1/\OPT(\OP[1])$.

In Appendix~E of \cite{WuY16Entropy}, Wu and Yang proved that an optimal solution of $\LP[2]$ can be obtained through the best polynomial approximation of the function $1/x$. More formally, they showed that there exists a solution for $\LP[2]$ with the following optimal value: 
$$
\OPT(\LP[2]) = 2 \inf\limits_{f \in \PP_{L-1}} \sup\limits_{x \in [1+\nu, \lambda]} \left|\frac 1 x - f(x)\right|
$$
where $\PP_{L-1}$ is the set of all degree $L-1$ polynomials. The optimal polynomial approximation error have been studied 
in \cite{KrausVZ12} and in Sec. 2.11.1 of \cite{timan63}. They computed the maximum error of the best degree $L-1$ polynomial approximation. More precisely, we have:
$$
\OPT(\OP[1]) = \OPT(\LP[2]) = 2 \inf\limits_{f \in \PP_{L-1}} \sup\limits_{x \in [1+\nu, \lambda]} \left|\frac 1 x - f(x)\right| = 
\left(\frac {1}{\sqrt{1 + \nu}}- \frac 1 {\sqrt \lambda}\right)^2 \left(\frac{\sqrt{\frac \lambda {1 + \nu}} - 1}{\sqrt{\frac \lambda {1 + \nu}} + 1}\right)^{L-2} \, .
$$
Hence, the proof is complete.
\end{proof}

\subsection {Proof of Lemma \ref{lem:dtvBound}}
Before stating the lemma, we review the definitions we used so far. Recall that $p$ and $p'$ are generated by drawing $n$ i.i.d samples, $V_i$'s and $V'_i$'s, from $P_V$ and $P_{V'}$ respectively:
\begin{align*}
p = \frac 1 n (V_1, V_2, \ldots, V_n) \quad \quad \quad p' = \frac 1 n (V'_1, V'_2, \ldots, V'_n)
\end{align*}
and $E$ and $E'$ where the desired events:
$$E = \left\{
\left|\sum\limits_{i=1}^n \frac{V_i}n - 1\right| \leq \nu, \mbox{ and }
\sum\limits_{i=1}^n N_i > s(1-\nu)/2
\right\} \,.
$$
and
$$
E'
= \left\{
\left|\sum\limits_{i=1}^n \frac{V'_i}n - 1\right| \leq \nu\, , 
\, r \geq \frac {\beta n d}{2}, \mbox{ and }
\sum\limits_{i=1}^n N'_i > s(1-\nu)/2
\right\}
$$
where $r$ was the number of elements $i$ for which $V'_i$ is zero. 
We generate histograms $h$ and $h'$ according to $p$ and $p'$ respectively. let $\HH$ denote the distribution over histograms $h$ generated by the process when the prior is $P_V$, and let $\HH_E$ be the distribution over histograms $h$ conditioning on $E$. We define $\HH'$ and $\HH_{E'}$ similarly. In the following lemma, we prove ``good properties'' for $p$ and $p'$ after normalization and also bound the total variation distance between $\HH_E$ and $\HH'_{E'}$.

\label{sec:dtvBound_proof}
\dtvBound*

\begin{proof}
First, we show given event $E$, the normalization of $p$ is $1/(\beta n)$-big distribution. From $\OP[1]$, we know that the $V_i$'s are in $[(1+\nu)/\beta, \lambda/\beta]$, and the $V'_i$'s are in $ \{0\} \cup [(1+\nu)/\beta, \lambda/\beta]$. Observe that $p(i)$ after normalization is at least the following: 
$$
\frac {p(i)}{\sum_j p(j)} \geq \frac{V_i/n}{\sum_j V_j/n} \geq \frac{(1+\nu)/(\beta n)}{\sum_j V_j/n} \geq \frac{1}{\beta n}
$$
where the last inequality is due to the fact that $\sum_j V_j/n$ is at most $1+\nu$. Thus, the normalization of $p$ is $1/(\beta n)$-big. On the other hand, we can achieve the same lower bound for the normalized value of $p'(i)$ when $V'_i$ is not zero, so the normalization of $p'$ places either probability mass zero, or at least $1/(\beta n)$, on each element.{
Similarly, the maximum probability mass among the normalization of $p$'s and $p'$'s is at most
$$\frac {p(i)}{\sum_j p(j)} \leq \frac{V_i/n}{\sum_j V_j/n} \leq \frac {\lambda/(\beta n)}{1-\nu} \leq \frac{\lambda}{n(1-\nu)}$$
because $\beta \geq 1$, yielding the desired bound on the maximum probability mass.}

Next, we show that
given $E'$, the normalization $p'$ is $\epsilon$-far from any big distribution. Note that if $V'_i$ is zero, then probability $p'(i)$ even after normalization remains zero. So, there are exactly $r$ elements that have probability mass zero and the rest (based on above argument) each have probability mass at least $1/(\beta n)$. Thus, the total variation distance to $1/(\beta n)$-bigness is at least $r /(\beta n)$, and given $E'$ it is at least 
$d/2\geq \epsilon$.

Now, we show the distance between $\HH_E$ and $\HH'_{E'}$ is bounded. By the triangle inequality we have: 
\begin{align*}
d_{TV}\left(\HH_E, \HH'_{E'}\right) & \leq d_{TV}\left(\HH_E, \HH\right) + d_{TV}\left(\HH, \HH'\right) + d_{TV}\left(\HH', \HH'_{E'}\right) 
\\ & \leq \Pr[E^c] + d_{TV}\left(\HH, \HH'\right) + \Pr[E'^c] 
\, ,
\end{align*}
where the superscript $c$ for the events, $E$ and $E'$ indicates the complimentary event. Now, we start with bounding  the probability of the complementary events of $E$ and $E'$ from above to show that they happen with small probability. Since the $V_i$'s (and similarly the $V'_i$'s) are independently drawn from $P_V$ with expected value $1$, and they are in the range $[0,\lambda/\beta]$, then by the Chebyshev inequality, we have:
$$\Pr\left[\left|\sum\limits_{i=1}^n \frac{V_i}{n} - 1\right| > \nu \right] \leq \frac{\sum_i \Var[V_i]}{n^2\nu^2}\leq \frac{\E[V^2]}{n \nu^2} \leq  \frac{\lambda\E[V]}{\beta n \nu^2} \leq  \frac{\lambda}{\beta n \nu^2}.$$

Recall that the $d$ was the optimal value of $\OP[1]$. Thus, $\Pr[V'_i = 0]$ is $\beta d$. Moreover, $r$, the number of the $V'_i$'s that are zero, is a Binomial random variable with $\E[r] = n\cdot\Pr[V'_i = 0]$ which is $\beta n d$. Thus, by the Chernoff bound, we have:
\begin{align*}
\Pr\left[r < \frac{\beta n d}{2} \right] & = \Pr \left[ \frac {r}{n} < \beta d \left(1 - \frac 1 2\right)\right] \leq \exp\left( - \frac {\beta n d}{8}\right).
\end{align*}

Finally, we show that the total number of samples is high with high probability. Assume we already have $\sum_{i=1}^n V_i/n$ is at least $1-\nu$. Then the total number of samples $\sum_{i=1}^n h_i$ is a Poisson random variable with mean $t \coloneqq s\sum_{i=1}^n V_i \geq s(1-\nu)$. By the tail bound for Poisson distributions proved in \cite{Cannone17}\footnote{If $X$ is a Poisson random variable with mean $\lambda$, then for any $t>0$, we have $\Pr\left[X \leq  \lambda - t\right] \leq  \exp\left({-\frac{t^2}{\lambda+t}}\right)$}, 
we have 
\begin{align*}
\Pr\left[\sum\limits_{i=1}^n h_i  \leq \frac{s(1-\nu)}{2}\right] & \leq 
\Pr\left[\sum\limits_{i=1}^n h_i \leq t - t/2 \right] 
 \leq \exp\left(-\frac{(t/2)^2}{t + t/2}\right) \\ & \leq 
\exp{\left(-\frac{t}{6}\right)} \leq \exp{\left(-\frac{s(1-\nu)}{6}\right)}
\end{align*}
One can achieve a similar result for $\sum_{i=1}^n h'_i$.

Now, we continue bounding the distance between $\HH_E$ and $\HH'_{E'}$. $\HH^{(i)}$ (and similarly ${\HH'}^{(i)}$) indicates the distribution over the $i$-th coordinate of the histogram, $h_i$. By the previous inequality, we have:
\begin{align*}
d_{TV}\left(\HH_E, \HH'_{E'}\right)  & \leq \Pr[E^c] + d_{TV}\left(\HH, \HH'\right) + \Pr[E'^c] 
\\ & \leq \Pr[E^c] + \Pr[E'^c] + n \cdot d_{TV}\left(\HH^{(i)} ,{\HH'}^{(i)}\right) 
\\ & \leq  \frac{2\lambda}{\beta n \nu^2} +\exp\left( - \frac {\beta n \epsilon}{8}\right) + 2 \exp{\left(\frac{s(1-\nu)}{6}\right)} + n \left(\frac{e s\lambda}{2 n L}\right)^L
\, ,
\end{align*}
where the last inequality follows from the fact that the first $L$ moments of $P_V$ and $P_{V'}$ are matched, by Lemma 6 in \cite{wu2015chebyshev}, 
we have:
$$d_{TV}\left(\NN_{V}^{(i)} , \NN_{V'}^{(i)} \right) \leq \left(\frac{e s\lambda}{2 n L}\right)^L$$ 
Hence, the proof is complete. 
\end{proof}
\section{From Bigness to Monotonicity} \label{sec:big2mon}
In this section, we show how to turn our lower bound results for bigness testing problem in the previous section, into lower bounds for monotonicity testing in some fundamental posets, namely the matching poset and the Boolean hypercube poset. 
See Section \ref{subsection:monlb} for the proof overviews.

\subsection{Monotonicity testing on a matching poset}

\begin{theorem}\label{thm:big2mon}
Consider the pair of distributions $\NN^+$, $\NN^-$ for the bigness problem as specified in Theorem~\ref{thm:bigness_LB} with bigness threshold $T = O(1/n)$, number of samples $s$, and maximum probability $\pmax$. There exists a distribution on a matching of size $n$ with maximum probability $\pmax'=\Theta(\pmax)$ such that testing, with success probability $2/3$, whether a matching randomly drawn from such a distribution is monotone or $\epsilon'=\Theta(\epsilon)$-far from any monotone distribution, requires $s'=\Omega(s)$ samples.
\end{theorem}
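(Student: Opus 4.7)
The plan is to give an explicit sample-preserving reduction from the bigness hard instance in Theorem~\ref{thm:bigness_LB} to the monotonicity testing problem on a matching poset.

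Construction. Label the $n$ edges of the matching as $(u_i, v_i)$ with $u_i \preceq v_i$, and given a bigness distribution $q$ on $[n]$ with threshold $T = 1/(\beta n) = O(1/n)$, form an unnormalized measure on the $2n$ vertices by $p(v_i) := q(i)$ and $p(u_i) := T$. The total mass is $Z := nT + 1 = 1 + 1/\beta = \Theta(1)$, using the fact from Theorem~\ref{thm:bigness_LB} that $\beta = \Theta(1)$ whenever $\epsilon$ is bounded below by a constant. Normalize by $Z$ to obtain $\tilde p$, and define $\NN_{\mathrm{mon}}^{\pm}$ as the push-forward of $\NN^{\pm}$ under the map $q \mapsto \tilde p$.

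Correctness. By construction, $\tilde p$ is monotone iff $\tilde p(v_i) \geq \tilde p(u_i)$ for every $i$, iff $q(i) \geq T$ for every $i$, iff $q$ is $T$-big. Thus distributions drawn from $\NN_{\mathrm{mon}}^+$ are monotone. For $\NN_{\mathrm{mon}}^-$, whose underlying $q$ is $\epsilon$-far from $T$-big (yielding $\sum_{i:\, q(i) < T}(T - q(i)) \geq \epsilon$), I apply Theorem~\ref{thm:dist_to_mon_matching}: since $G$ is itself a matching, $TC(G) = G$ and the set of violating edges $\{(u_i, v_i): q(i) < T\}$ is already a matching in $TC(G)$ with total weight $\sum_{i:\, q(i) < T}(T - q(i))/Z \geq \epsilon/Z$. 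Therefore $d_{TV}(\tilde p, \Mon(G)) \geq \epsilon/(2Z) = \Theta(\epsilon)$. Also $\pmax' = (\max_i \max(T, q(i)))/Z = \pmax/Z = \Theta(\pmax)$.

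Sample-complexity transfer. To keep the reduction sample-efficient, observe that a single sample from $\tilde p$ can be produced from at most one sample of $q$ plus private coins: with probability $nT/Z$ return a uniformly random $u_i$ (no $q$-query), and with probability $1/Z$ draw $i \sim q$ and return $v_i$. Under Poissonization, the $v$-side histogram is then a rescaled copy of the histogram of Theorem~\ref{thm:bigness_LB} and the $u$-side histogram is an independent free Poisson profile, so $s' = \Omega(s)$ matching samples are produced from $s$ bigness samples. Since the reduction is a randomized function of the bigness sample, the data processing inequality yields $d_{TV}(\NN_{\mathrm{mon}}^+, \NN_{\mathrm{mon}}^-) \leq d_{TV}(\NN^+, \NN^-) \leq 0.01$, and any $s'$-sample monotonicity tester on the matching would solve bigness in $O(s')$ samples, giving $s' = \Omega(s)$.

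The main obstacle is ensuring the normalization factor $Z$ does not degrade $\epsilon$ or $\pmax$ by more than a constant, and this is precisely why the hypothesis $T = O(1/n)$ is imposed: it forces $Z = nT + 1 = \Theta(1)$, so all the quantities of interest survive unchanged up to constants. The other technical care point is correctly invoking Theorem~\ref{thm:dist_to_mon_matching} on the matching itself; this is especially clean here because the transitive closure of a matching is the matching itself, so the maximum violating matching in $TC(G)$ reduces to simply summing violations edge by edge.
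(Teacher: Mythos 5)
Your proposal is correct and follows essentially the same route as the paper: the same construction (bigness distribution on the top endpoints, the threshold $T$ on the bottom endpoints, normalization by $1+nT=\Theta(1)$), the same use of $T=O(1/n)$ to keep $\epsilon$ and $\pmax$ intact up to constants, and the same Poissonized sample simulation yielding the reduction. The only deviation is that you obtain the far-case distance bound by invoking Theorem~\ref{thm:dist_to_mon_matching} (legitimate, since that theorem is proved independently), whereas the paper carries out the equivalent factor-of-two calculation against the closest monotone distribution directly.
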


{\color{blue}
}

\begin{proof}
Let $U = \{u_1, u_2, \ldots, u_n\}$ and $V = \{v_1, v_2, \ldots, v_n\}$ form the vertex set of a directed matching $M_n$ of size $n$ where the edges are $(v_i, u_i)$'s for $i=1, 2, \ldots, n$. Consider the distribution over the matching poset $G = (U \cup V, \{(v_i, u_i)| i \in [n]\})$; more specifically, the distribution is monotone if and only if the probabilities $p(u_i) \geq p(v_i)$ for all $i$. We apply the Poissonization technique, then prove our lower bound by contradiction: assume there exist an algorithm $\mathcal{A}$ which tests monotonicity of distributions over the matching of size $n$ using $\Poi(s')$ samples where $s' = o(s)$ and successfully distinguishes whether the distribution is monotone or $\epsilon'\coloneqq\epsilon/(2(1+nT))$-far from monotone with probability at least 2/3. To reach the desired contradiction, we turn these samples into $s'(1+nT)$ samples for the $T$-bigness testing problem, and show that one can test $T$-bigness using $\mathcal{A}$ as a black-box tester. Note that $T = O(1/n)$, so the factor $1+nT$ is $\Theta(1)$ in this proof.

Assume we have a distribution, $p$, over $[n]$ elements for which we wish to test the bigness property. We construct a distribution $q_p$ over a matching over $U \cup V$ based on $p$ as follows:
\[q_p(u_i) = \frac{p(i)}{1+nT}, \quad q_p(v_i) = \frac{T}{1+nT}.\]
Clearly the maximum probability of $q_p$ is at most $\pmax'\coloneqq\pmax/(1+nT)$. Next we show the changes in distances to monotonicity. 
Next we show the difference in distance to monotonicity from the case that $p$ is $T$-big and the case that $p$ is $\epsilon$-far from $T$-big. If $p$ is a $T$-big distribution, then $q_P(u_i) \geq T/(1+nT) \geq q_p(v_i)$ and thus $q_p$ is monotone. 

Next, if $p$ is $\epsilon$-far from any $T$-big distribution, then we show that $q_p$ is $\epsilon/(2(1+nT))$-far from any monotone distribution. Let $S$ be the set of elements for which $p(i) < T$. Clearly, to make $p$ a $T$-big distribution, one has to increase all the $p(i)$ to $T$ for $i \in S$ and there is no need to increase the probability of any other elements. Therefore, the total variation distance to of $p$ to $\Big(n)$ is exactly $\sum_{i\in S} T-p(i)$ assuming $T\leq 1/n$. Let $q'$ be the closest monotone distribution  to $q_p$, and observe that $q'(u_i) \geq q'(v_i)$. We compute:
\begin{align*}
d_{TV} (q_p, \Mon(M_n)) & = d_{TV}(q_p, q') = \frac 1 2 \sum\limits_{i=1}^n |q_p(u_i) - q'(u_i)| + |q_p(v_i) -  q'(v_i)|
\\
&=  \frac 1 2 \sum\limits_{i=1}^n  \left|\frac{p(i)}{1+nT} - q'(u_i)\right| + \left|\frac{T}{1+nT} - q'(v_i)\right|  
\\
&\geq \frac 1 2  \sum\limits_{i = 1}^n \max\left(0, q'(u_i) - \frac {p(i)}{1+nT} + \frac {T}{1+nT} - q'(v_i)\right) 
\\&\geq \frac 1 2  \sum\limits_{i = 1}^n \max\left(0, \frac {T-p(i)}{1+nT}\right) 
 \geq \frac{1}{2(1+nT)} \sum\limits_{i \in S} T -p(i)
\\ & = \frac{d_{TV}(p,\Big(n))}{2(1+nT)}= \frac{\epsilon}{2(1+nT)}\, .
\end{align*}


Finally we show that the assumed algorithm $\mathcal{A}$ may be used to test the $T$-bigness property of $p$. Suppose we are given access to $\Poi(s')$ independent samples from the distribution $p$ for which we want to test $T$-bigness property. We construct a distribution $q_p$ as described above: to obtain $\Poi(s'(1+nT))$ samples from $q_p$, for each $i \in [n]$, we create $\Poi(s'\cdot p(i))$ and $\Poi(s'\cdot T)$ samples of $u_i$ and $v_i$ respectively. The $\Poi(s'\cdot p(i))$ samples for each $i$ of the $u_i$'s may be obtained by substituting each element $i$ from $p$ with $u_i$ in $\Poi(s')$ samples from $p$, whereas $\Poi(s'\cdot T)$ samples for $v_i$'s may be generated directly by drawing $v_i$'s uniformly at random. Thus, using $\Poi(s')$ samples from $p$, one can construct $\Poi(\Omega(s'))$ samples from $q_p$ and use $\mathcal{A}$ for testing the monotonicity of the matching poset $q_p$, which corresponds to testing the $T$-bigness of $p$, yielding a contradiction by the fact that bigness testing requires $\Omega(s)$ samples by Theorem~\ref{thm:bigness_LB}.
\end{proof}

This result, applied with Theorem~\ref{thm:bigness_LB} using $L = \Theta(\log n)$ (where $s = \Omega\left(n \ln^2 (1/\epsilon)/\log n\right)$, $\pmax = O((\log^2 n)/ (n\ln^2(1/\epsilon)))$ and $T = 1/(\beta n) \in [\epsilon/n, 1/n]$), immediately yields the following lower bound for the testing monotonicity in a matching poset.

\begin{corollary} \label{cor:matching-lb}
For sufficiently small parameter $\epsilon=\Omega(1/n)$, any algorithm that can distinguish whether a distribution over a matching poset on $2n$ vertices is monotone, or $\epsilon$-far from any monotone distribution, with probability $2/3$ requires $\Omega((n \ln^2 (1/\epsilon))/\log n)$ samples. Moreover, the maximum probability mass of the distribution in the lower bound construction can be bounded above by $O((\log^2 n)/ (n\ln^2(1/\epsilon)))$.
\end{corollary}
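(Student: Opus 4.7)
The plan is to directly combine the two main results developed so far: Theorem~\ref{thm:bigness_LB} (the bigness lower bound) and Theorem~\ref{thm:big2mon} (the reduction from bigness testing to monotonicity testing on a matching). I would instantiate Theorem~\ref{thm:bigness_LB} with the choice $L = \Theta(\log n)$. Substituting into the bounds from that theorem, the sample complexity becomes $s = \Omega(n^{1-1/L}\log^2(1/\epsilon)/L) = \Omega(n \log^2(1/\epsilon)/\log n)$, using $n^{1-1/\Theta(\log n)} = \Theta(n)$, and the maximum probability mass becomes $\pmax = O(L^2/(n \log^2(1/\epsilon))) = O(\log^2 n / (n \log^2(1/\epsilon)))$. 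The bigness threshold is $T = 1/(\beta n)$; since Lemma~\ref{lem:optLP1} bounds $\beta \leq \min(\lambda, 1/\OPT(\OP[1]))$, and the parameter choices inside the proof of Theorem~\ref{thm:bigness_LB} ensure $\OPT(\OP[1]) \geq 2\epsilon$ together with $\beta \geq 1$, we obtain $T \in [\epsilon/n, 1/n]$. In particular $T = O(1/n)$, matching the hypothesis of Theorem~\ref{thm:big2mon}.

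Next, I would feed the resulting pair of distributions into Theorem~\ref{thm:big2mon}. That theorem converts them into a pair of distributions on the matching $M_n$ of size $n$ (i.e., $2n$ vertices) while preserving the relevant parameters up to constant factors: sample complexity $s' = \Omega(s)$, distance $\epsilon' = \epsilon/(2(1+nT)) = \Theta(\epsilon)$ (since $nT \leq 1$), and maximum probability mass $\pmax' = \pmax/(1+nT) = \Theta(\pmax)$. Renaming $\epsilon'$ back to $\epsilon$ in the conclusion (which changes $\log^2(1/\epsilon)$ only by a constant factor whenever $\epsilon$ is bounded away from $1$), we recover the stated $\Omega((n \ln^2(1/\epsilon))/\log n)$ lower bound along with the advertised $O((\log^2 n)/(n\ln^2(1/\epsilon)))$ bound on $\pmax'$.

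The proof is essentially assembly rather than creative work, so the main ``obstacle'' is just careful bookkeeping: one has to verify that $L = \Theta(\log n)$ simultaneously satisfies the hypothesis $L = O(\log n)$ of Theorem~\ref{thm:bigness_LB}, yields the claimed asymptotics after substitution, and is compatible with the regime $\epsilon = \Omega(1/n)$ (which is what was needed in the proof of Theorem~\ref{thm:bigness_LB} to guarantee $\rho = \sqrt{\lambda/(1+\nu)} \geq 1.5$ and hence $d \geq 2\epsilon$). One should also confirm that passing from ``big vs.\ $\epsilon$-far from big'' indistinguishability to ``monotone vs.\ $\Theta(\epsilon)$-far from monotone'' indistinguishability on $M_n$ introduces only the constant-factor losses that can be absorbed into the $\Omega$ and $O$ notation in the final statement.
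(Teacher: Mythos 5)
Your proposal is correct and is exactly the argument the paper intends: the paper states that Corollary~\ref{cor:matching-lb} follows ``immediately'' by applying Theorem~\ref{thm:big2mon} to the construction of Theorem~\ref{thm:bigness_LB} with $L = \Theta(\log n)$, noting $s = \Omega(n\ln^2(1/\epsilon)/\log n)$, $\pmax = O((\log^2 n)/(n\ln^2(1/\epsilon)))$, and $T = 1/(\beta n) \in [\epsilon/n, 1/n]$. Your bookkeeping (in particular $n^{1-1/\Theta(\log n)} = \Theta(n)$, $nT \leq 1$ giving $\epsilon' = \Theta(\epsilon)$, and $\beta \in [1, 1/\epsilon]$ giving $T = O(1/n)$) is precisely what is needed and is carried out correctly.
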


\subsection{Monotonicity testing on a hypercube poset}

Consider the Boolean hypercube poset $\{0, 1\}^d$ with $N=2^d$ vertices. For convenience, let $\mathcal{C}$ and $\mathcal{S}$ denote the distribution of distributions implicitly constructed in the lower bound of Theorem~\ref{thm:big2mon}, where distributions in $\mathcal{C}$ are monotone, and distributions in $\mathcal{S}$ are $\epsilon$-far from any monotone distribution, respectively. Theorem~\ref{thm:big2mon} shows that randomly-drawn distributions from $\mathcal{C}$ and $\mathcal{S}$ generate statistically similar histograms over the matching poset. For simplicity, we do not distinguish the parameters $\epsilon$, $s$ and $\pmax$ in Theorem~\ref{thm:bigness_LB} and Theorem~\ref{thm:big2mon} as they are equivalent up to a constant factor.

\subsubsection{General lower bound for monotonicity testing on a hypercube poset}

We first establish the theorem that describes the result of the outlined embedding approach, then later apply this result to achieve interesting special cases.

\begin{theorem}\label{thm:mon2hyp}
Let an integer $\ell \geq 1$ be a parameter. Suppose that there exists a pair $(\mathcal{C},\mathcal{S})$ of distribution of distributions over a matching on $n={\binom{d-1}{\ell-1}}$ pairs of vertices, forming an instance for the monotonicity problem with distance $\epsilon$, a maximum probability $\pmax$, and a lower bound of $s$ samples. Then, testing monotonicity on the Boolean hypercube of size $N = 2^d$ with distance parameter $\epsilon / W$ requires $\Omega(sW)$ samples, where $s = \Omega((n \ln^2 (1/\epsilon))/\log n)$ and $W = 1 + \Theta((\log^2 n)/ (n\ln^2(1/\epsilon))) \cdot \left(\sum_{i=\ell}^{d} {\binom{d}{i}} - {\binom{d-1}{\ell-1}}\right)$.

\end{theorem}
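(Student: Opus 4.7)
The plan is to \emph{embed} the given matching instance into two consecutive levels of the hypercube $\{0,1\}^d$, fill the remaining high-weight vertices with a constant mass $\pmax$ so monotonicity is automatic away from the embedded matching, and then transfer the matching lower bound via a Poisson-splitting reduction. Fix coordinate $1$; let $U_H$ be the level-$\ell$ vertices whose first bit equals $1$ and $V_H$ the level-$(\ell-1)$ vertices whose first bit equals $0$, both of size $n=\binom{d-1}{\ell-1}$, and match each $u\in U_H$ with the vertex $v\in V_H$ obtained by flipping $u$'s first bit. The crucial structural observation is \emph{isolation}: the only comparabilities among these $2n$ vertices in the hypercube poset are the $n$ intended matching edges. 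Indeed, $u_i$ and $u_j$ agree on coordinate $1$ (both equal to $1$) and each have $\ell-1$ ones among coordinates $2,\dots,d$; for $v_j\prec u_i$ one would need the $\ell-1$ one-coordinates of $v_j$ (all in $\{2,\dots,d\}$) to be a subset of those of $u_i$, but equal cardinalities force equality, so $i=j$. Same-level pairs are automatically incomparable.

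\textbf{Hypercube distribution and distance to monotonicity.} Given a matching distribution $q$ drawn from $\mathcal{C}$ or $\mathcal{S}$, define an unnormalized measure $\widetilde{p}_H$ on $\{0,1\}^d$ by $\widetilde{p}_H(u_i)=q(u_i)$, $\widetilde{p}_H(v_i)=q(v_i)$, $\widetilde{p}_H(x)=\pmax$ for every non-matched vertex $x$ at level $\geq \ell$, and $\widetilde{p}_H(x)=0$ otherwise. Its total mass is
\[
W \;=\; 1 \;+\; \pmax \cdot \left(\sum_{i=\ell}^{d} \binom{d}{i} \;-\; \binom{d-1}{\ell-1}\right),
\]
and the hypercube distribution is $p_H=\widetilde{p}_H/W$. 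Using isolation, every hypercube comparability $x\preceq y$ \emph{other than} a matching edge falls into one of three benign cases: both endpoints carry mass $\pmax/W$; the lower endpoint has mass $0$; or the lower endpoint is a matched $v_i$ with $p_H(v_i)=q(v_i)/W\leq \pmax/W$ while the upper endpoint (non-matched, level $\geq \ell$) has mass $\pmax/W$. Hence $p_H$ is monotone on $\{0,1\}^d$ iff $q$ is monotone on the matching. Invoking Theorem~\ref{thm:dist_to_mon_matching}, the distance of $p_H$ to $\Mon(\{0,1\}^d)$ equals, up to a factor of two, the weight of a maximum violating matching in the transitive closure of the hypercube; by isolation the only positive-weight edges are the $n$ embedded matching edges with weights $\max(0,q(v_i)-q(u_i))/W$, so this weight is exactly the corresponding matching weight for $q$ divided by $W$. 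Thus $p_H$ is $\Theta(\epsilon/W)$-far from monotone whenever $q$ is $\epsilon$-far.

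\textbf{Sample complexity reduction.} Suppose there exists a hypothetical $s'$-sample monotonicity tester for $\{0,1\}^d$ at distance $\epsilon/W$. I convert it into a matching tester as follows: draw $\Poi(s'/W)$ samples from $q$, relabel each sample to its hypercube image under the embedding, and mix in independently generated $\Poi(s'(W-1)/W)$ ``filler'' samples drawn uniformly from the $\pmax$-vertices (this filler distribution does not depend on $q$). By the superposition/splitting property of the Poisson process, the combined multiset is distributed as $\Poi(s')$ samples from $p_H$. Feeding it to the assumed hypercube tester solves monotonicity testing for $q$; combined with the matching lower bound of $\Omega(s)$ from the hypothesis, this forces $s'/W=\Omega(s)$, i.e., $s'=\Omega(sW)$.

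\textbf{Main obstacle.} The key technical step is the isolation property in Step~1, which is what makes both the monotonicity-preservation check and the distance calculation (via Theorem~\ref{thm:dist_to_mon_matching}) decompose cleanly along the matching edges. A related subtlety is that we crucially use the upper bound $q(u_i)\leq \pmax$ supplied by Theorem~\ref{thm:bigness_LB}: without it, the level-$\ell$ filler mass $\pmax$ could be dominated by a matched $u_i$ and create new violating edges between matched and non-matched vertices, which would break the clean factor-$W$ reduction between the two distance parameters.
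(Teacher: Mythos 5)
Your proposal is correct and follows essentially the same route as the paper: the same embedding of the matching into levels $\ell$ and $\ell-1$ by fixing one coordinate, the same filler mass $\Theta(\pmax)$ on the remaining upper-level vertices giving the normalization $W$, and the same Poisson-splitting transfer of the $\Omega(s)$ matching lower bound to $\Omega(sW)$ on the hypercube. Your explicit isolation check and the use of Theorem~\ref{thm:dist_to_mon_matching} to pin down the distance are slightly more detailed than the paper's corresponding remarks, but they are not a different argument.
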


\begin{proof}
Consider two consecutive levels $\ell$ and $\ell-1$ of a hypercube, where the $\ell^\textrm{th}$ level consists of vertices whose coordinates contain exactly $\ell$ ones. 
Our approach is to embed our matching onto these levels in the hypercube, so that each edge of the matching has one endpoint in each of the two levels, and each endpoint is mutually incomparable to any endpoint of any other edge. 


We choose our coordinates for the embedding as follows. We pick all the vertices such that there are exactly $\ell-1$ ones among the first $d-1$ coordinates. Let $M$ denote the set of these vertices. 
There are exactly $2\cdot{{d-1} \choose {\ell-1}}$ vertices in the set $M$. 
Clearly, each vertex in $M$ is comparable with the vertex whose coordinate only differs at the last bit. Furthermore, it is incomparable with the rest of the vertices in $M$, as other coordinates also have $\ell-1$ ones on the first $d-1$ bits.

Next we describe the probabilities assigned to each vertex on the hypercube, given $p$, the distribution over a matching (drawn from $\mathcal{C}$ or $\mathcal{S}$). First we assign the probabilities to $M$ according to $p$. Namely, the set of coordinates of $M$ with $\ell$ ones corresponds to $U$ and that with $\ell-1$ ones corresponds $V$, where $U$ and $V$ are as defined in the previous proof. 
Then, for the remaining vertices in level $\ell$ and above, assign the probability of $c \cdot ((\log^2 n)/ (n\ln^2(1/\epsilon)))$ for a sufficiently large $c$ such that the quantity becomes at least $\pmax$.
Let $W = 1 +  \Theta((\log^2 n)/ (n\ln^2(1/\epsilon))) \cdot \left(\sum_{i=\ell}^{d} {d \choose i} - {{d-1} \choose {\ell-1}}\right)$ be the total probability assigned to all these vertices so far. We divide all assigned probabilities by $W$ to finally obtain a distribution over the hypercube. We denote the constructed distribution over the hypercube $p_{H}$. 

Clearly, the proposed construction preserves the monotonicity due to the incomparability between distinct embedded matching edges. In particular, if distribution over the matching is drawn from $\mathcal{C}$, the distribution over the hypercube will still be monotone; if it is drawn from $\mathcal{S}$, then the distance to monotonicity is now $\epsilon/W$
since, at the very least, the subposet restricted to the embedded matching must be modified to a monotone distribution over this matching.

Using Corollary~\ref{cor:matching-lb}, any algorithm that can test the monotonicity of $p_{H}$ requires $\Omega(s)$ samples from the matching vertices. Note that if we draw a sample from $p_{H}$ with probability $1/W$ it is from the matching. Therefore, observe that $\Poi(s)$ samples from the matching are required in order to obtain $\Poi(sW)$ samples from the hypercube with high probability. This yields the lower bound of $\Omega(sW)$ samples for testing monotonicity over the hypercube poset.
\end{proof}

\subsubsection{Applications of Theorem~\ref{thm:mon2hyp}}

We extend Theorem~\ref{thm:mon2hyp} into two following corollaries. Firstly, we consider embedding our matching to the largest possible levels of the hypercube, namely the middle ones, showing the lower bound of $\Omega(nd)$ samples for $\epsilon = \Theta(1/d^{2.5})$ (Corollary~\ref{cor:hyp1}). To complement this first corollary that only handles sub-constant $\epsilon$, we secondly apply our construction to higher levels of the hypercube, and readjust the construction from Theorem~\ref{thm:bigness_LB} so that $L = \Theta(1)$ moments are matched (as opposed to $\Theta(\log n)$). This approach shows the lower bound of $\Omega(N^{1-\delta})$ for testing monotonicity on the hypercube poset with distance parameter $\epsilon$, such that $\delta \rightarrow 0$ as $\epsilon \rightarrow 0$ (Corollary~\ref{cor:hyp2}).

\begin{corollary} \label{cor:hyp1}
For sufficiently small $\epsilon = \Theta(1/d^{2.5})$, any algorithm that can distinguish whether a distribution over a Boolean hypercube poset of size $N = 2^d$ is monotone, or $\epsilon$-far from any monotone distribution, with success probability $2/3$ requires $\Omega(N d)$ samples.
\end{corollary}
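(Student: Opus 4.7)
The plan is to instantiate Theorem~\ref{thm:mon2hyp} with $\ell = \lfloor d/2 \rfloor$, so that the matching is embedded into the two middle levels of the hypercube where $\binom{d-1}{\ell-1}$ is maximized. In this regime $n = \binom{d-1}{\ell-1} = \Theta(2^d/\sqrt{d})$ and $\log n = \Theta(d)$. I would then invoke Corollary~\ref{cor:matching-lb} (via the pair $(\mathcal{C},\mathcal{S})$ from Theorem~\ref{thm:big2mon}) with a constant matching-side distance parameter $\epsilon = \Theta(1)$, obtaining an $s = \Omega(n \ln^2(1/\epsilon)/\log n) = \Omega(2^d/d^{1.5})$ lower bound, and a maximum probability bound $\pmax = O((\log^2 n)/(n \ln^2(1/\epsilon))) = O(d^{2.5}/2^d)$ that controls the mass we need to assign to the ``filler'' hypercube vertices.

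Next I would compute the blow-up factor $W$ coming from the embedding. Since $\sum_{i=\ell}^d \binom{d}{i} = \Theta(2^d)$ while $\binom{d-1}{\ell-1} = \Theta(2^d/\sqrt{d})$, the bracketed difference in Theorem~\ref{thm:mon2hyp} remains $\Theta(2^d)$. Plugging in:
\[
W \;=\; 1 + \Theta\!\Bigl(\tfrac{\log^2 n}{n \ln^2(1/\epsilon)}\Bigr) \cdot \Theta(2^d) \;=\; 1 + \Theta\!\Bigl(\tfrac{d^2}{(2^d/\sqrt{d})}\Bigr) \cdot \Theta(2^d) \;=\; \Theta(d^{2.5}),
\]
where I used that $\epsilon$ is constant so $\ln^2(1/\epsilon) = \Theta(1)$. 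Consequently, the hypercube-side distance parameter produced by the reduction is $\epsilon/W = \Theta(1/d^{2.5})$, matching the corollary's hypothesis.

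Finally I would read off the sample complexity lower bound from Theorem~\ref{thm:mon2hyp}:
\[
sW \;=\; \Omega\!\Bigl(\tfrac{2^d}{d^{1.5}}\Bigr) \cdot \Theta(d^{2.5}) \;=\; \Omega(2^d \cdot d) \;=\; \Omega(Nd),
\]
which is the claimed bound. The only piece that requires care, rather than routine calculation, is confirming that the filler assignment $c \cdot (\log^2 n)/(n\ln^2(1/\epsilon))$ on vertices above level $\ell$ remains consistent with the monotonicity of the construction: this is exactly where the $\pmax$ bound from Theorem~\ref{thm:bigness_LB} enters, ensuring the filler value dominates $\pmax$ and hence preserves monotonicity on the hypercube away from the embedded matching. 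I expect this bookkeeping to be the only delicate point; no new technique is needed beyond assembling Theorem~\ref{thm:bigness_LB}, Theorem~\ref{thm:big2mon}, and Theorem~\ref{thm:mon2hyp} at the right choice of $\ell$ and $\epsilon$.
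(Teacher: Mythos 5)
Your proposal is correct and follows essentially the same route as the paper's proof: embed the matching at the middle level $\ell \approx d/2$ so that $n = \binom{d-1}{\ell-1} = \Theta(N/\sqrt{d})$, take the matching-side distance to be a (sufficiently small) constant so that $s = \Omega(N/d^{1.5})$ and $\pmax = O(d^{2.5}/N)$, compute $W = \Theta(d^{2.5})$, and read off $sW = \Omega(Nd)$ with hypercube-side distance $\epsilon/W = \Theta(1/d^{2.5})$ from Theorem~\ref{thm:mon2hyp}. The monotonicity of the filler assignment that you flag as the delicate point is exactly what the proof of Theorem~\ref{thm:mon2hyp} already handles via the $\pmax$ bound, so no further work is needed.
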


\begin{proof} \;
Let $\ell$ be $\ceil{d/2}$.
As we stated in the proof of Theorem~\ref{thm:mon2hyp}, we embed a matching of size $n \coloneqq {d-1\choose \ell-1}$ onto the middle layer of the hypercube where $n$ is at least $\Omega(N/\sqrt{d}) = \Omega(N/\sqrt{\log N})$ by Stirling's approximation. We have 
$$W = 1+\Theta(d^{2.5}/(N \log^2(1/\epsilon')))  \cdot \Theta(N) = \Theta(d^{2.5}/\log^2(1/\epsilon'))\,.$$ Applying Theorem~\ref{thm:mon2hyp}, we achieve our lower bound of $\Omega(Nd)$ for $\epsilon = \Theta(1/d^{2.5})$ by choosing a sufficiently small constant $\epsilon'$.
\end{proof}

\begin{corollary} \label{cor:hyp2}
Any algorithm that can distinguish whether a distribution over a Boolean hypercube poset of size $N = 2^d$ is monotone, or $\epsilon$-far from any monotone distribution, with success probability $2/3$ requires $\Omega(N^{1-\delta})$ samples, where $\epsilon$ and $\delta=\Theta(\sqrt{\epsilon})+o(1)$ are constants. In particular, $\delta \rightarrow 0$ as $\epsilon \rightarrow 0$.
\end{corollary}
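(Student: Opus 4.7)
The plan is to follow the same embedding strategy as in Corollary~\ref{cor:hyp1}, with two modifications motivated by the fact that we now want a constant distance parameter: (i) instantiate the bigness lower bound from Theorem~\ref{thm:bigness_LB} with $L = \Theta(1)$ matched moments (rather than $L = \Theta(\log n)$ as used for Corollary~\ref{cor:hyp1}), and (ii) embed the resulting matching into a level $\ell = (1/2+x)d$ strictly above the middle of the hypercube, where $x > 0$ is a tuning parameter. For $L$ bounded independently of $n$, Theorem~\ref{thm:bigness_LB} yields a matching-poset sample lower bound of $s = \Omega(n^{1-1/L}\log^2(1/\epsilon_{\mathrm m})/L)$ with maximum probability $\pmax = \Theta(L^2/(n\log^2(1/\epsilon_{\mathrm m})))$, where $\epsilon_{\mathrm m}$ denotes the matching distance.

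Next I would push this through the embedding of Theorem~\ref{thm:mon2hyp}. Stirling's approximation gives $n = \binom{d-1}{\ell-1} = \Theta(N^{H(1/2+x)}/\sqrt{d})$, and for small $x$ the tail sum satisfies $\sum_{i=\ell}^{d}\binom{d}{i} - n = \Theta(n/x)$, so the scaling factor becomes $W = 1 + \Theta(L^2/(x\log^2(1/\epsilon_{\mathrm m})))$. The hypercube instance then has distance parameter $\epsilon = \epsilon_{\mathrm m}/W$ and admits a lower bound of $\Omega(sW)$ samples. Taylor-expanding $H(1/2+x) = 1 - 2x^2/\ln 2 + O(x^4)$ and writing the sample lower bound as $\Omega(N^{1-\delta})$, one reads off
\[
\delta \;=\; \tfrac{1}{L} \;+\; \tfrac{2x^2}{\ln 2} \;+\; o_N(1).
\]

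Finally, I would balance the parameters. Fix $\epsilon_{\mathrm m}$ to be a sufficiently small absolute constant so that the feasibility requirement $L \geq L_0(\epsilon_{\mathrm m})$ from Theorem~\ref{thm:bigness_LB} (equivalently, $\rho = \sqrt{\lambda/(1+\nu)} \geq 1.5$) holds for a suitable constant $L$. Then the constraint $\epsilon = \epsilon_{\mathrm m}/W$ forces $W = \Theta(1/\epsilon)$, which via the formula for $W$ translates into $x = \Theta(L^2\epsilon)$. Substituting back yields $\delta = 1/L + \Theta(L^4 \epsilon^2) + o_N(1)$; tuning $L$ as a suitable power of $1/\epsilon$ to balance the two principal terms gives the claimed bound $\delta = \Theta(\sqrt{\epsilon}) + o(1)$, which indeed tends to zero as $\epsilon \to 0$.

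The main obstacle is managing this parameter balance carefully: one must simultaneously verify that the feasibility condition $\rho \geq 1.5$ from the proof of Theorem~\ref{thm:bigness_LB} is preserved under the constant-$L$ regime, ensure that the $\pmax$ bound is tight enough that $W$ does not inflate the hypercube distance beyond $\epsilon$, and check that the probability assignments produced by the embedding (in particular, the $\pmax$-mass placed on all vertices at level $\ell$ and above) remain a well-defined probability distribution after rescaling by $W$. Propagating all of these constraints through Theorem~\ref{thm:mon2hyp} is the technical crux.
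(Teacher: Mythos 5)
Your construction follows the paper's proof essentially step for step: instantiate the bigness/matching lower bound with a constant number $L$ of matched moments, embed the matching at a level $\ell=(1/2+x)d$ above the middle of the hypercube via Theorem~\ref{thm:mon2hyp}, and trade off the two contributions $1/L$ and $\Theta(x^2)$ to the exponent loss $\delta$. Your bookkeeping is in one respect more careful than the paper's: you retain the $1/x$ factor in the tail sum, so that $W = 1+\Theta(L^2/x)$, whereas the paper writes $W=\Theta(L^2)$ with the dependence on the level shift absorbed into the constant.

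The gap is in the final balancing, and it is not cosmetic. From your own relations, $\epsilon=\epsilon_{\mathrm m}/W$ with $W=\Theta(L^2/x)$ forces $x=\Theta(L^2\epsilon)$ and hence $\delta = 1/L+\Theta(L^4\epsilon^2)+o_N(1)$; minimizing over $L$ gives $L^5=\Theta(1/\epsilon^2)$, i.e.\ $L=\Theta(\epsilon^{-2/5})$ and $\delta=\Theta(\epsilon^{2/5})$, not $\Theta(\sqrt{\epsilon})$. If you instead set $L=\Theta(1/\sqrt{\epsilon})$ to make the first term $\Theta(\sqrt{\epsilon})$, your constraint forces $x=\Theta(L^2\epsilon)=\Theta(1)$, so the $\Theta(x^2)$ term is a constant and $\delta$ does not tend to $0$. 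Since $\epsilon^{2/5}/\epsilon^{1/2}\rightarrow\infty$ as $\epsilon\rightarrow 0$, the exponent you actually obtain is strictly weaker than the one claimed in the corollary. The paper reaches $\delta=\Theta(\sqrt{\epsilon})+o(1)$ by decoupling the two parameters: it takes $W=\Theta(L^2)$ with the shift $\alpha$ treated as a constant inside the $\Theta$, sets $L=\Theta(1/\sqrt{\epsilon})$ from $\epsilon=\Theta(1/L^2)$, and separately chooses $\alpha$ small enough that $-\alpha\log(1-4\alpha)=\Theta(\alpha^2)=O(\sqrt{\epsilon})$. Under your (correct) estimate that the tail sum above level $\ell$ exceeds $n$ by $\Theta(n/x)$, this decoupling is precisely what your constraint rules out; so you must either redo the optimization and report the exponent it actually yields, or exhibit an additional degree of freedom (e.g.\ in how much mass is placed on the upper levels, or in the choice of $\epsilon_{\mathrm m}$) that breaks the coupling between $x$ and $W$. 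As written, the last sentence of your argument asserts a balance that the preceding formulas do not support.
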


\begin{proof} 
Without loss of generality assume $d$ is even. Otherwise, observe that when $d$ is odd, we may embed a hypercube of size $2^{d}$ in a hypercube of size $2^{d+1}$ and achieve the same lower bound up to a constant factor. 
 Consider $\ell \geq d/2$. Observe that
\[{d \choose {\ell+i}} = {d \choose \ell}\cdot\frac{d-\ell}{\ell+1}\cdot\frac{d-\ell-1}{\ell+2}\cdots\frac{d-\ell-i+1}{\ell+i} \leq {d \choose \ell}\left(\frac{d-\ell}{\ell+1}\right)^i.\]
This yields the inequality
\[\sum_{i=\ell}^{d} {d \choose i} = {d \choose \ell} \sum_{i=0}^{d-\ell} \left(\frac{d-\ell}{\ell+1}\right)^i \leq {d \choose \ell} \sum_{i=0}^{\infty} \left(\frac{d-\ell}{\ell+1}\right)^i = {d \choose \ell} \frac{\ell+1}{2\ell-d+1}.\]
We pick $\ell = d/2 + \alpha d$ for some constant $0.24 >\alpha>0$ so that $\sum_{i=\ell}^{d} {d \choose i} = \Theta\left({d \choose \ell}\right)$. The embedded matching is of size $n = {{d-1} \choose {\ell-1}} = \frac{d}{\ell}{d \choose \ell} = \Theta\left({d \choose \ell}\right)$.

Next, consider the application of Theorem~\ref{thm:big2mon} leveraging Theorem~\ref{thm:bigness_LB} with constant parameters $\epsilon$ and $L$, yielding the lower bound of $s=\Omega(n^{1-1/L}/L)$ samples for $\pmax = O(L^2/n) = O(L^2/{d \choose \ell})$. We compute $W = 1+\Theta\left(L^2/{d \choose \ell}\right)\cdot\Theta\left({d \choose \ell}\right) = \Theta(L^2)$. Applying Theorem~\ref{thm:mon2hyp}, we achieve the lower bound of $\Omega(n^{1-\frac{1}{L}}L)$ for testing monotonicity over the hypercube with $\epsilon = \Theta(1/L^2)$.

Recall that $\ell = d/2 + \alpha d$. Using a similar argument as above, we can also bound
\begin{align*}
n &= {d \choose \ell} \geq {d \choose {d/2}} \cdot \frac{d/2 - 1}{d/2 + 1} \cdots \frac{d/2 - \alpha d}{d/2 + \alpha d} \geq {d \choose {d/2}}\left(\frac{d/2 - \alpha d}{d/2 + \alpha d}\right)^{\alpha d} \\ &\geq {d \choose {d/2}}(1-4\alpha)^{\alpha d} \geq \frac{N}{\sqrt{2d}}(1-4\alpha)^{\alpha \log N} = \frac{N^{1+\alpha \log (1-4\alpha)}}{\sqrt{2d}},
\end{align*}
establishing the lower bound of $\widetilde{\Omega}(N^{(1+\alpha \log (1-4\alpha))(1-\frac{1}{L})}) = \Omega(N^{1-\delta})$ for testing monotonicity over the hypercube poset, where $\delta = 1/L-\alpha \log (1-4\alpha)+o(1)$. Since $\epsilon = \Theta(1/L^2)$, for sufficiently large $N$, we may choose sufficiently small $\alpha$ and large $L$, so that $\delta = \Theta(\sqrt{\epsilon})+o(1)$, as desired.
\end{proof}

\section{Reduction from General Posets to Bipartite Graphs} \label{sec:reduction_to_bipartite}

In this section, we show that the problem of monotonicity testing of distributions over the \emph{bipartite} posets is essentially the ``hardest'' case of monotonicity testing in general poset domains. That is, we show that for any distribution $p$ over some poset domain of size $n$, represented as a directed graph $G$, there exists a distribution $p'$ over a bipartite poset $G'$ of size $2n$ such that (1) $p$ preserves the total variation distance 
of $p$ to monotonicity up to a small multiplicative constant factor, and (2) each sample for $p'$ can be generated using one sample drawn from $p$. 
These properties together imply the following main theorem of this section.


\generaltobipartite

\begin{proof}
Consider an arbitrary poset described as a directed graph $G = (V, E)$, and an associated probability distribution $p$ over $V$. We construct a bipartite graph $G' = (V', E')$ based on the transitive closure of $G$, denoted by $TC(G)$, and a distribution $p'$ over $V'$ such that testing  the monotonicity of $p$ over $V$ is roughly equivalent to testing the monotonicity of $p'$ over $V$. 

The construction of the bipartite $G'=(V',E')$ is as follows: for each $v \in V$, we add two vertices $v^+$ and $v^-$ to $V'$, so that $S \coloneqq \{v^+\}_{v \in V}$ and $T \coloneqq \{v^-\}_{v\in V}$ together form the bipartition $V' \coloneqq S \cup T$. Think of $S$ and $T$ as the set of top and bottom vertices respectively. Next, consider two vertices $u$ and $v$ such that there is a path from $u$ to $v$ in $G$ (i.e., $(u, v)$ is an edge in $TC(G)$).
For every such pair, we add the directed edge $(u^-,v^+)$ to $E'$. Given the distribution $p$ over $V$, we set $p'(v^+) = p'(v^-) = p(v)/2$. Observe that we can generate a sample from $p'$ using a sample from $p$: if $v$ is drawn from $p$, a sample for $p'$ is obtained by picking either $v^+$ or $v^-$, each with probability $1/2$. 

Now, we prove that testing monotonicity of $p$ is equivalent to testing monotonicity of $p'$. If $p$ is monotone, then $p'$ is also monotone: for each $(u^-, v^+)\in E'$, $p(u) \leq p(v)$ via the transitivity of monotonicity of $p$ along the $u$-$v$ path on $G$. So, $p'(u^-) = p(u)/2 \leq p(v)/2 = p'(v^+)$.

Next, suppose $p$ is $\epsilon$-far from $p'$. By Lemma \ref{lem:violating_matching} (shown below), there exists a (directed) matching $M$ in $TC(G)$, such that 
\begin{equation} \label{eq:violating_matching}
\sum_{(u, v) \in M} p(u) - p(v) \geq d_{TV}(\Mon(G), p) \geq  \epsilon \, .
\end{equation}
Then, the set of edges $(u^-, v^+)$'s corresponding to $(u, v) \in M$ also forms a matching, $M'$,  on $G'$. Let $p'^*$ be the monotone distribution on $G'$ closest to $p'$. Since $p'^*$ is a monotone distribution, for an edge $(u^-,v^+)$, $p'^*(v^+)$ is at least $p'^*(u^-)$. Then, by the triangle inequality, we obtain:

\begin{equation*} \label{eq:distToMatching} 
\begin{array}{ll}
d_{TV}(\Mon(G'), p') & = \frac 1 2 \cdot |p' - p'^*| =  \frac 1 2\sum\limits_{v \in V} |p'(v^-) - p'^*(v^-)| + |p'(v^+) - p'^*(v^+)| 
\\
& \geq  \frac 1 2 \sum\limits_{(u^-, v^+) \in M'} |p'(u^-) - p'^*(u^-)| + |p'(v^+) - p'^*(v^+)|
\\
& \geq  \frac 1 2 \sum\limits_{(u^-, v^+) \in M'} p'(u^-) - p'^*(u^-) - p'(v^+) + p'^*(v^+)
\\
& =   \frac 1 2 \sum\limits_{(u^-, v^+) \in M'} p'(u^-) - p'(v^+) + \left(p'^*(v^+) - p'^*(u^-)  \right)
\\
& \geq   \frac 1 2 \sum\limits_{(u^-, v^+) \in M'} p'(u^-) - p'(v^+) 
\\
& \geq  \frac 1 2 \sum\limits_{(u, v) \in M} (p(u) - p(v))/2 \geq \epsilon/4.
\end{array}
\end{equation*}
Note that the second to last inequality is true since $p'^*$ is monotone, and $p'^*(v^+)$ has to be at least $p'^*(u^-)$. Therefore, if $p$ is $\epsilon$-far from monotone, then $p'$ is $\epsilon/4$-far from monotone. 

Thus, to distinguish whether $p$ is monotone or $\epsilon$-far from any monotone distribution on $G$, it is suffices to test if $p'$ is monotone or $\epsilon/4$-far from any monotone distribution on the bipartite poset $G'$.
\end{proof}

An interesting byproduct of Equation \ref{eq:violating_matching} is the following: If you consider the violation of each edge from monotonicity to be the weight of that edge, then the weight of the maximum weighted matching is the distance of the distribution to monotonicity. We formally explained it in the following theorem.


\distToMonMatching

\begin{proof} 
Let $W$ indicates the weight of the maximum weighted matching.
Fix a matching $M$ of $k$ edges $(u_i, v_i)$. Assume $p'$ is the closest monotone distribution to $p$, so $p'(u_i) \leq p'(v_i)$ for every edge $(u_i,v_i)$. 
One can show the following: 
\begin{align*}
d_{TV}(\Mon(G), p) & = \frac 1 2 \cdot \|p - p'\|_1 =  \frac 1 2\sum\limits_{(u_i,v_i) \in M} |p(u_i) - p'(u_i)| + |p(v_i) - p'(v_i)| 
\\
& \geq  \frac 1 2 \sum\limits_{(u_i,v_i) \in M} \max\left(0, p(u_i) - p(v_i) + p'(v_i) - p'(u_i)\right)
\\& \geq  \frac 1 2 \sum\limits_{(u_i,v_i) \in M} \max\left(0, p(u_i) - p(v_i)\right) \geq \frac 1 2 \, W
\end{align*}
where the last inequality is true, because the above is true for any matching $M$. On the other hand by Lemma \ref{lem:violating_matching}, there exists a (directed) matching $M_0$ in $TC(G)$, such that 
$$d_{TV}(\Mon(G), p) \leq \sum_{(u_i, v_i) \in M^*} p(u_i) - p(v_i) \leq W\, .$$
Thus, the proof is complete.
\end{proof}

\subsection{Proof of auxiliary lemmas}

\begin{lemma} \label{lem:violating_matching}
Let $p$ be a probability distribution over the vertex set $V$ of an unweighted directed graph $G = (V, E)$ representing a poset. Then, there exists a matching $M$ on the transitive closure $TC(G)$ such that 

$$\sum_{(u, v) \in M} p(u) - p(v) \geq d_{TV}(p, \Mon(G))\, .$$
\end{lemma}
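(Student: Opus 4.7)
The plan is to combine a transportation (max-flow/Hall's-theorem) argument with bipartite matching integrality. Let $q^* \in \Mon(G)$ attain $d_{TV}(p, q^*) = d_{TV}(p, \Mon(G))$, set $f(v) \coloneqq p(v) - q^*(v)$, and define $V^+ \coloneqq \{v \in V : f(v) > 0\}$ and $V^- \coloneqq \{v \in V : f(v) < 0\}$. Since $\sum_v f(v) = 0$, we have $d_{TV}(p, \Mon(G)) = \sum_{u \in V^+} f(u) = \sum_{v \in V^-} |f(v)|$.

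The main step is to establish a transportation plan $T \colon V^+ \times V^- \to \mathbb{R}_{\geq 0}$ satisfying the marginal conditions $\sum_v T(u, v) = f(u)$ for every $u \in V^+$ and $\sum_u T(u, v) = |f(v)|$ for every $v \in V^-$, supported only on pairs $(u, v)$ with $u \preceq v$ in $TC(G)$ \emph{and} $q^*(u) = q^*(v)$ (that is, $u$ and $v$ lie on the same level set, or \emph{plateau}, of $q^*$). Existence follows from Hall's theorem applied to the bipartite graph of same-plateau $TC(G)$-edges between $V^+$ and $V^-$. Assume for contradiction that Hall's condition fails for some $S \subseteq V^+$, so that the corresponding neighborhood $N(S) \subseteq V^-$ has $\sum_{v \in N(S)} |f(v)| < \sum_{u \in S} f(u)$. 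I would then construct a monotone distribution $q'$ that strictly decreases $d_{TV}$, by shifting mass from the plateau-downward-closure of $V^-$-vertices outside $N(S)$ onto the plateau-upward-closure of $S$, carefully propagating the modification along chains of equal $q^*$-values to preserve monotonicity; the Hall failure guarantees that the net effect on $d_{TV}$ is strictly negative, contradicting the optimality of $q^*$. Handling these plateau propagations---in particular, accounting for the small penalty incurred when the cascade touches $V^0 \coloneqq V \setminus (V^+ \cup V^-)$ vertices---is the delicate core of the argument and is the principal obstacle.

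For the second step, observe that for each edge $(u, v)$ in the support of $T$, the plateau equality $q^*(u) = q^*(v)$ yields
\[
p(u) - p(v) = f(u) + \bigl(q^*(u) - q^*(v)\bigr) + |f(v)| = f(u) + |f(v)| \geq \max\bigl(f(u),\, |f(v)|\bigr) > 0.
\]
Let $B$ be the bipartite graph on $V^+ \cup V^-$ whose edge set is the support of $T$, weighted by $w(u, v) \coloneqq p(u) - p(v)$, and define the fractional matching $x(u, v) \coloneqq T(u, v)/\max(f(u), |f(v)|)$. Feasibility follows from $\sum_v x(u, v) \leq \sum_v T(u, v)/f(u) = 1$ and symmetrically $\sum_u x(u, v) \leq 1$, and its total weight satisfies
\[
\sum_{(u,v)} w(u, v)\, x(u, v) \geq \sum_{(u,v)} \frac{f(u) + |f(v)|}{\max(f(u), |f(v)|)} \cdot T(u, v) \geq \sum_{(u,v)} T(u, v) = d_{TV}(p, \Mon(G)).
\]
Since $B$ is bipartite, K\"onig's theorem (integrality of the bipartite matching LP) yields an integer matching $M \subseteq E(B)$ of weight at least that of $x$. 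Because $V^+$ and $V^-$ are disjoint subsets of $V$, $M$ is automatically a matching in $TC(G)$, and it satisfies $\sum_{(u, v) \in M} (p(u) - p(v)) \geq d_{TV}(p, \Mon(G))$, completing the proof.
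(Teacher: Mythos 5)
Your second step---passing from a same-plateau transportation plan to an integral matching via the integrality of the bipartite fractional-matching polytope---is sound, but the first step is a genuine gap, and not merely because you left it as a sketch: the claim that an \emph{arbitrary} optimal $q^*$ admits a transportation plan supported on same-plateau comparable pairs is false. Take the chain $a \preceq b \preceq c \preceq d$ with $p = (0.4,\, 0,\, 0.6,\, 0)$. For any monotone $q$, pairing terms along the matching $\{(a,b),(c,d)\}$ gives $\|p-q\|_1 \geq (p(a)-p(b)) + (p(c)-p(d)) = 1$, and this bound is attained both by $q^* = (0.2,0.2,0.3,0.3)$ and by $q^* = (0.1,0.1,0.4,0.4)$. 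For the latter optimum, $f = p - q^* = (0.3,-0.1,0.2,-0.4)$, so $V^+=\{a,c\}$ and $V^-=\{b,d\}$, the plateaus are $\{a,b\}$ at value $0.1$ and $\{c,d\}$ at value $0.4$, and the only same-plateau comparable pairs are $(a,b)$ and $(c,d)$. The marginal constraints then force $T(a,b)$ to equal both $f(a)=0.3$ and $|f(b)|=0.1$, so no plan exists; equivalently, Hall's condition fails for $S=\{a\}$ even though $q^*$ is optimal, so your planned ``Hall failure contradicts optimality'' argument cannot go through. You would at best need to show that \emph{some} optimal $q^*$ has the plateau structure, which requires a further selection argument you do not supply; and the plateau equality is essential to your computation $p(u)-p(v)=f(u)+|f(v)|$ --- without it the edge weight can even be negative.

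The paper sidesteps this difficulty by never working with the optimal monotone \emph{distribution}. It instead takes the closest monotone \emph{function} $f^*$ (dropping the normalization $\sum_v q(v)=1$), expresses that problem as an LP whose dual is a fractional-matching-type LP with a totally unimodular constraint matrix, and decomposes the resulting integral dual optimum into paths and cycles to extract a matching in $TC(G)$ of weight exactly $\|p-f^*\|_1$; a separate renormalization argument then shows $\|p-f^*\|_1 \geq \tfrac{1}{2}\min_{q\in\Mon(G)}\|p-q\|_1 = d_{TV}(p,\Mon(G))$. If you want to salvage your approach, the cleanest fix is likewise to remove the normalization constraint before looking for the transportation or matching structure.
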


\begin{proof} Define $\epsilon$ to be the $\ell_1$-distance of $p$ to monotonicity. We need to show the following: 
$$\sum_{(u, v) \in M} p(u) - p(v) \geq  \epsilon/2 \, .$$

Let $f^*$ be the monotone {\em function} on $G$ closest to $p$ (in the $\ell_1$-distance). Let $d$ denote $\|f^* - p\|_1$: the $\ell_1$-distance between $f^*$ and $p$. Note that $f^*$ is not necessarily a probability distribution which implies that $d$ can be smaller than $\epsilon$. 
To prove the above inequality, we will use $d$ as an intermediate variable which is in between the left hand side and the right hand side of the above inequality. Specifically, it suffices to prove the following:

\begin{enumerate}[(i)]\bfseries
\item $d \geq \epsilon/2$; \label{item:d_eps}
\item \textmd{ there exists a matching $M$ on the transitive closure of $G$ such that $\sum_{(u, v) \in M} p(u) - p(v)=d$.} \label{item:d_Matching}
\end{enumerate}

\noindent
\textbf{Proof of Item (\ref{item:d_eps}):} 
To show that $d$ is at least $\epsilon/2$, we prove that the monotone distribution $p_{f^*}$, obtained by normalizing $f^*$, is at most $2d$-far from $p$. Since any monotone distribution is at least $\epsilon$-far from $p$ in $\ell_1$-distance , we will have $\epsilon \leq \|p - p_{f^*}\|_1 \leq 2d$, establishing the desired claim.

First, note that if $f^*(v)$ is zero for all $v$, then by definition $d$ is at least $\epsilon/2$:
\begin{align*}
d = \sum\limits_{v \in V} \vert p(v) - f^*(v) \vert = \sum\limits_{v \in V} \vert p(v)\vert = 1 \geq \epsilon/2
\end{align*}
where the inequality holds since the $\ell_1$-distance between two distributions is always at most $2$, so $\epsilon$ is as well. Hence, assume $f^*$ is not a zero function for the rest of the proof. 

Also, note that $f^*$ is a non-negative function. We prove the non-negativity of $f^*$ by contradiction: assume $f^*(v)$ is negative for some $v$. Consider a non-negative function $f(v) = \max\{f^*(v), 0\}$. It is not hard to see that $f$ is monotone due to monotonicity of $f^*$. For every $v$ for which $f^*(v)<0$, we have
$$
\begin{array}{ll}
|p(v) - f(v)|  = p(v) - 0 < p(v) - f^*(v) = |p(v) - f^*(v)| \; . 
\end{array}
$$
Since $f^*(v) = f(v)$ everywhere else, $\|p - f\|_1 = \sum_{v\in V} |p(v) - f(v)| < \sum_{v\in V} |p(v) - f^*(v)| = \|p - {f^*}\|_1$ when $f^*$ contains some negative entry.
This contradicts the fact that $f^*$ was the closest monotone function to $p$, hence $f^*(v)$ has to be non-negative for all $v$'s.


Consider $p_{f^*}(v) = f^*(v)/ \sum_{u} f^*(u)$; it follows that $p_{f^*}$ is a well-defined monotone distribution. Then, 
$$
\begin{array}{lll}
\epsilon \leq \|p - p_{f^*}\|_1 & \leq & \|p - {f^*}\|_1 + \|f^*- p_{f^*}\|_1
= d + \sum\limits_{v \in V}{\left|{f^*}(v) - \dfrac{{f^*}(v)}{\sum_{u \in V} {f^*}(u)}\right|} 
\vspace{2mm}\\  
& = &d + \sum\limits_{v \in V}f^*(v) \cdot {\left| \dfrac{\left( \sum_{u \in V} f^*(u) \right)- 1}{\sum_{u \in V} f^*(u)}\right|}
=  d + \left| \sum_{u \in V} f^*(u) - 1\right| 
\vspace{2mm}\\  
& = & d + \left| \sum_{u \in V} f^*(u) - \sum_{u \in V} p(u) \right| 
\leq d + \sum_{u \in V} \left|  f^*(u) -  p(u) \right|
\vspace{2mm}\\  
& = & d + \|p - {f^*}\|_1 
= 2d 
\;.
\end{array}
$$
Thus, Item (\ref{item:d_eps}) is proved.

\noindent
\textbf{Proof of Item (\ref{item:d_Matching}):}  We leverage the duality theorem in linear programming. We write an LP that optimizes over all monotone functions $f$'s to find the function $f^*$ closest to $p$ under the $\ell_1$-distance. Let $x(v)$ be the variable that indicates the amount of perturbation at vertex $x$ that is needed to make $p$ monotone. For an edge $(u, v)$, the monotonicity constraint requires that 
$f(v) = p(v) + x(v) $ is at least $f(u) = p(u) + x(u)$, or equivalently,
$$x(v) - x(u) \geq p(u) - p(v) \; .$$
Given this inequality, we can find the monotone function closest to $p$ by solving the following linear program: 


$$
\begin{array}{llll}
\LP[3]: & \mbox{min} & \sum\limits_{v \in V}  |x(v)|
\\
& s.t. & x(v) - x(u) \geq p(u) - p(v) & \forall (u, v) \in E
\end{array}
$$
We denote the optimal solution for \LP[3] by $x^*(v) \coloneqq f^*(v) - p(v)$, and the corresponding optimal value of the objective function by $d \coloneqq \|p - {f^*}\|_1$.

To obtain the dual of \LP[3], we write down its standard form by substituting $x(v)$ by $x^+(v) - x^-(v)$ as follows:
$$
\begin{array}{llll}
\LP[4]: & \mbox{min}  &\sum\limits_{v \in V}  x^+(v) + x^-(v)
\vspace{2mm}\\
& s.t. & \left(x^+(v) - x^-(v)\right) - \left(x^+(u) - x^-(u)\right) \geq p(u) - p(v) & \forall (u, v) \in E
\vspace{2mm}\\
& & x^+(v), x^-(v) \geq 0 & \forall v \in V \;.
\end{array}  
$$
Then \LP[4] has the following dual:
$$
\begin{array}{llll}
\LP[5]: & \mbox{max} & \sum\limits_{(u, v) \in E}  \left(p(u) - p(v)\right) \cdot y(u, v)
\vspace{2mm}\\
& s.t. & \sum\limits_{(u, v) \in E} y(u, v) - \sum\limits_{(v, u) \in E} y(v, u)  \leq 1& \forall v \in V 
\vspace{2mm}\\
& & \sum\limits_{(v, u) \in E} y(v, u) - \sum\limits_{(u, v) \in E} y(u, v)   \leq 1& \forall v \in V
\vspace{2mm}\\
& & y(u, v) \geq 0 & \forall (u, v) \in E \; .
\end{array}  
$$
By strong duality, the optimal value of \LP[5] is equal to the optimal value of \LP[3], namely $d$.  On the other hand, the optimal solution of \LP[5] can help us to find a matching that satisfies the property in Item \ref{item:d_Matching}. Constraints of \LP[5] can be viewed in the form of $Ay \leq b$ and $y \geq 0$. Since $A$ is a {\em totally unimodular matrix} by Lemma \ref{lem:unimodularty_A} (proved below), the LP admits an optimal solution that is also \emph{integral}.

Let $y^*$ denote an integral optimal solution of the \LP[5], and let $S$ be a multi-set of the edges, containing $y^*(u, v)$ copies of edge $(u, v)$. Define the weight of each edge $(u, v)$ as $w(u,v)\coloneqq p(u) - p(v)$, and let the weight of a set $S$ be the sum of the weight of the edges in $S$. Thus:
$$w(S) \coloneqq \sum\limits_{(u, v) \in S} w(u, v)  = \sum\limits_{(u, v) \in S} p(u) - p(v) = \sum\limits_{(u, v) \in E} \left( p(u) - p(v) \right) \cdot y^*(u, v) = d\; .$$
We construct a matching $M$ where $w(M) = w(S)$, which completes the proof of Item \ref{item:d_Matching}. Based on the constraints of the $\LP[5]$, $S$ forms a subgraph on $G$ (but plausibly with multi-edges) such that the absolute difference between the number of incoming edges and outgoing edges at each vertex is at most one. Hence, we can decompose $S$ to paths and cycles.

Consider a path $P=\langle v_1, v_2, \ldots, v_k\rangle$. Observe that the weight of a path only depends on its endpoints:
$$w(P) = \sum\limits_{i=1}^{k-1} w(v_i, v_{i+1}) = \sum\limits_{i=1}^{k-1} p(v_i) -  p(v_{i+1}) = p(v_1) - p(v_k) = w(v_1, v_k)\;.$$
Remark that the edge $(v_1, v_k)$ does not necessarily belong to $E$, but since $v_1$ and $v_k$ are endpoints of a path $P$, then $(v_1, v_k)$ is contained in the \emph{transitive closure} of $G$.

By the above equation, if we replace the edges of $P$ in $S$ by a single edge $(v_1, v_k)$, then $w(S)$ remains unchanged. We can also remove all cycles without changing $w(S)$ since the weight of a cycle is always zero. Lastly, we may also join paths so that their endpoints are all distinct (since the difference between the in-degree and the out-degree of any vertex is at most one). After this process, we eventually obtain a matching $M$ on the transitive closure of $G$ such that 
 $$w(M) = \sum\limits_{(u, v) \in M} w(u, v) = w(S) = d\;,$$ 
concluding the proof of Item (\ref{item:d_Matching}) and this lemma. 
\end{proof}

\begin{lemma} \label{lem:unimodularty_A}
The matrix $A$, namely the coefficient matrix of \LP[5] when the constraints are written in the form $Ay\leq b$ and $y \geq 0$, is a totally unimodular matrix.
\end{lemma}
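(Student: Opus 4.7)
The plan begins by identifying the structure of $A$: it is the vertical concatenation of an incidence-like matrix $M$ and its negation, where $M$ is the $|V| \times |E|$ matrix whose column for edge $e=(a,b)$ carries $-1$ in row $a$, $+1$ in row $b$, and zeros elsewhere. With this notation, the two families of constraints in $\LP[5]$ read precisely as $My \le \mathbf{1}$ and $-My \le \mathbf{1}$, so $A = \begin{pmatrix} M \\ -M \end{pmatrix}$. Since total unimodularity is preserved under negating rows, it suffices to prove $M$ is TU and then lift this to $A$.

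For the first step, I would reprove the classical fact that the vertex-edge incidence matrix $M$ of a directed graph is TU, by induction on the side length $k$ of an arbitrary square submatrix $B$. The base case $k=1$ is immediate since entries of $M$ lie in $\{-1,0,+1\}$. For the inductive step I would split on the column structure of $B$: (i) if some column of $B$ is all zeros, then $\det B = 0$; (ii) if some column of $B$ has exactly one nonzero entry, cofactor-expand along that column and apply the inductive hypothesis to the resulting $(k-1)\times(k-1)$ minor; (iii) if every column of $B$ has exactly two nonzero entries, then by the column structure of $M$ these must be one $+1$ and one $-1$, so the sum of the rows of $B$ is the zero vector, forcing $\det B = 0$. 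This exhausts all cases and yields TU of $M$.

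To finish, I would lift TU from $M$ to $A$. For any square submatrix $B$ of $A$, each chosen row comes from either the top $M$-block or the bottom $-M$-block, and carries a vertex label. If two chosen rows share the same vertex label but come from opposite blocks, then one is the negative of the other, so $\det B = 0$. Otherwise, the vertex labels of the chosen rows are all distinct; multiplying every row drawn from the $-M$-block by $-1$ then converts $B$ (up to an overall sign) into a square submatrix of $M$, whose determinant lies in $\{-1,0,+1\}$ by the previous step. The main obstacle is purely organizational: one must check that after the sign flip, the selected rows of $A$ genuinely correspond to a row-subset of $M$ indexed by distinct vertices, and that column choice transfers verbatim since the columns of both $A$ and $M$ are indexed by the same edge set. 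With this bookkeeping in place, every square submatrix of $A$ has determinant in $\{-1,0,+1\}$, completing the proof.
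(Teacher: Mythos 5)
Your proof is correct, but it takes a genuinely different route from the paper's. You first identify $A$ (up to row permutation) as the stack $\begin{pmatrix} M \\ -M \end{pmatrix}$, where $M$ is the directed vertex--edge incidence matrix of $G$; you then reprove the classical fact that $M$ is totally unimodular by the standard determinant induction (zero column, single-nonzero column with cofactor expansion, or all columns having one $+1$ and one $-1$ so the rows sum to zero), and finally lift TU to $A$ by observing that any square submatrix either contains two rows that are negatives of one another (determinant zero) or becomes, after sign-flipping the rows drawn from the $-M$ block, a submatrix of $M$ up to an overall sign. The paper instead invokes the Ghouila--Houri characterization: for every nonempty row subset $R$ it exhibits an explicit partition $R = R_1 \cup R_2$ (pairing the two constraint rows of each vertex and placing them according to which of the pair survives in $R$) and checks column by column that the signed row sum lies in $\{0,\pm 1\}$. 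Your argument is more elementary in that it avoids citing the Ghouila--Houri theorem, and it makes the structural reason for total unimodularity transparent (the constraint matrix is just a doubled network matrix); the paper's argument is shorter once the characterization is taken as given, at the cost of a somewhat opaque case analysis. Both are complete; the one piece of bookkeeping you flag — that after the sign flip the selected rows correspond to distinct vertices and the columns transfer verbatim — is exactly handled by your preceding observation that duplicate vertex labels across the two blocks force a zero determinant, so there is no gap.
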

\begin{proof}
We arrange the rows of $A$ so that the two constraints of each vertex $v_i$ occupy two consecutive rows $2i-1$ and $2i$ for $i = 1, \ldots, n$, and that each column $j$ corresponds to the edge $e_j = (u_j, u'_j)$ for $j=1, \ldots, |E|$. 
Then, each entry of $A$ can be described as follows: 
$$
A_{i,j} = \left\{
\begin{array}{ll}\vspace{2mm}
1 & \quad \quad  \left(i \equiv 0 \ (\mathrm{mod}\ 2) \mbox{ and } u_j = v_{i/2} \right) \mbox{ or } 
\, \left(i \equiv 1 \ (\mathrm{mod}\ 2) \mbox{ and } u'_j = v_{(i+1)/2} \right)
\\ \vspace{2mm}
-1& \quad \quad  \left(i \equiv 1 \ (\mathrm{mod}\ 2) \mbox{ and } u_j = v_{(i+1)/2} \right) \mbox{ or } 
\, \left(i \equiv 0 \ (\mathrm{mod}\ 2) \mbox{ and } u'_j = v_{i/2} \right)
\\ \vspace{2mm}
0 & \quad \quad \mbox{otherwise} \,.\vspace{-2mm}
\end{array}
\right. 
$$
To prove that $A$ is a totally unimodular matrix, we make use of the following theorem.
\begin{theorem}[Ghouila-Houri Characterization \cite{Ghouila-Houri62}] An integral $m\times n$ matrix $A$ is a totally unimodular matrix if and only if, for any non-empty subset of rows, namely $R$, there exists a disjoint partition of $R$ into $R_1$ and $R_2$, such that the following is true. 
\begin{equation}\label{eq:ghouila}
\sum_{i \in R_1} A_{i,j}- \sum_{i \in R_2} A_{i,j} \in \{0, 1, -1\} \quad\mbox{  for  }\; j = 1, 2, \ldots, n \, .
\end{equation}
\end{theorem}
Here, for each non-empty subset $R \subseteq [2n]$, we explicitly define $R_1$ and $R_2$ according to the following three conditions. (1) If both $2i-1$ and $2i$ are in $R$, put both of them in $R_1$. 
(2) If only $2i-1$ is in $R$, then put $2i-1$ in $R_1$. (3) If only $2i$ is in $R$, then put $2i$ in $R_2$.

Consider column $j$ corresponding to $e_j = (v_r, v_{r'})$. This column has four non-zero entries:
$$
A_{2r-1, j} = -1, \quad A_{2r, j} = 1, \quad A_{2r'-1, j} = 1, \quad A_{2r', j} = -1 \,.
$$
If both $2r-1$ and $2r$ appear in $R$, or both of them are not in $R$, clearly Equation \ref{eq:ghouila} holds (similarly for $2r'-1$ and $2r'$). Thus, assume that exactly one of two rows $2r-1$ and $2r$, and exactly one of the two rows $2r'-1$ and $2r'$, are in $R$. It is not hard to see that if the corresponding entries $A_{i,j}$'s in these rows have the same sign, then one row ends up in $R_1$ and the other row ends up in $R_2$. If the entries have different signs, then both rows end up in the same set $R_1$ or $R_2$. In both of these cases, the sum in Equation \ref{eq:ghouila} becomes zero. Hence, the proof is complete. 
\end{proof}

\section{Algorithms with Sublinear Sample Complexity} \label{sec:upperbounds}
In this section, we provide sublinear sample complexity algorithms for testing bigness, and testing monotonicity of distributions over different poset domains.
See Section~\ref{subsection:upperbounds} for proof overviews.

\subsection{An Algorithm for Bigness Testing} \label{sec:bigness-ub}

We give an algorithm for the bigness testing problem that requires a sublinear number of samples. For testing bigness, all the domain elements must be at least a threshold $T$. The high level idea is to learn the {\em histogram} of the distribution
use a result from \cite{ValiantV17}. Then given the histogram, if the weight of the elements that are below the threshold is less than $\Theta(\epsilon)$, then we can accept the distribution, otherwise we reject. 

First, we define the histogram of a distribution. 
\begin{defn}
For a distribution $p$, we define $h_p:(0,1]\rightarrow \mathbb{N} \cup \{0\}$ to be the histogram of $p$ if and only if for all $x \in (0,1)$, $h(x)$ is the number of domain element $i$ such that $p(i)$ is equal to $x$. 
\end{defn}
Let $\pi:[n] \rightarrow [n]$ be a permutation of the domain elements. We define $p^{(\pi)}$ to be the permutation of $p$ according to $\pi$ such that for all domain element $i$, $p^{(\pi)}(i)$ is equal to $p(\pi(i))$. Based on the definition, it is not hard to see  permutation does not change the number of domain element with a certain probability, so $h(p)$ and $h(p^{(\pi)})$ are the same. Hence, when we learn the histogram of $p$,  we can claim that we learn $p$ {\em up to a permutation}. 

For learning, we will use a result from \cite{ValiantV17} for learning discrete distributions, up to a permutation of the domain elements. In Theorem 1.11 of \cite{ValiantV17}, combined with Fact 1 of \cite{ValiantV16}, authors provided the following theorem:

\begin{theorem}[\cite{ValiantV17, ValiantV16}]\label{thm:vv-ub}
There exists an algorithm that, given $O\left(\frac{n}{\epsilon^2 \log n} \right)$ i.i.d.~samples from an unknown distribution $p$, outputs an explicit description of a distribution, namely $q$, such that there exists a permutation $\pi:[n]\rightarrow[n]$ where $\sum_{i\in [n]} |p(i) - q(\pi(i))| \leq \epsilon$ with success probability $2/3$.
\end{theorem}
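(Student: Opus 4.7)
The plan is to learn the distribution via its \emph{profile} (or fingerprint), since the sorted-$\ell_1$ objective is invariant under relabeling. First I would Poissonize the sample size and compute, for each integer $k \geq 1$, the count $\varphi_k$ of domain elements that appear exactly $k$ times in the sample. Under Poissonization, the $\varphi_k$'s are independent Poisson random variables with $\E[\varphi_k] = \sum_{i=1}^n \frac{(s\,p(i))^k e^{-s\,p(i)}}{k!}$, which is a linear functional of the unknown histogram $h_p$. The fingerprint is a sufficient statistic for any symmetric property of $p$, and in particular determines $p$ up to relabeling.

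Second, I would split the domain into a ``heavy'' regime and a ``light'' regime at threshold $\tau = \Theta(\log n / s)$. For probability masses above $\tau$, a Chernoff bound guarantees that the empirical frequency $\hat p(i) = \varphi_i / s$ is within a multiplicative $(1 \pm \epsilon)$ factor of $p(i)$, so these masses can simply be read off the sample. For probability masses below $\tau$, I would reconstruct the portion of the histogram via a linear program: search over candidate histograms $\hat h$ supported on a geometric grid in $(0, \tau]$ that (i) have total mass $\sum_x \hat h(x)\, x$ equal to the empirical weight placed by light elements, and (ii) whose induced expected fingerprint matches the observed $\varphi_k$'s, for $k = 0, 1, \ldots, K$, within statistical tolerance $O(\sqrt{\varphi_k + 1})$, where $K = \Theta(\log n)$. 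The output distribution $q$ assigns the merged heavy-plus-light histogram to the domain via any arbitrary labeling.

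The main obstacle is the analytic statement that \emph{any} two histograms whose low-mass portions match the same noisy fingerprint within the LP tolerance must be within sorted-$\ell_1$ distance $\epsilon$. This is handled through the dual polynomial/Chebyshev approximation framework used throughout the excerpt: if two histograms $h_1, h_2$ on $(0, \tau]$ disagree by more than $\epsilon$ in sorted-$\ell_1$, then by LP duality there is a degree-$K$ polynomial $P$ with $|P| \leq 1$ on $[0, s\tau] = [0, O(\log n)]$ that separates their $K$-truncated moment vectors by $\Omega(\epsilon)$; however, matching $K = \Theta(\log n)$ moments on an interval of length $O(\log n)$ forces the two expected fingerprints to coincide on coordinates $k \leq K$, contradicting the LP's feasibility. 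The Chebyshev approximation bound on $1/x$ (as quoted in Lemma~\ref{lem:optLP1}) controls the quality of this separation and is precisely where the $\log n$ savings over naive empirical learning originates: in the light regime, the Poissonized fingerprints aggregate information across all rare elements simultaneously, and matching $\Theta(\log n)$ moments is enough to pin down the unordered low-mass portion of $p$.

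Finally, I would verify via standard Poisson concentration that the sampled fingerprint is within the required tolerance of its expectation with probability $2/3$, and check that the resulting $q$ satisfies $\sum_i |p(i) - q(\pi(i))| \leq \epsilon$ for the permutation $\pi$ that aligns $q$ with $p$ in sorted order; the heavy contribution is controlled by Chernoff, and the light contribution by the LP guarantee above.
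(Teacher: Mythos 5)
The paper does not prove this statement: it is imported verbatim from the literature (Theorem~1.11 of \cite{ValiantV17} combined with Fact~1 of \cite{ValiantV16}), so there is no in-paper argument to compare yours against. Your sketch does reproduce the correct high-level architecture of the cited works --- Poissonize, pass to the fingerprint $(\varphi_k)_k$, read off heavy elements empirically above a threshold $\tau=\Theta(\log n/s)$, and recover the light part of the histogram by a linear program that matches the observed fingerprint within statistical tolerance --- and that is genuinely the right skeleton.

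The gap is in the one step that carries all the content of the theorem: the claim that any two histograms feasible for the LP must be within sorted-$\ell_1$ distance $\epsilon$ of each other. As written, your duality argument is both backwards and supported by the wrong approximation-theoretic fact. First, $\E[\varphi_k]=\sum_i (sp_i)^k e^{-sp_i}/k!$ is a Poisson mixture, not a moment vector, so ``matching $K$ moments forces the expected fingerprints to coincide'' is the direction used in the paper's \emph{lower} bound (via Lemma~6 of \cite{wu2015chebyshev}); the upper bound needs the converse, namely that fingerprint proximity forces sorted-$\ell_1$ proximity, and LP duality reduces this to approximating a $1$-Lipschitz earthmover witness (not $1/x$) by low-degree polynomials on $[0,O(\log n)]$. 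The bound you cite from Lemma~\ref{lem:optLP1} --- the Chebyshev error of approximating $1/x$ on $[1+\nu,\lambda]$ --- is the tool for the bigness/support-size lower bound and does not yield the earthmover scheme needed here; the actual argument in \cite{ValiantV17} constructs explicit ``Chebyshev bump'' schemes and then converts relative earthmover distance to sorted-$\ell_1$ via Fact~1 of \cite{ValiantV16}. Without that construction (or an equivalent local-moment-matching argument), the quantitative claim that $K=\Theta(\log n)$ matched fingerprint coordinates pin down the light histogram to sorted-$\ell_1$ error $\epsilon$ with $O(n/(\epsilon^2\log n))$ samples is asserted rather than proven.
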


This theorem implies the following upper bound for bigness testing.

\begin{algorithm}[t]\label{alg:bigness}
\caption{Algorithm for Bigness Testing.}
\begin{algorithmic}[1]
\Procedure{Bigness-Test}{$\epsilon$, sample access to $p$}
 	\State{$\epsilon' \gets \epsilon/3$}
 	\State{$\mathcal{S} \gets$ Draw $O(\frac{n}{\epsilon'^2 \log n})$ samples from $p$}
    \State{$q\gets$ Learn $p$ (up to a permutation over $[n]$) via Theorem~\ref{thm:vv-ub} with error parameter $\epsilon'$ using samples in $\mathcal{S}$}
    \If{$d_{TV}(q,\Big(n,T)) \leq \epsilon'$}
		{\textbf{Return} \accept} 
    \Else 
    	{\textbf{Return} \reject}
    \EndIf
\EndProcedure
\end{algorithmic}
\end{algorithm}

\begin{corollary}
For bigness threshold $T \leq 1/n$, there exists an algorithm that distinguishes whether a distribution $p$ is $T$-big or $\epsilon$-far from $T$-big with success probability $2/3$ using $O(\frac{n}{\epsilon^2 \log n})$ i.i.d.~samples from $p$.
\end{corollary}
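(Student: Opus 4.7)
The plan is to verify the correctness of Algorithm~\ref{alg:bigness} under the parameter choice $\epsilon' = \epsilon/3$: that it accepts every $T$-big distribution and rejects every distribution that is $\epsilon$-far from $T$-big, each with probability at least $2/3$. The sample count is immediate since the only sampling step is one invocation of Theorem~\ref{thm:vv-ub} with accuracy $\epsilon'$, consuming $O(n/(\epsilon'^2 \log n)) = O(n/(\epsilon^2 \log n))$ samples.

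The key observation I would exploit is that $T$-bigness is a symmetric property: permuting the labels of the domain does not change whether a distribution is $T$-big, and hence does not change its distance to $\Big(n,T)$. Thus for any permutation $\pi$,
\begin{equation*}
d_{TV}(q, \Big(n,T)) \;=\; d_{TV}(q^{(\pi)}, \Big(n,T)),
\end{equation*}
so the value the algorithm compares to $\epsilon'$ is oblivious to the (unknown) permutation relating $q$ to $p$. I would then condition on the event that the learning procedure succeeds (probability at least $2/3$) and let $\pi$ be the permutation guaranteed by Theorem~\ref{thm:vv-ub}, so that $d_{TV}(p, q^{(\pi^{-1})}) \leq \epsilon'/2 = \epsilon/6$ (the theorem provides an $\ell_1$ bound of $\epsilon'$, which is twice the total variation distance). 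The triangle inequality for distance to a set then yields
\begin{equation*}
\bigl| d_{TV}(q, \Big(n,T)) - d_{TV}(p, \Big(n,T)) \bigr| \;\leq\; d_{TV}(p, q^{(\pi^{-1})}) \;\leq\; \epsilon/6.
\end{equation*}

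With this bound, completeness and soundness follow case by case. If $p$ is $T$-big, then $d_{TV}(p, \Big(n,T)) = 0$ and so $d_{TV}(q, \Big(n,T)) \leq \epsilon/6 < \epsilon/3 = \epsilon'$, and the algorithm accepts. If $p$ is $\epsilon$-far from $T$-big, then $d_{TV}(q, \Big(n,T)) \geq \epsilon - \epsilon/6 = 5\epsilon/6 > \epsilon/3 = \epsilon'$, and the algorithm rejects. A routine final check is that the test on line 5 is computable from the explicit $q$: using $T \leq 1/n$, the optimal repair raises each element with $q(i)<T$ up to $T$ and pays for the deficit by trimming elements of $q$ exceeding $T$ (enough slack always exists since $nT \leq 1$), so $d_{TV}(q, \Big(n,T)) = \sum_{i:q(i)<T}(T - q(i))$, which can be read off from $q$. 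There is no real ``hard part'' here; the only subtlety is recognizing that symmetry of bigness is exactly what lets us dispense with knowing $\pi$ and apply the learning result of \cite{ValiantV17} as a black box.
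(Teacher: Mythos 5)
Your proof is correct and follows essentially the same route as the paper's: learn $p$ up to a permutation via Theorem~\ref{thm:vv-ub} with accuracy $\epsilon' = \epsilon/3$, then use the triangle inequality for distance to $\Big(n,T)$ to separate the two cases. Your version is in fact slightly more careful than the paper's, since you make explicit both the permutation-invariance of distance to bigness (which lets the algorithm ignore $\pi$) and the closed-form computability of $d_{TV}(q,\Big(n,T))$ using $T \leq 1/n$, points the paper leaves implicit.
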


\begin{proof}
We refer to Algorithm~\ref{alg:bigness} for the outline of our procedure.
Let $q$ denote the distribution outputted by the ``learner" as promised by Theorem~\ref{thm:vv-ub} with distance parameter $\epsilon' = \epsilon/3$. Let $\pi$ be the permutation guaranteed by Theorem~\ref{thm:vv-ub}. We define $q'$ be the distribution obtained by permuting the elements of $q$ according to the associated permutation such that for each domain element $i$, let $q'(i) = q(\pi(i))$. Hence, with probability at least 2/3, $d_{TV}(p,q')$ is at most $\leq \epsilon'$. 
Note that $\pi$ is not known to the algorithm, but used for the analysis.
\\
Now, we have the following two cases: If $p$ is $T$-big, then $$d_{TV}(q',\Big(n,T)) \leq d_{TV}(q',p) \leq \epsilon' = \epsilon/3.$$ On the other hand, if $p$ is $\epsilon$-far from $T$-big, then $$d_{TV}(q',\Big(n,T)) \geq d_{TV}(p,\Big(n,T))-d_{TV}(p, q') \geq \epsilon - \epsilon' \geq 2\epsilon/3.$$ That is, $q$ offers us a condition for $T$-bigness testing by simply measuring its distance to $T$-bigness (the \textbf{if} condition of Algorithm~\ref{alg:bigness}). Therefore, Algorithm~\ref{alg:bigness} outputs the correct answer with probability at least 2/3. Note that learning $p$ using parameter $\epsilon' = \Theta(\epsilon)$ does not change the asymptotic sample complexity, so the proof is complete. 
\end{proof}

\subsection{An Algorithm for Testing Monotonicity on Matchings} \label{sec:matching-ub}
We  give a sublinear time algorithm for testing monotonicity on matchings. 
Similar to the previous section, we use a result from \cite{ValiantV17} for learning  the {\em distribution histogram} of a {\em pair of distributions}. First we employ the following definitions (see also Definition 5.2 and Definition 5.4 of \cite{ValiantV17}). A {\em distribution histogram} of a pair of distributions is a function that counts the number of elements with a given probability mass $x$ in the distribution $p_1$ and $y$ in the distribution $p_2$. More formally, we have the following definition: 

\begin{defn}[\cite{ValiantV17}]
For a pair of distributions $p_1$ and $p_2$, we say $h_{p_1,p_2}:[0, 1]^2\setminus\{(0,0)\}\rightarrow\mathbb{N}\cup\{0\}$ is the {\em distribution histogram} of $p_1$ and $p_2$ if and only if for any  $(x,y)$ in the domain:  $h_{p_1,p_2}(x,y) = |\{a: p_1(a) = x, p_2(a) = y\}|$.
\end{defn}

We will use this two-dimensional histogram to indicate a histogram of a distribution over a matching of size $n$: Let $p_1$ and $p_2$ be the two distributions that $p$ imposes on the  top and the bottom vertices in the matching respectively. Without loss of generality assume the edges in the matching connects the $i$-th vertex in the bottom to the $i$-th vertex in the top. Note that $h_{p_1,p_2}(x,y)$ counts the number of domain elements $a \in [n]$ such that  $p_1(a) = x$ and $p_2(a) = y$.  Hence, $\int_{x=0}^1 \int_{y=0}^1 h_{p_1,p_2}(x,y) dy \, dx$ is the number of matched pairs of vertices with at least one non-zero probability vertex. Since the sum of probabilities according to $p_1$ is one, we have $\int_{x=0}^1 \int_{y=0}^1 x \cdot h(x,y) = 1$. This is similarly true for $p_2$: $\int_{x=0}^1 \int_{y=0}^1 y \cdot h(x,y) = 1$. 
\\
Now, we define the distance between two histograms of two distributions: $h$ and $g$. 
At a high level, the distance between two histograms is the minimum cost one needs to pay to ``transform" $h$ to $g$.
In particular, we transform one histogram to another by moving mass from one point to another: By moving mass $c$ from $(x,y)$ to $(x',y')$, 
we obtain another histogram $h'$, such that $h'(x,y) = h(x,y) - c$, $h'(x',y') = h(x,y) + c$ and for all
other  points in $[0,1]^2$, $h$ and $h'$ are equal. 
The cost of this move is $c\cdot(|x - x'| + |y - y'|)$. More formally, we have the following definition.

\begin{defn} [\cite{ValiantV17}] \label{def:w-dist}
For a pair of functions $h, g:[0,1]^2 \setminus \{(0,0)\} \rightarrow \mathbb{N} \cup \{0\}$, we define the {\em distance notation $W(h,g)$} as the minimum cost over all mass moving schemes with finitely many steps for turning $h$ into $g$, where the cost for moving value $c > 0$ from point $(x,y)$ to $(x',y')$ is $c(|x-x'|+|y-y'|)$. Note that we assume that $\sum_{x,y} h(x,y) = \sum_{x,y} g(x,y)$, where extra value at point $(0,0)$ on $h$ or $g$ may be added to ensure this equality.
\end{defn}


Let $p^{(\pi)}$ be the {\em permuted distribution} of $p$ according to the permutation $\pi$ of $[n]$ such that for each domain element $i$, ${p'}_1^{(\pi)}(i) = p'_1(\pi(i))$. Note that as long as we permute $p_1$ and $p_2$ with the {\em same} permutation, the distribution histogram $h_{p_1, p_2}$ and $h_{p_1^{(\pi)}, p_2^{(\pi)}}$ are the same. Moreover, given $h_{p_1, p_2}$ one can construct $q_1$ and $q_2$ 
such that there exists a permutation $\pi$ for which $q_1$ and $q_2$ are the permuted versions of $p_1$ and $p_2$ according to $\pi$. 
\\
We relate the distance $W$ to the total variation distance in the following Lemma. In particular, the distance $W$ between two distribution histograms $h_{p_1,p_2}$, $h_{p'_1,p'_2}$ defined according to two pairs of distributions $(p_1, p_2)$, $(p'_1, p'_2)$ upper bounds the $\ell_1$-distance up to a permutation of the labels of the domain elements.
\\
\begin{lemma}\label{lem:WgeqTV}
 Let functions $h_{p_1,p_2}$, $h_{p'_1,p'_2}$ be defined according to two pairs of probability vectors $(p_1, p_2)$, $(p'_1, p'_2)$. There exists a permutation $\pi$ of $[n]$ such that 
 $$W(h_{p_1,p_2}, h_{p'_1,p'_2}) \geq \|p_1 - p'^{(\pi)}_1\|_1 + \|p_2 - p'^{(\pi)}_2\|_1.$$
\end{lemma}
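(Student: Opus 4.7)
The plan is to reinterpret the distance $W(h_{p_1,p_2}, h_{p'_1,p'_2})$ as the cost of a discrete transportation problem whose sources and sinks are exactly the indices of $[n]$, and then read off the permutation $\pi$ directly from an integral optimal transport plan. First, I would extend both histograms to balance their total mass by placing the appropriate number of dummy units at the point $(0,0)$ (per Definition~\ref{def:w-dist}); after this step both $h_{p_1,p_2}$ and $h_{p'_1,p'_2}$ are integer-valued functions with total mass exactly $n$, corresponding to the $n$ domain elements of $p$ and $p'$ respectively (including indices $a$ with $p_1(a)=p_2(a)=0$). Any mass-moving scheme of finite-step type as in Definition~\ref{def:w-dist} realizes a non-negative transportation plan $\gamma$ on the product of source and target support sets whose row/column marginals are the two histograms and whose cost is $\sum \gamma(\cdot,\cdot)\,(\lvert x-x'\rvert+\lvert y-y'\rvert)$; conversely every such $\gamma$ can be decomposed into a finite sequence of moves. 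Hence $W$ is precisely the optimum of a discrete transportation LP with integer supplies/demands and $\ell_1$ cost on $\mathbb{R}^2$.

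Next I would invoke the standard integrality property of the transportation polytope: since the constraint matrix is totally unimodular and the right-hand sides are integers, an \emph{integral} optimal transport plan $\gamma^*$ exists. Each unit of mass in $\gamma^*$ sends one source index $a\in[n]$ (located at $(p_1(a),p_2(a))$) to exactly one target index $b\in[n]$ (located at $(p'_1(b),p'_2(b))$). Because the row marginal at each source location equals the number of $a$'s there and the column marginal at each target location equals the number of $b$'s there, this yields a bijection $\pi:[n]\to[n]$ with $\pi(a)=b$ whenever the unit associated with $a$ is shipped to $b$.

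Finally, I would compute the cost of $\gamma^*$ in two ways. By definition of the transport cost and by construction of $\pi$,
\[
W(h_{p_1,p_2},h_{p'_1,p'_2}) \;=\; \sum_{a\in[n]} \bigl(\lvert p_1(a)-p'_1(\pi(a))\rvert + \lvert p_2(a)-p'_2(\pi(a))\rvert\bigr),
\]
which, recalling that $p'^{(\pi)}_j(a)=p'_j(\pi(a))$, is exactly $\|p_1-p'^{(\pi)}_1\|_1+\|p_2-p'^{(\pi)}_2\|_1$. Since $W$ equals (and hence dominates) this quantity, the claimed inequality follows.

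\textbf{Main obstacle.} The delicate point is justifying the reduction to an integral plan. Definition~\ref{def:w-dist} permits finitely many fractional moves, so a priori mass at a source location could be split among several destinations and the resulting ``alignment'' would not be a permutation. I would handle this by explicitly reformulating $W$ as the transportation LP above (checking that finite-step mass-moving schemes and transport plans have the same minimum cost), then applying total unimodularity of the node-edge incidence matrix of the underlying bipartite graph, which guarantees an integral vertex optimum. A secondary subtlety is bookkeeping at $(0,0)$: the dummy units correspond to indices with zero marginals on one or both sides and must be matched to ``real'' indices when the two distributions have differing supports; since the $\ell_1$ contribution of any such match is exactly $\lvert p_j(a)\rvert$ or $\lvert p'_j(\pi(a))\rvert$, this causes no issue in the cost calculation.
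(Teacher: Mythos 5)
Your proposal is correct and follows essentially the same route as the paper's proof: both reduce the optimal mass-moving scheme to an \emph{integral} one and then read the permutation $\pi$ off the resulting integral alignment of indices, with the $(0,0)$ padding handled the same way. The only difference is packaging --- you invoke total unimodularity of the transportation polytope where the paper establishes integrality by hand via a cycle-canceling argument on the bipartite source--sink graph, and you obtain the cost identity directly from the integral plan where the paper telescopes per-step $\ell_1$ changes using the triangle inequality.
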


\begin{proof}
According to the definition of the distance, $W$, there exists a moving scheme consisting of a sequence of $R$ steps, denoted by $\langle (c_r, (x_r, y_r), (x'_r, y'_r))\rangle_{r\in [R]}$ (with $c_r > 0$), describing the changes that eventually turn $h_{p_1,p_2}$ into $h_{p'_1,p'_2}$ for which we move the mass of $c_r$ from the source $(x_r, y_r)$ to sink $(x'_r, y'_r)$ at step $r$. 
We claim that if the scheme has minimum cost, $W(h_{p_1,p_2}, h_{p'_1,p'_2})$, without loss of generally, we may make the following assumptions about the scheme: (1)  There are no two steps $r_1$ and $r_2$ such that $(x'_{r_1},y'_{r_1})$ is the same as $(x_{r_2},y_{r_2})$.
(2) All the $c_r$'s are positive integers.
\\
To see why (1) is true, assume otherwise; if $r_1 = r_2$, then $(x'_{r_1},y'_{r_1}) = (x_{r_2},y_{r_2})$ means that the source and the sink in step $r_1$ is the same, so no mass is actually moved. Hence, we can just remove this step without changing the scheme. if $r_1 \neq r_2$, then $(x'_{r_1},y'_{r_1}) = (x_{r_2},y_{r_2})$ means that mass of quantity $\min(c_{r_1}, c_{r_2})$ is first moved from $(x_{r_1},y_{r_1})$ to $(x'_{r_1},y'_{r_1})$, and then moved from $(x'_{r_1},y'_{r_1})$ to $(x'_{r_2},y'_{r_2})$. Clearly, one can move the same quantity of mass from $(x_{r_1},y_{r_1})$ to $(x'_{r_2},y'_{r_2})$ directly with no larger cost, making one of the steps $r_1$ or $r_2$ vacuous. 
\\
Given (1), we now show that (2) also holds: Note that given (1), each point $(x,y)$ may appear in the steps as either a source or a sink, but not both. Moreover, the order of the steps does not matter, since the source always has the capacity for providing the mass. If there are several steps that move mass between the same source and the same sink, one can replace all of them with one step moving the total quantity of mass moved between them. Now, we can assume between each source and each sink there is a well defined quantity indicating how much mass we moved from the source to sink. This fact helps us to form a graph where the vertices are the sources and the sinks which appeared in the scheme. We put a directed edge from a source to a sink if we moved a non-integer mass from the source to the sink. We assign a weight to the edge which is the fractional part of the mass we moved from the source to the sink. We propose the following process for changing the steps for which each change removes at least one edge from the graph. We keep repeating the process  until no edge remains to assure that all $c_r$'s are integers. 
\\
Remove sources or sinks with no edge. Clearly, the graph is bipartite, and all the edges are from sources to sinks.  
Since $h_{p_1,p_2}$ and $h_{p'_1, p'_2}$ are integer, the final mass at each source and sink will eventually be an integer. Hence, each source has an out-degree of at least two and each sink has an in-degree of at least two. Therefore, the graph has an undirected cycle with an even length. Let $S$ and $T$ be the sets of the sources and the sinks involved in the cycle respectively. Let $E_1$ and $E_2$ be a partition of the edges in the cycle such that every other edge is in the same set. Clearly, each source (and sink) has exactly one edge in $E_1$ and one edge in $E_2$. As we define before the cost of moving one unit of mass via an edge from $(x,y)$ to $(x',y')$ is $|x - x'| + |y - y'|$. We define the cost of $E_1$ (and $E_2$) to be the total cost of edges in $E_1$ (and $E_2$). Without loss of generality assume cost of $E_1$ is not greater than the cost of $E_2$. Let $c^*$ be the minimum weight of edges in $E_2$. We modify the steps such that each step with a corresponding edge is $E_2$ moves $c^*$ less mass, and each steps with a corresponding edge in $E_2$ moves $c^*$ more mass. Clearly, this process does not increase the total cost of the scheme. However, it makes the fractional part of at least one step equal to zero. We repeat this process until no such step exists which concludes the proof for claiming (2).

Let $h^{(0)}, h^{(1)}, \ldots, h^{(R)}$ be the series of the distribution histograms which is generated during the mass moving scheme after each step. $h^{(0)}$ is the distribution histogram we start with, $h_{p_1, p_2}$, and $h^{(R)}$ is the final distribution histogram $h_{p'_1, p'_2}$.
Now, we create a sequence of pairs of \emph{vectors} $p^{(r)}_1, p^{(r)}_2 : [n] \rightarrow [0, 1]$ such that $h^{(r)} = h_{p^{(r)}_1,p^{(r)}_2}$ (under the same definition of distribution histogram, relaxed to allow non-distributions $p^{(r)}_1, p^{(r)}_2$). We start off with $p^{(0)}_1$ and $p^{(0)}_2$ being $p_1$ and $p_2$. Given $p^{(r-1)}_1, p^{(r-1)}_2$, we obtain $p^{(r)}_1,p^{(r)}_2$ as follows. 

Consider step $r$ described as $(c_r, (x_r,y_r), (x'_r,y'_r))$ with an integer $c_r$. Inductively, assume $h^{(r-1)} = h_{p^{(r-1)}_1,p^{(r-1)}_2}$ which implies that $p^{(r-1)}_1$ and $p^{(r-1)}_2$ contain at least $c_r \leq h^{(r-1)}(x_r, y_r)$ entries $i$ with $p^{(r-1)}_1(i)=x_r$ and $p^{(r-1)}_2(i)=y_r$. To apply step $r$, we pick an arbitrary set $I_r$ of $c_r$ many such entries, then modify the entries $p^{(r-1)}_1(i)$ and $p^{(r-1)}_2(i)$ from $x_r$ and $y_r$ to $x'_r$ and $y'_r$ respectively for each $i \in I_r$. That is, $p^{(r)}_1(i)=x'_r$ and $p^{(r)}_2(i)=y'_r$ for $i \in I_r$, and $p^{(r)}_1(i)=p^{(r-1)}_1(i)$ and $p^{(r)}_2(i)=p^{(r-1)}_2(i)$ for $i \notin I_r$. Hence, the \emph{$\ell_1$-distance} incurred by step $r$ becomes: 
\begin{align*}\|p^{(r-1)}_1-p^r_1\|_1 + \|p^{(r-1)}_2 - p^r_2\|_1 & = \sum_{i\in[n]}|p^{(r-1)}_1(i)-p^{(r)}_1(i)| + \sum_{i\in[n]}|p^{(r-1)}_2(i)-p^{(r)}_2(i)|
\\ & = \sum_{i\in I_r}|p^{(r-1)}_1(i)-p^{(r)}_1(i)| + \sum_{i\in I_r}|p^{(r-1)}_2(i)-p^{(r)}_2(i)|
\\ & = c_r |x_r-x'_r| + c_r |y_r-y'_r|\,.
\end{align*}
By summing over all $R$ steps, and applying the triangle inequality, we have:
\begin{align*} \|p_1 - p^R_1\|_1 + \|p_2 - p^R_2\|_1 &\leq \sum_{r\in [R]} \|p^{(r-1)}_1-p^r_1\|_1 + \sum_{r\in [R]} \|p^{(r-1)}_2-p^r_2\|_1 \\&= \sum_{r\in[R]} c_r |x_r-x'_r| + c_r |y_r-y'_r| 
\\ & = W(h^{(0)}, h^{(R)}) = W(h_{p_1,p_2},h_{p^{(R)}_1,p^{(R)}_2})\,.
\end{align*}

Now it remains to show that there exists a permutation $\pi$ that maps the labels of the given distribution $p'_1, p'_2$ to our constructed vectors $p^{(R)}_1, p^{(R)}_2$; namely, $p'^{(\pi)}_1 = p^{(R)}_1$ and $p'^{(\pi)}_2 = p^{(R)}_2$. Indeed, $h_{p'_1,p'_2}$ is the distribution histogram that counts the number of indices $i$ with $p'_1(i)=x$ and $p'_2(i)=y$, so $h_{p'_1,p'_2} = h_{p^R_1,p^R_2}$ implies that for every $(x,y)$, there are also equally many indices $i'$ with $p^R_1(i')=x$ and $p^R_2(i')=y$. Hence, there exists a bijection between their indices that maps $i'$'s to $i$'s and vice versa, concluding the lemma.
\end{proof}

Next, we state the the result of \cite{ValiantV17} to learn the distribution histogram of a pair of distributions. 

\begin{theorem}[Theorem 5.6 of \cite{ValiantV17}]\label{thm:vv-pair-ub}
There exists an algorithm that, given $O\left(\frac{n}{\epsilon^2 \log n} \right)$ i.i.d.~samples each from a pair of unknown distributions $p_1$ and $p_2$, outputs a function $g$ such that $W(h_{p_1,p_2},g) \leq \epsilon$ with success probability $2/3$.
\end{theorem}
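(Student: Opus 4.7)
The theorem is explicitly attributed as Theorem 5.6 of \cite{ValiantV17} and is imported directly rather than re-derived, so the strictly correct ``proof'' is simply to invoke that reference. Nevertheless, if I were to sketch an independent proof, I would generalize the univariate profile-estimator of \cite{ValiantV17} to pairs of distributions as follows.

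First, I would Poissonize: draw $s = \Theta(n/(\epsilon^2 \log n))$ samples from each of $p_1$ and $p_2$ and replace the sample counts by $\Poi(s)$ so that the counts of distinct indices become mutually independent. For each pair of nonnegative integers $(j,k)$, form the \emph{bivariate fingerprint} $F_{j,k}$ equal to the number of indices $i\in[n]$ observed exactly $j$ times in the $p_1$-samples and exactly $k$ times in the $p_2$-samples. By Poissonization each $F_{j,k}$ is a sum of independent Bernoulli random variables, and
$$\E[F_{j,k}] = \int\int h_{p_1,p_2}(x,y)\cdot \Poi(j;sx)\cdot \Poi(k;sy)\, dx\, dy,$$
so $F$ is a (noisy) linear image of the unknown histogram $h_{p_1,p_2}$. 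Standard variance bounds show $|F_{j,k}-\E[F_{j,k}]|$ is small relative to the fingerprint mass in each cell.

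Next, I would set up a linear program, in the spirit of the univariate construction of \cite{ValiantV17}, that searches over nonnegative functions $g$ supported on a fine grid of $[0,1]^2\setminus\{(0,0)\}$ subject to: (i) each marginal of $g$ summing to approximately one, (ii) the expected fingerprint of $g$ matching the observed $F$ within the natural concentration tolerance for every cell $(j,k)$, and (iii) $\int g$ close to $n$. Any feasible $g$ of this LP is output as the estimate.

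The main obstacle, and the crux of the \cite{ValiantV17} analysis, is upper-bounding $W(h_{p_1,p_2}, g)$ given only that both functions produce nearly the same fingerprint. The standard route is LP duality paired with polynomial approximation: for any dual transport witness one approximates the coordinate functions $x$ and $y$ on $[0,1]$ by low-degree polynomials, re-expresses those polynomials as linear combinations of the Poisson pmfs $\Poi(j;sx)$ and $\Poi(k;sy)$ with bounded coefficients, and shows that a large $W$ discrepancy would force a detectable discrepancy between the fingerprints of $h_{p_1,p_2}$ and $g$. The polynomial-approximation error, minimized by Chebyshev-type polynomials exactly as in the proof of Lemma~\ref{lem:optLP1}, is what delivers the $\log n$ improvement in the final sample complexity $s = O(n/(\epsilon^2 \log n))$ as opposed to the trivial $O(n/\epsilon^2)$ learning bound.
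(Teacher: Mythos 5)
You are right that the paper offers no proof of this statement: it is imported verbatim as Theorem 5.6 of \cite{ValiantV17}, so citing that reference is exactly what the paper does and all that is required. Your supplementary sketch of the Poissonized bivariate fingerprint, the feasibility LP, and the duality-plus-Chebyshev-approximation argument is a faithful outline of how \cite{ValiantV17} actually proves it, but it is extra credit rather than something the paper attempts.
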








\begin{algorithm}[t]\label{alg:matching-poset}
\caption{Algorithm for Testing Monotonicity over a Matching poset.}
\begin{algorithmic}[1]
\Procedure{Matching-Tester}{$\epsilon$, sample access to $p$}
    \State{$s \gets O\left(\frac{n}{\epsilon'^2 \log n}\right)$}
    \State{Comment: {generating $s$ samples from $p'$ (half $p$ and half uniform)}}
    \State{$\mathcal{S}\gets\emptyset$}
	\For{$i = 1, \ldots, s$}
    	\If{a (fresh) fair coin-toss comes up head}
    	    \State{Draw a sample from $p$ and add to $\mathcal{S}$}
    	\Else
    	    \State{Draw a uniform random vertex $x_i$ and add to $\mathcal{S}$ (where $x \in \{u,v\}$ and $i \in [n]$)}
        \EndIf
    \EndFor
	\State{$\epsilon' \gets \epsilon/14$}
	\State{$\widetilde{g}\gets$ Apply Theorem~\ref{thm:vv-pair-ub} for $p'$ with error parameter $\epsilon'$ using samples in $\mathcal{S}$}
	\State {$\hat{w}_S \gets $ Approximate total probability mass that $p'$ places on $S$ using $O(1/\epsilon')$ samples}
	\State {$\hat{w}_T \gets 1-\hat{w}_S$}
	\State {$\hat{g} \gets$ Rescale $\widetilde{g}$ to satisfy $\hat g(\hat w_S\cdot x , \hat w_T \cdot y)= \widetilde{g}(x,y)$}
	\State{$g^* \gets$ Compute a function minimizing $W(\hat g,g^*)$ defined according to a monotone distribution $q^*$}
	\If{$W(\hat{g},g^*) \leq 3\epsilon'$}
		\State{\textbf{Return} {\accept.}}
    \Else
    	\State{\textbf{Return} {\reject.}}
	\EndIf
\EndProcedure
\end{algorithmic}
\end{algorithm}

We now prove the upper bound for the monotonicity testing problem over the matching poset.

\begin{theorem}\label{thm:matchings}
For sufficiently small positive constant $\epsilon$, there exists an algorithm that distinguishes whether a distribution $p$ over the vertex set $V=S\cup T$ of a directed matching $M_n$ on $2n$ vertices is monotone or $\epsilon$-far from monotone with success probability $2/3$ using $O(\frac{n}{\epsilon^2 \log n})$ i.i.d.~samples from $p$.
\end{theorem}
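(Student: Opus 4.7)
The plan is to follow Algorithm~\ref{alg:matching-poset}, whose correctness rests on three pillars: (i)~a reduction that ensures both sides of the matching receive comparable probability mass; (ii)~learning the joint distribution histogram of the two sides via Theorem~\ref{thm:vv-pair-ub}; and (iii)~using Lemma~\ref{lem:WgeqTV} to translate closeness in the $W$-distance into closeness in total variation distance against the class of monotone distributions. First I would set $p' := (p + U)/2$, where $U$ is the uniform distribution on $V$. Since $U(u_i) = U(v_i) = 1/(2n)$, $U$ is monotone on $M_n$; and since each per-edge violation transforms as $\max(0, p'(v_i) - p'(u_i)) = \tfrac{1}{2}\max(0, p(v_i) - p(u_i))$, the exact matching-edge characterization of distance to monotonicity gives $d_{TV}(p', \Mon(M_n)) = d_{TV}(p, \Mon(M_n))/2$. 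Crucially, $w_S := \sum_i p'(u_i) \geq 1/4$ and $w_T := \sum_i p'(v_i) \geq 1/4$. Each sample from $p'$ costs one fair coin flip plus at most one draw from $p$, so $s = O(n/(\epsilon'^2 \log n))$ samples from $p'$ use $O(s)$ samples from $p$, where $\epsilon' := \epsilon/14$.

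Next I would apply Theorem~\ref{thm:vv-pair-ub} to the conditional distributions $p'_S(i) := p'(u_i)/w_S$ and $p'_T(i) := p'(v_i)/w_T$ on $[n]$. Because $w_S, w_T \geq 1/4$, by Chernoff we obtain $\Omega(s)$ samples from each conditional with high probability, yielding $\widetilde g$ with $W(\widetilde g, h_{p'_S, p'_T}) \leq \epsilon'$. After estimating $\hat w_S$ within additive error $\epsilon'$ (Chernoff on the same samples) and setting $\hat w_T = 1 - \hat w_S$, the algorithm defines $\hat g$ by the coordinate rescaling $(x,y) \mapsto (\hat w_S x, \hat w_T y)$. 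Writing $h_{p'}(X,Y) := |\{i : p'(u_i) = X, p'(v_i) = Y\}|$ for the \emph{absolute} joint histogram, I would bound $W(\hat g, h_{p'}) \leq 3\epsilon'$ via the triangle inequality against the exact rescaling of $\widetilde g$ by $(w_S, w_T)$: rescaling a $W$-cost by factors at most $1$ only shrinks it, contributing at most $\epsilon'$ from learning error, while the mismatch between $(\hat w_S,\hat w_T)$ and $(w_S,w_T)$ contributes $|\hat w_S - w_S|\sum_i p'_S(i) + |\hat w_T - w_T|\sum_i p'_T(i) = 2\epsilon'$, since each of $p'_S, p'_T$ sums to~$1$.

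The algorithm accepts iff there exists $g^*$ arising from a monotone distribution $q^*$ on $M_n$ with $W(\hat g, g^*) \leq 3\epsilon'$. Completeness is immediate: if $p$ is monotone, so is $p'$, so $h_{p'}$ is itself such a $g^*$ and $W(\hat g, h_{p'}) \leq 3\epsilon'$. For soundness, suppose $p$ is $\epsilon$-far from monotone—so $p'$ is $\epsilon/2$-far—yet some monotone $q^*$ satisfies $W(\hat g, h_{q^*}) \leq 3\epsilon'$. The triangle inequality then yields $W(h_{p'}, h_{q^*}) \leq 6\epsilon'$, and Lemma~\ref{lem:WgeqTV} applied to the subprobability vectors $(p'|_S, p'|_T)$ and $(q^*|_S, q^*|_T)$ produces a permutation $\pi$ of $[n]$ with $\|p' - q^{*(\pi)}\|_1 \leq 6\epsilon'$, i.e., $d_{TV}(p', q^{*(\pi)}) \leq 3\epsilon'$. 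The key observation is that $q^{*(\pi)}$ remains monotone on $M_n$: simultaneously relabeling $(u_i, v_i) \mapsto (u_{\pi(i)}, v_{\pi(i)})$ merely permutes the edges of $M_n$ among themselves. Hence $d_{TV}(p', \Mon(M_n)) \leq 3\epsilon' = 3\epsilon/14 < \epsilon/2$, a contradiction.

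The main technical obstacle is propagating the $W$-distance guarantee through the rescaling step without knowing $w_S, w_T$ exactly; this is resolved by recognizing the rescaling error as a transport cost of repositioning histogram mass by $|w_S - \hat w_S|$ in the $X$-coordinate and $|w_T - \hat w_T|$ in the $Y$-coordinate, which telescopes to $2\epsilon'$ precisely because $p'_S, p'_T$ are probability distributions. A second subtlety is that the Valiant--Valiant learner only recovers the histogram up to a permutation; this is handled by the full permutation symmetry of the matching poset, so that the permuted monotone witness $q^{*(\pi)}$ produced by Lemma~\ref{lem:WgeqTV} is still a monotone distribution on $M_n$, enabling the contradiction in the soundness argument.
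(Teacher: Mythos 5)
Your proposal is correct and follows essentially the same route as the paper: average $p$ with the uniform distribution to balance the two sides, learn the joint histogram of the normalized conditionals via Theorem~\ref{thm:vv-pair-ub}, rescale by estimated weights with a $3\epsilon'$ transport-cost accounting, and decide by comparing against the nearest monotone histogram using Lemma~\ref{lem:WgeqTV}. The only (welcome) refinements are that you use the exact per-edge formula for distance to monotonicity on a matching (losing a factor $2$ rather than the paper's factor $4$) and that you explicitly verify that the permuted witness $q^{*(\pi)}$ is still monotone, a point the paper leaves implicit.
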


\begin{proof}
For clarity, denote the edge set of the graph $G = (V, E)$ with the set of edges $E=\{(u_i, v_i)\}_{i\in [n]}$, and the set of vertices $V = S \cup T$ where $S=\{u_i\}_{i\in [n]}$ and $T=\{v_i\}_{i\in[n]}$. For a distribution $p$ over $V = S\cup T$, let $p_S$ and $p_T$ denote the probability mass $p$ places on elements of $S$ and $T$; note that $p_S$ and $p_T$ are functions on domain $S$ and $T$, but generally not probability distributions.

The outline of our algorithm is given as Procedure $\textsc{Monotonicity-Testing-over-}M_n$ in Algorithm~\ref{alg:matching-poset}. In our algorithm, we hope to invoke Theorem~\ref{thm:vv-pair-ub} by considering the (normalized) $p_S$ and $p_T$ as our $p_1$ and $p_2$, respectively. However, Theorem~\ref{thm:vv-pair-ub} requires roughly the same number of samples from both $p_1$ and $p_2$, while $p_S$ and $p_T$ may have vastly different total probability masses; for instance, it may be costly to try to obtain many samples from $S$.

Before we proceed, by Theorem \ref{thm:dist_to_mon_matching}, it is straightforward to see:
$$\sum_{i\in [n]} \max\{p(u_i)-p(v_i), 0\} \geq d_{TV}(p, \Mon(M_n)) \geq \frac 1 2 \sum_{i\in [n]} \max\{p(u_i)-p(v_i), 0\}.$$


In order to make the probability of the top and the bottom vertices at least a constant, we define an auxiliary probability distribution $p'$ obtained by averaging $p$ with a monotone distribution: $p'(w) = p(w)/2 + 1/(4n)$ where $w \in V$. Clearly, if $p$ is monotone, then $p'$ is monotone too. Also, if $p$ is $\epsilon$-far from monotone, then observe that the distance of $p'$ to monotone is 
\begin{align*}
d_{TV}(p', \Mon(M_n)) 
&\geq \frac 1 2 \sum_{i\in [n]} \max\{p'(u_i)-p'(v_i), 0\} \\
&\geq \frac 1 2 \sum_{i\in [n]} \max\left\{\left(\frac{p(u_i)}{2}+\frac{1}{4n}\right)-\left(\frac{p(v_i)}{2}+\frac{1}{4n}\right), 0\right\}\\
&\geq \frac 1 2 \sum_{i\in [n]} \max\left\{\frac{p(u_i)-p(v_i)}{2}, 0\right\}\geq\frac{1}{4} d_{TV}(p, \Mon(M_n)) \geq \frac \epsilon 4 \, ,
\end{align*} 
which preserves the distance to monotone to a factor of $4$. We can generate samples for $p'$ using asymptotically the same number of samples from $p$: A sample from $p'$ is obtained by drawing a sample from $p$ or drawing a uniform random vertex with probability $1/2$ each (Procedure $\textsc{Sample-from-}p'$ in Algorithm~\ref{alg:matching-poset}); henceforth, we consider the problem of testing $p'$ for monotonicity with distance $\epsilon/4$ instead.

The main benefit for considering the monotonicity testing on $p'$ instead of $p$ is that the total amount of probability masses placed on $S$ and on $T$ are at least $1/4 = \Omega(1)$ each. Hence, it takes $\Theta(s)$ samples from $p$ according to the procedure above to obtain at least $s$ samples from each of $S$ and $T$ with good constant probability; that is, we can create our input for the algorithm in Theorem~\ref{thm:vv-pair-ub} using $\Theta(s)$ i.i.d.~samples from $p$.

Denote by $w_S, w_T$ the total probability masses that $p'$ places on $S$ and $T$, respectively.  Let $p'_S$ and $p'_T$ be the probability function that $p$ assigns to vertices of $S$ and $T$, respectively. Let $\widetilde{p'_S}$ and $\widetilde{p'_T}$ be the distributions over $S$ and $T$ that are obtained by normalizing $p'_S$ and $p'_T$ (separately). More precisely, we have
$$\widetilde{p'_S}(i) = \frac{p'_S(i)}{w_S} = \frac{p'(u_i)}{w_S},  \mbox{ and } \ \widetilde{p'_T}(i) = \frac{p'_T(i)}{w_T} = \frac{p'(v_i)}{w_T} \quad \mbox{ for } i \in [n] \, .$$
Let $\epsilon'=\Theta(\epsilon)$ (to be determined exactly later). Invoking Theorem~\ref{thm:vv-pair-ub} with this parameter, we obtain a function $\widetilde{g}$ where $W(h_{\widetilde{p'_T}, \widetilde{p'_S}}, \widetilde{g}) \leq \epsilon'$ using $O(\frac{n}{\epsilon^2 \log n})$ samples from $p$.

Next, we rescale each dimension of $\widetilde{g}$ back by $w_S$ and $w_T$, thereby obtaining our estimate of $h_{p'_S, p'_T}$.
If we knew $w_S$ and $w_T$ exactly, we would define $g(w_S\cdot x , w_T \cdot y)= \widetilde{g}(x,y)$, and we would have $W(h_{p'_S,p'_T}, g) \leq \epsilon'$. However, we can only estimate $w_S$ and $w_T$ up to an additive error $\epsilon'$ with high constant probability using $O(1/\epsilon^2)$ samples. To this end, let $\hat{w}_S$ be the estimate of $w_S$, and let $\hat{w}_T = 1 - \hat{w}_S$. We define $\hat g$ for which $\hat g(\hat w_S\cdot x , \hat w_T \cdot y)= \widetilde{g}(x,y)$. Below, we show that $\hat g$ is a good estimation of $h_{p'_S, p'_T}$. 

Recall that $W(h_{\widetilde{p'_S}, \widetilde{p'_T}}, \widetilde{g}) \leq \epsilon'$. By definition, there exists a minimum-cost sequence of steps $\langle(c_r, (x_r, y_r), (x'_r, y'_r))\rangle_{r \in [R]}$ for turning $\widetilde{g}$ to $h_{\widetilde{p'_T}, \widetilde{p'_S}}$:
\begin{align*}
W(h_{\widetilde{p'_T}, \widetilde{p'_S}}, \widetilde{g})  & = \sum\limits_{r \in [R]} c_r \left(|x_r - x'_r| + |y_r - y'_r|\right) \leq \epsilon.
\end{align*}
Observe that under the cost function in Definition~\ref{def:w-dist}, we may assume without loss of generality that there are no $r, r'$ such that $(x,y) = (x'_r, y'_r) = (x_{r'}, y_{r'})$ in the moving scheme. Namely, we can instead ``shortcut'' this scheme by moving the value $\min\{c_r,c_{r'}\}$ from $(x_r, y_r)$ to $(x'_{r'}, y'_{r'})$ directly without leaving any extra amount at $(x,y)$ (during step $r$) to pick up later (during step $r'$). In this moving scheme, the value of $h_{p^r_S,p^r_T}$ on any $(x,y)$ must be non-increasing or non-decreasing throughout the steps $r \in [R]$ (since values are only being moved \emph{in}, or only being moved \emph{out}, but not a mixture of both). In particular, this condition implies that the total value of $c_r$'s moving into $(x',y')$ never exceeds the value of $h_{\widetilde{p'_S}, \widetilde{p'_T}}(x',y')$. More formally,
$$
\sum\limits_{r \ \mbox{s.t.} \,x'_r = x', y'_r = y'} c_r \leq h_{\widetilde{p'_S}, \widetilde{p'_T}}(x',y').
$$

Now, we are ready to bound $W(\hat g, h_{p'_S, p'_T})$.
By definition, we have $h_{p'_S, p'_T}(w_S \cdot x, w_T \cdot y)=h_{\widetilde{p'_S}, \widetilde{p'_T}}$ and $\hat g(\hat w_S \cdot x, \hat w_T \cdot y)=\tilde{g}(x,y)$. Thus, any moving scheme that turns $\tilde{g}$ into $h_{\widetilde{p'_S}, \widetilde{p'_T}}$, will also turn $\hat g$ into $ h_{p'_S, p'_T}$. Hence, we can use the same sequence (up to scaling) for moving the mass from $h_{\widetilde{p'_S}, \widetilde{p'_T}}$ to $\tilde g$ to show a bound for $W(\hat g, h_{p'_S, p'_T})$: at step $r \in [R]$, we move the value $c_r$ from $g(\hat w_S \cdot x, \hat w_T \cdot y)$ to $h(w_S \cdot x', w_T \cdot y')$. We establish our bound as follows.
\begin{align*}
W(\hat g, h_{p'_S, p'_T}) & \leq \sum\limits_{r \in [R]} c_r \left(| \hat w_S \cdot x_r - w_S\cdot x'_r| + |\hat w_T \cdot y_r - w_T \cdot y'_r|\right)
\\ 
& = \sum\limits_{r \in [R]} c_r \left(| \hat w_S \cdot x_r - \hat w_S\cdot x'_r + \hat w_S\cdot x'_r - w_S\cdot x'_r| + |\hat w_T \cdot y_r - \hat w_T \cdot y'_r + \hat w_T \cdot y'_r - w_T \cdot y'_r|\right)
\\
& \leq \sum\limits_{r \in [R]} c_r \left(\hat w_S |x_r - x'_r| + w_T|y_r-y'_r| \right)  + \sum\limits_{r \in [R]} c_r  \left(|w_S - \hat w_S|\cdot x'_r + |w_T - \hat w_T| \cdot y'_r\right)
\\
& \leq\left(\sum\limits_{r \in [R]} c_r \left(|x_r - x'_r| +|y_r-y'_r| \right) \right) + \epsilon'\cdot \left( \sum\limits_{r \in [R]} c_r  \left( x'_r + y'_r\right)\right)
\\
& \leq W(\tilde{g}, h_{\widetilde{p'_S}, \widetilde{p'_T}}) +\epsilon' \cdot \left(\int_{x=0}^\infty \int_{y=0}^\infty h_{\widetilde{p'_S}, \widetilde{p'_T}}(x,y)\cdot  (x + y)\,  dy \, dx\right)
\\
& \leq \epsilon' + \epsilon' \cdot \left(\sum_{i} \widetilde{p'_S}(i) + \widetilde{p'_T}(i)\right) = 3 \, \epsilon'.
\end{align*}

Going back to our algorithm, we compute $g^*$: the function minimizing $W(\hat g,g^*)$ that is also defined according to an actual \emph{monotone} probability distribution $q^*$ over $V$.  Observe that if $p'$ is monotone, then $$W(\hat g,g^*) \leq W(\hat g,h_{p'_S,p'_T}) \leq 3\epsilon'$$ due to the optimality assumption above. On the other hand, if $p'$ is $\epsilon/4$-far from monotone, then by choosing $\epsilon' = \epsilon/14$,
\begin{align*}
W(\hat g,g^*) &\geq W(h_{p'_S,p'_T},g^*) - W(h_{p'_S,p'_T}, \hat g) 
\\&\geq \|p',q^{*(\pi)}\|_1 - W(h_{p'_S,p'_T},\hat g) 
\\ &= 2\,d_{TV}(p',q^{*(\pi)}) - W(h_{p'_S,p'_T},\hat g) 
\geq 2(\epsilon/4) - 3\epsilon' = 4\epsilon',
\end{align*}
for some permutation $\pi$ over $[n]$, where $q^{*(\pi)}(u_i) = q^*(u_{\pi(i)})$ and $q^{*(\pi)}(v_i) = q^*(v_{\pi(i)})$, making use of Lemma~\ref{lem:WgeqTV} above. Hence, $g$ provides us with a condition for testing monotonicity over the matching poset $M_n$, as desired.
\end{proof}

\subsection{An Algorithm for Testing Monotonicity on Bounded Degree Bipartite Graphs with Sub-linear Sample Complexity} \label{sec:bipartite-ub}

We give an algorithm which tests monotonicity of a distribution $p$ on a \emph{bipartite} poset $G$ with sample complexity $O\left(\frac{\Delta^3 n}{\epsilon^2 \log n} \right)$ where $\Delta$ denotes an upper bound for the maximum degree over all vertices in $G$. Given sample access to the distribution $p$, we implement a sampling oracle for a certain distribution $p'$ on a \emph{matching} poset $G'$ with $O(\Delta n)$ vertices. This distribution $p'$ is monotone on $G^\prime$ if  $p$ is monotone on $G$, and $p'$ is $\epsilon/(2\Delta)$-far from monotone on $G'$ if $p$ is $\epsilon$-far on $G$. Hence, we apply the algorithm for testing monotonicity on the matching poset $G'$ to test the monotonicity of $p'$, immediately obtaining  the desired sample complexity. We describe the construction of $G'$ and the distribution $p'$ below and show the correctness of our approach in Theorem~\ref{thm:UB_boundedDeg}.

More formally, let $p$ be a distribution over a directed bipartite poset $G = (V=V_B\cup V_T, E)$ where $V_B = \{u_i\}_{i\in[n]}$ and $V_T=\{v_i\}_{i\in[n]}$  are the sets of the bottom and the top vertices, and $E \subseteq V_B \times V_T$ is the set of edges. Let $\Delta$ be an upper bound on the degree of $G$.

\paragraph{The matching poset $G'$.} Based on $G$,
we create a matching $G'= (V'=V'_b\cup V'_t,E')$ over $n' = O(\Delta n)$ vertices according the following procedure. 
Similar to $G$, $V'_b$ is the set of bottom vertices,  $V'_t$ is the set of top vertices, and $E'$ is the set of edges. 

\begin{compactitem}
\item Create $\Delta$ \emph{copy vertices} $w^1, \ldots, w^{\Delta}$ for each vertex $w \in V$.

\item For each edge $e = (u,v) \in E$, match an unmatched pair of vertices $u^i, v^j$ via the \emph{copy edge} $e' = (u^i, v^j)$; place $u^i \in V'_b$, $v^j \in V'_t$ and $e' \in E'$.
\item For all remaining unmatched vertices $w^i$, create a \emph{dummy vertex} $\bar{w}^i$, then match it to $w^i$ via the \emph{dummy edge} $\bar{e}_{w^i} = (\bar{w}^i,w^i)$; place $\bar{w}^i \in V'_b$, $w^i \in V_T'$ and $\bar{e}_{w^i} \in E'$. Note that the dummy vertex is always put in the bottom set.
\end{compactitem}
Note that the second step above is always possible since there are at most $\Delta$ edges incident to a vertex. 

\paragraph{Distribution $p'$ over $G'$.} The distribution $p'$ over the poset $G'$ is defined as follows. For each copy vertex $w^i$, set $p'(w^i) = p(w)/\Delta$. For each dummy vertex $\bar{w}^i$, set $p'(\bar{w}^i)=0$.
One can generate a sample from $p'$, by drawing a sample $w$ in $V$ according to $p$, and drawing $i$ uniformly at random from $[\Delta]$: The $i$-th copy of $w$, $w^i$, is a sample drawn from $p'$.

In the following lemma, we show that the distance of $p'$ to being monotone is closely related to the distance of $p$ to monotonicity. 
\begin{lemma} \label{lem:dist-p-p-prime}
Let $p$ and $p'$ be two distributions over $G$ and $G'$ as described above. If $p$ is monotone, then $p'$ is monotone. If $p$ is $\epsilon$-far from being monotone, then $p'$ is $(\epsilon/2\Delta)$-far from being monotone. 
\end{lemma}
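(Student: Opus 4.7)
Plan: The first claim — that $p$ monotone implies $p'$ monotone — follows by direct edge-by-edge verification in $G'$. For a copy edge $(u^i,v^j)$ arising from $(u,v)\in E$, monotonicity of $p$ gives $p'(u^i)=p(u)/\Delta\leq p(v)/\Delta=p'(v^j)$; for a dummy edge $(\bar w^i,w^i)$, $p'(\bar w^i)=0\leq p(w)/\Delta=p'(w^i)$.

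For the harder direction, I will invoke Theorem~\ref{thm:dist_to_mon_matching} on both $G$ and $G'$ to sidestep the need to construct an explicit monotone coupling. Since $G$ is bipartite with edges only $V_B\to V_T$, every directed path in $G$ has length one, so $TC(G)=G$; analogously $TC(G')=G'$ since $G'$ is a matching. Let $W^*(H)$ denote the weight of a max-weight matching in $TC(H)$ under edge weights $\max(0,p(x)-p(y))$ (with $p'$ in place of $p$ when $H=G'$). Theorem~\ref{thm:dist_to_mon_matching} then yields $d_{TV}(p,\Mon(G))\leq W^*(G)$ and $d_{TV}(p',\Mon(G'))\geq W^*(G')/2$. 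Next I will compute these quantities explicitly. Because $G'$ is itself a matching, its max-weight matching is its entire edge set; copy edges contribute $\max(0,p(u)-p(v))/\Delta$ each, while dummy edges contribute $0$ because $p'(\bar w^i)=0\leq p'(w^i)$. Hence
\[W^*(G')=\frac{1}{\Delta}\sum_{(u,v)\in E}\max(0,p(u)-p(v)).\]
On the other hand, any matching in $G$ is a subset of $E$, so $W^*(G)\leq\sum_{(u,v)\in E}\max(0,p(u)-p(v))=\Delta\cdot W^*(G')$. Chaining these estimates,
\[d_{TV}(p,\Mon(G))\leq W^*(G)\leq\Delta\cdot W^*(G')\leq 2\Delta\cdot d_{TV}(p',\Mon(G')),\]
which rearranges to $d_{TV}(p',\Mon(G'))\geq d_{TV}(p,\Mon(G))/(2\Delta)$, proving the contrapositive of the claim.

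The main subtlety is picking the right proof strategy. A first impulse would be to build a monotone distribution $q$ on $G$ by aggregating an optimal monotone $q'$ on $G'$ (for instance, $q(w)=\sum_i q'(w^i)$), but this need not yield either a valid probability distribution on $V$ or a monotone one, because the arbitrary pairing of copies across different edges of $G$ destroys the original chain structure. Theorem~\ref{thm:dist_to_mon_matching} bypasses this headache entirely by reducing both sides to max-weight matching problems whose values become transparent once one observes that dummy edges carry zero weight and that bipartite $G$ has trivial transitive closure, so the only remaining bookkeeping is the factor $\Delta$ that appears because a single edge of $G$ corresponds to exactly one copy edge of $G'$ carrying $1/\Delta$ of the original violation.
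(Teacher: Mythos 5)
Your proof is correct and follows essentially the same route as the paper: both directions rest on Theorem~\ref{thm:dist_to_mon_matching} together with the observations that $TC(G)=G$ (all edges go from $V_B$ to $V_T$) and that $G'$ is a matching, and both extract the factor $\Delta$ from the relation between maximum-weight matchings in $G$ and $G'$. The only immaterial difference is that you obtain $W^*(G)\leq\Delta\,W^*(G')$ by comparing both quantities to the total edge-weight sum $\sum_{(u,v)\in E}\max(0,p(u)-p(v))$, whereas the paper transports the optimal matching of $G$ edge-by-edge into a matching of $G'$ of weight $W^*(G)/\Delta$.
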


\begin{proof}
Observe that for each copy edge $e' = (u^i, v^j)$, the probabilities at the endpoints are $p'(u^i) = p(u)/\Delta$ and $p'(v^j) = p(v)/\Delta$, respectively. Thus, if $p(u)$ is at most $p(v)$, then $p'(u^i)$ will remain at most $p'(v^j)$. Furthermore, for each dummy edge $\bar{e}_{w^i} = (\bar{w}^i,w^i)$, the probability of the bottom vertex, $p'(\bar{w}^i)$, is zero, so this edge never violates the monotonicity of $G'$. Hence it follows immediately that if $p$ is monotone on $G$, then $p'$ is monotone on $G'$ as well. 

On the other hand, assume $p$ is $\epsilon$-far from being monotone. We define a weighted graph on the transitive closure of $G$, $TC(G)$, where the weight of each edge $(u,v)$ is $\max(p(u) - p(v), 0)$. By the proof of Theorem~\ref{thm:dist_to_mon_matching}, $TC(G)$ has a weighted matching, namely $M$, of weight $W$ such that 
\begin{equation}\label{eq:dtvtoMon}
    \frac {W} 2 \leq d_{TV} (\Mon(G),p) \leq W \,.
\end{equation}

Since $G$ is a bipartite poset, and the edges are all from $V_B$ to $V_T$,  $TC(G)$ is the same as $G$. Hence, each edge $e = (u,v)$ in $M$ is in $E$ as well. Also, by the construction of $G'$, there exists a copy edge $e' = (u^i, v^j) \in E'$ that corresponds to $e$. Let $M'$ be the set of copy edge $e' = (u^i, v^j)$ where $e = (u,v)$ is in $M$.  $M'$ is a matching in $G'$ as well. 

Observe that by the above construction, the weight of $e' = (u^i, v^j)$ is $ \max(p'(u^i)-p'(v^j), 0) = \max(p(u)-p(v), 0)/\Delta$. 
Hence, $G'$ contains a matching, $M'$, of weight $W' \coloneqq W/\Delta$ which is at most the weight of the maximum matching in $G'$. Let $W'$ be the weight of the maximum matching in $G'$. By Theorem~\ref{thm:dist_to_mon_matching} and Equation~\ref{eq:dtvtoMon}, we obtain:
$$
 \frac{d_{TV} (\Mon(G),p)}{2 \Delta} \leq \frac{W}{2 \Delta} \leq \frac{W'}{2} \leq d_{TV}(\Mon(G'), p')\,.
$$
Thus, if $p$ is $\epsilon$-far from being monotone, then $p'$ is $\epsilon/(2\Delta)$-far from monotone as well, concluding the lemma. 
\end{proof}

Given the above lemma, it is sufficient to test monotonicity of $p'$ with proximity parameter $\epsilon' = \epsilon/(2\Delta)$. See Algorithm~\ref{alg:boundedDeg} for the steps. Below, we show the correctness of the algorithm.

\begin{algorithm}[t]\label{alg:boundedDeg}
\caption{Reduction from testing monotonicity on a bipartite poset to a matching poset.}
\begin{algorithmic}[1]
\Procedure{Reduction}{$G,  n, \Delta, \epsilon, $ sample access to $p$}
    \State {$\epsilon' \gets \epsilon/(2\Delta)$}
	\State {$G' \gets $Construct the matching poset from $G$ as described.}
	\State {$\mathcal{S} \gets$ Generate $O(\frac{\Delta^3 n}{\epsilon'^2 \log n})$ samples from $p'$}
	\State {Test if $p'$ is monotone or $\epsilon'$-far from it via Algorithm~\ref{alg:matching-poset} using the samples in $\mathcal{S}$.}
	\State{Output the result of the test.}
\EndProcedure
\end{algorithmic}
\end{algorithm}

\begin{corollary} \label{thm:UB_boundedDeg}
There exists an algorithm that tests whether a distribution $p$ over a bipartite poset $G$ of $n$ vertices and maximum degree $\Delta$, is monotone or $\epsilon$-far from monotone with success probability $2/3$, using $O(\frac{\Delta^3 n}{\epsilon^2 \log n})$ i.i.d.~samples from $p$.
\end{corollary}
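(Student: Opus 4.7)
The plan is to prove this corollary directly by instantiating Algorithm~\ref{alg:boundedDeg}: it reduces monotonicity testing on the bipartite poset $G$ to monotonicity testing on a matching poset $G'$, at which point Theorem~\ref{thm:matchings} can be invoked as a black box. Since Lemma~\ref{lem:dist-p-p-prime} already establishes that the reduction preserves the ``yes'' case and shrinks the distance in the ``no'' case by at most a factor of $2\Delta$, essentially all the conceptual work has been done; what remains is to verify that samples from $p'$ are cheap and that the sample-complexity arithmetic works out.

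First I would observe that each sample from the induced distribution $p'$ on $G'$ can be simulated using a single sample from $p$: draw $w \in V$ from $p$ and independently draw $i \in [\Delta]$ uniformly, and output the copy vertex $w^i$. By construction of $p'$, which sets $p'(w^i) = p(w)/\Delta$ on copy vertices and $p'(\bar{w}^i)=0$ on dummy vertices, the output is distributed exactly as $p'$. Thus obtaining $s$ i.i.d.~samples from $p'$ costs exactly $s$ i.i.d.~samples from $p$.

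Second, by Lemma~\ref{lem:dist-p-p-prime}, with $\epsilon' \coloneqq \epsilon/(2\Delta)$, distinguishing whether $p$ is monotone on $G$ from being $\epsilon$-far from monotone reduces to distinguishing whether $p'$ is monotone on $G'$ from being $\epsilon'$-far from monotone. Applying Theorem~\ref{thm:matchings} to the matching poset $G'$, which has $n' = O(\Delta n)$ vertices, with proximity parameter $\epsilon'$, yields a tester with success probability $2/3$ using
\[
O\!\left(\frac{n'}{\epsilon'^2 \log n'}\right) \;=\; O\!\left(\frac{\Delta n}{(\epsilon/(2\Delta))^2 \log(\Delta n)}\right) \;=\; O\!\left(\frac{\Delta^3 n}{\epsilon^2 \log(\Delta n)}\right) \;\leq\; O\!\left(\frac{\Delta^3 n}{\epsilon^2 \log n}\right)
\]
samples from $p'$, and hence, by the simulation above, the same number of samples from $p$. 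This matches the claimed sample complexity.

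The only mildly delicate point I anticipate is the ``sufficiently small constant $\epsilon$'' hypothesis of Theorem~\ref{thm:matchings}: one must ensure it continues to hold for $\epsilon'=\epsilon/(2\Delta)$. Since $\epsilon' \leq \epsilon$, whatever constant upper bound is required by Theorem~\ref{thm:matchings} is automatically inherited under the (implicit) assumption that $\epsilon$ itself is sufficiently small, so this causes no issue. There are no new obstacles beyond plugging numbers in, and the proof is essentially immediate given Lemma~\ref{lem:dist-p-p-prime} and Theorem~\ref{thm:matchings}.
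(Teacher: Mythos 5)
Your proposal is correct and follows essentially the same route as the paper: apply the reduction of Lemma~\ref{lem:dist-p-p-prime} with $\epsilon' = \Theta(\epsilon/\Delta)$, simulate each sample of $p'$ from one sample of $p$, and invoke Theorem~\ref{thm:matchings} on the matching $G'$ of size $O(\Delta n)$ to get the claimed $O(\Delta^3 n/(\epsilon^2\log n))$ bound. Your extra remarks on the sample simulation and the ``sufficiently small $\epsilon$'' hypothesis are consistent with the paper's (terser) argument.
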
 
\begin{proof}
Given Lemma~\ref{lem:dist-p-p-prime}, it suffices to test the monotonicity of $G'$ with parameter $\epsilon' = \epsilon/\Delta$. Using Theorem~\ref{thm:matchings} and since $G'$ is a matching of size $n' = O(\Delta n)$, one can test monotonicity of $p'$ with high probability using $O(n'/(\epsilon'^2 \log n')) = O({\Delta^3 n}/({\epsilon^2 \log n}))$ samples as desired. Therefore, the proof is complete. 
\end{proof}

\subsection{Testing monotonicity of distributions that are uniform on a subset of the domain}\label{sec:unif}
In this section, we give an algorithm for testing monotonicity on a specific yet broad class of instances.
More specifically, suppose that we are given a directed bipartite graph $G(V=V_T\cup V_B,E\subseteq V_T\times V_B)$,  along with a probability distribution on the set $V$. Note that all the directed edges go from a vertex in the ``bottom'' set $V_B$, to a vertex in the ``top'' set $V_T$. 
We additionally assume that all distributions which we sample from are \emph{uniform on a subset} of $V$ whose size is known to the algorithm. That is, for every vertex $u\in V$ either $p_u=0$ or $p_u=1/\vert R\vert$, where $R$ is the support of the distribution $p$. 

We will show the following result:
\begin{theorem}\label{thm:UniformOnSubset} Let $G$ be a directed bipartite graph as described above and $p$ be a probability distribution on $V$ which is uniform on a subset of $V$, namely $R$. Given the size of $R$, there exists an algorithm with sample complexity  $O(\frac{n^{2/3}}{\epsilon}+\frac{1}{\epsilon^2})$ 
that can test, with success probability $2/3$, whether $p$ is monotone on $G$, or $p$ is $\epsilon$-far from any monotone function on $G$,
\end{theorem}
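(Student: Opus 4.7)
The plan is to realize the two-phase tester sketched in Section~5.4, exploiting the fact that, because $p$ is uniform on an unknown support $R$ with $|R|$ given, the monotonicity condition $p(u)\leq p(v)$ on an edge $(u,v)\in E$ collapses to the purely combinatorial condition ``$u\in R$ implies $v\in R$.'' Concretely, in Phase~1 the tester draws $s_1=\Theta(n^{2/3}/\epsilon)$ i.i.d.\ samples from $p$, retains the subset $B\subseteq V_B$ of those that landed in $V_B$, and computes the top-neighborhood $T:=N(B)\cap V_T$ together with the integer $|T|$. In Phase~2 it draws an independent batch of $s_2=\Theta(1/\epsilon^2)$ samples, forms the empirical frequency $\widehat{p(T)}$, and outputs \accept iff $\widehat{p(T)}\geq |T|/|R|-\eta$ for a threshold $\eta=\Theta(\epsilon s_1/|R|)$.

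For completeness, suppose $p$ is monotone. Every $u\in B$ has $p(u)=1/|R|$, so for every edge $(u,v)\in E$ we must have $p(v)\geq 1/|R|$, and uniformity on $R$ forces $v\in R$. Therefore $T\subseteq R$, giving $p(T)=|T|/|R|$ exactly. A Chernoff/Hoeffding bound on the $s_2$ Phase-2 samples then shows $|\widehat{p(T)}-p(T)|\leq \eta/2$ with probability at least $5/6$, so the tester accepts.

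For soundness I invoke Theorem~\ref{thm:dist_to_mon_matching}. Since $G$ is bipartite with all edges directed from $V_B$ to $V_T$, $TC(G)=G$, and the theorem provides a matching $M\subseteq E$ with $\sum_{(u,v)\in M}\max\{p(u)-p(v),0\}\geq\epsilon$. The two-valued nature of $p$ forces each contributing edge to carry weight exactly $1/|R|$, with bottom endpoint in $R\cap V_B$ and top endpoint in $V_T\setminus R$, so $M$ contains at least $\epsilon|R|$ disjoint \emph{violating} edges $(u_j,v_j)$. Each $u_j$ is hit independently per Phase-1 sample with probability $1/|R|$, so a standard second-moment/Chernoff argument shows $B$ contains $\Omega(\epsilon s_1)$ distinct $u_j$'s with high constant probability (using $\epsilon s_1=\Omega(n^{2/3})\gg 1$). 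Their matched $v_j$'s all lie in $T\cap (V_T\setminus R)$, so $|T\setminus R|\geq\Omega(\epsilon s_1)$ and hence $p(T)\leq |T|/|R|-\Omega(\epsilon s_1/|R|)$. A Bernstein bound on the Phase-2 estimator, with variance $\mathrm{Var}(\widehat{p(T)})\leq p(T)/s_2\leq|T|/(|R|\,s_2)$, then shows that the tester rejects.

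The main obstacle will be choosing $s_1$, $s_2$ and the matching threshold $\eta$ so as to produce the clean additive bound $O(n^{2/3}/\epsilon+1/\epsilon^2)$. The structural term $s_1=\Theta(n^{2/3}/\epsilon)$ is chosen large enough for Phase~1 to capture $\Omega(\epsilon s_1)$ distinct violating endpoints (giving a gap of $\Omega(\epsilon s_1/|R|)$ in $p(T)$), and the estimation term $s_2=\Theta(1/\epsilon^2)$ then suffices to resolve this gap once one makes a case split on $|R|$: when $|R|$ is small the additive $1/\epsilon^2$ estimator already beats the gap directly; when $|R|$ is large, $|T|/|R|\leq 1$ keeps the Bernstein variance manageable, so the structural term dominates. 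I expect the delicate steps to be (i) the capture lemma for Phase~1 in the borderline regime $\epsilon s_1\sim|R|$, and (ii) verifying the Bernstein inequality across the case split so that the two terms combine additively into the claimed sample complexity.
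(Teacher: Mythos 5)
Your tester is the same two-phase construction as the paper's (Algorithm 6): draw $s_1=\Theta(n^{2/3}/\epsilon)$ samples, form $B=\mathcal{S}_1\cap V_B$ and $T=N(B)$, note that monotonicity plus uniformity forces $T\subseteq R$ so $p(T)=|T|/|R|$ exactly, and in the far case use the violating matching from Theorem~\ref{thm:dist_to_mon_matching} (with $TC(G)=G$) to argue that $T$ picks up $\Omega(\epsilon s_1)$ zero-mass vertices, depressing $p(T)$ by $\Omega(\epsilon s_1/|R|)$. The completeness and soundness arguments, and the additive acceptance threshold $\eta=\Theta(\epsilon s_1/|R|)$, match the paper's (whose multiplicative threshold $(1-\epsilon'/2)|T|/|R|$ with $\epsilon'=\epsilon s_1/(2|T|)$ is the same thing).

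The one genuine problem is your Phase-2 budget $s_2=\Theta(1/\epsilon^2)$, and the case split you sketch does not rescue it. Take $\epsilon=\Theta(1)$ and $|R|=\Theta(n)$ with, say, $|T|=\Theta(n)$: the gap you must resolve is $\eta=\Theta(\epsilon s_1/|R|)=\Theta(n^{-1/3})$, while $p(T)$ is a constant, so by your own variance bound $\mathrm{Var}(\widehat{p(T)})\le p(T)/s_2=\Theta(1/s_2)$ you need $s_2=\Omega(\eta^{-2})=\Omega(n^{2/3})$ just to get the standard deviation below the gap; $s_2=\Theta(1/\epsilon^2)=\Theta(1)$ is hopeless here, and the same failure hits your completeness concentration claim $|\widehat{p(T)}-p(T)|\le\eta/2$. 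This is precisely the regime where ``$|R|$ is large,'' so ``the structural term dominates'' is not an argument — Phase 2 still has to detect an additive $n^{-1/3}$ deficit in a constant-order Bernoulli parameter. The fix is what the paper does: set $s_2=\Theta(1/\eta^2)=\Theta(|R|^2/(\epsilon^2 s_1^2))\le O(n^2/(\epsilon^2 s_1^2))=O(n^{2/3})$ (computable since $|R|$ is known), which is dominated by $s_1$ and so still yields the claimed $O(n^{2/3}/\epsilon+1/\epsilon^2)$ total. With that correction your argument goes through.
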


At a high level, our tester works as follows: We draw an initial set $\mathcal{S}_1$ of $s_1$ samples from $p$. We define $B=\mathcal{S}_1\cap V_B$ to be the set of vertices from the bottom, $V_B$, that we see in the sample set. Then, we look at the set $T\subseteq V_T$ containing all out-neighbors of the vertices in $B$. 
We show the following structural property of distributions that are $\epsilon$-far from being monotone: in expectation, the constructed set $T$ contains $\epsilon/s_1$ endpoints of violating edges, so  $|T|$ cannot be too small. Thus, if $|T|$ is much smaller than $\epsilon/s_1$, we can immediately conclude that the distribution is close in total variation distance to some monotone distribution.
However, if $T$ is sufficiently large in cardinality, 
we draw more samples in order to estimate the amount of probability mass on $T$. 
Note that if $p$ is monotone, then we expect that all the elements in $T$ be in the support of the distribution, namely $R$, so every single element of $T$ should have probability mass $\frac{1}{\vert R\vert}$ for the distribution to be monotone. The tester rejects if there is sufficient evidence that this is not the case. 
More specifically, the proposed tester is given in Algorithm~\ref{alg:uniform_on_subset}.

\begin{algorithm}[t]\label{alg:uniform_on_subset}
\caption{Algorithm for testing monotonicity of the uniform distribution over a subset of the domain.}
\begin{algorithmic}[1]
\Procedure{Monotonicity-Test}{$G, \epsilon$,$\vert R\vert$, and sample access to $p$}
    \State{$\mathcal{S}_1 \gets$ Draw $s_1=O(\frac{n^{2/3}}{\epsilon})$ samples from $p$.}
    \State {$B \gets  \mathcal{S}_1\cap V_B$ }
    \Comment{where $V_B$ is the set of bottom vertices}
	\State{$T\leftarrow N(B)\quad\quad$}
	\Comment{$N(B)$ is the neighbor set of the set $B$}
    \If{$\vert T\vert\leq \frac{\epsilon s_1}{2}$} 	
        \State {\textbf{Return} {\accept}}
    \EndIf
    \State {$\mathcal{S}_2 \gets$ Draw $s_2=O({n^{2/3}})$ samples from $p$.}
    \State {$Y\gets T\cap \mathcal{S}_2$}
    \State {$\epsilon^\prime\gets \frac{\epsilon\cdot s_1}{2\vert T\vert}$}
    \If{$\vert Y\vert\geq s_2\cdot(1-\frac{\epsilon^\prime}{2})\cdot \frac{\vert T\vert}{\vert R\vert} $} 
		\State{\textbf{Return} {\accept}}
    \Else
    	\State{\textbf{Return} {\reject}}
    \EndIf
\EndProcedure
\end{algorithmic}
\end{algorithm}

\begin{proofof}{Theorem \ref{thm:UniformOnSubset}}
As given in the algorithm, let $s_1=O(\frac{n^{2/3}}{\epsilon})$ and $s_2 = O(n^{2/3})$ denote the sample sizes of the two steps described earlier. We consider the following two cases.

\noindent
\textbf{Completeness case:} Assume $p$ is a monotone distribution. Clearly, each sample we draw has a non-zero probability. Since we pick $T$ to be the neighbor set of the samples we draw, we know that every element in $T$ has a non-zero probability. By the uniformity assumption, this probability is $|T|/|R|$. Thus, when we draw $s_2$ samples from the distribution we expect $|T|/|R|$ fraction of them fall into $T$. So, the expected value of $|Y|$ is $s_2\cdot |T|/|R|$. We defer the asymptotic complexity analysis of this case to the end of our proof. 

\noindent
\textbf{Soundness case:} Assume $p$ is $\epsilon$-far from being a monotone distribution. Consider all the violating edges $(u,v)$ in $E$ for which  $p(u)$ is greater than $p(v)$.  By Lemma~\ref{lem:violating_matching}, there exists a set of edges, namely $M$, that form a matching, and we have: 
$$\sum\limits_{(u, v) \in M} p(u) - p(v) \geq \epsilon \,.$$
Note that without loss of generality one can assume $M$ only has violating edges, since removing non-violating edges only makes the left hand side larger. By the uniformity   assumption for $p$, $p(u) - p(v)$ is exactly $1/|R|$. Thus, by the above inequality, we have $|M|/|R|$ is at least $\epsilon$. 

Since there are $\vert M\vert$ vertices in $V_B$ that belong to the matching,  $\vert B\cap M\vert$ is a random variable distributed according to the binomial distribution $\Bin(s_1,{\vert M\vert}/{\vert R\vert})$, we have that \[\E\left[\vert B\cap M\vert\right]=\frac{s_1 \cdot \vert M\vert}{\vert R\vert} \geq \epsilon s_1 \,.\]
Using Chebyshev's inequality and the fact that $|B\cap M|$ is a binomial distribution, we have 
\begin{align*}
\Pr\left[|B \cap M|  \leq \frac{\epsilon s_1}{2}\right] & \leq \Pr\left[|B \cap M| \leq \frac{\E[\vert B\cap M\vert]}{2}\right] 
\leq \frac{4 \Var[|B\cap M|]}{\E[\vert B\cap M\vert]^2} 
\\ & \leq \frac{4 s_1 \cdot (|M|/|R|) \cdot (1 - |M|/|R|)}{(s_1 \cdot |M|/|R|)^2} \leq \frac{4}{\epsilon s_1} = O(n^{-2/3})\,.
\end{align*}
Thus, with high probability, $B$ contains at least $\epsilon s_1/2$ endpoints in $M$. Note that the neighbor set of $B$ contains the other endpoints of the edges in the matching $M$. Thus, $T$ contains at least $|B\cap M|$ vertices of zero probability, which implies that the size of $T$ has to be at least $\epsilon s_1/2$. 
Hence, for sufficiency large $n$, the probability that $p$ gets rejected due to the condition $|T| \leq \epsilon s_1/2$ is negligible.

Consider the second set of samples we draw in the algorithm $S_2$. Clearly, the size of $Y \coloneqq T \cap S_2$ is a binomial random variable drawn from $\Bin(s_2, |T\cap R|/|R|)$. However, we show
that  $\epsilon \coloneqq \epsilon s_1/(2|T|)$ fraction of the elements in $T$ have zero probability. Thus, $|T \cap R|/|R|$ is at most $(1- \epsilon')|T|/|R|$ while in the completeness case it is $|T|/|R|$. So, we only need to estimate the bias of a Bernoulli random variable up to an additive error of $\epsilon'' \coloneqq \epsilon' |T|/(2|R|)$. By Hoeffding bound, we only need to draw $O(1/{\epsilon''}^2)$ samples to distinguish the two cases with high probability which implies: 
$$s_2 = \Theta\left(\frac{1}{{\epsilon''}^2}\right) = \Theta\left(\frac{|R|^2}{\epsilon'^2 |T|^2}\right) \leq O\left(\frac{n^2}{\epsilon^2 s_1^2}\right) = O (n^{2/3})$$
Thus, with high probability, we distinguish them correctly.  
\end{proofof}

\subsection{Upper bound via trying all matchings}\label{sec:naive}
In this section we present a simple upper bound for the problem of monotonicity testing on bipartite graphs.
Let $\mathcal{M}$ be the number of pairs of subsets $(S_t,S_b)$ of top and bottom elements respectively for which there exists a perfect matching between them.  The algorithm is the following:

\begin{algorithm}[t]\label{alg:bipartite}
\caption{Algorithm for Testing Monotonicity on a bipartite graph.}
\begin{algorithmic}[1]
\Procedure{Matching-Tester}{$G, \epsilon$, and sample access to $p$}
    \State {$s \gets$ draw $O(\log \mathcal{M} / \epsilon^2)$ samples  from $p$.}
    \For{each pair of equal size subsets $(S_t,S_b)$ of top and bottom elements} 
        \If{there exists a perfect matching between $S_t$ and $S_b$} 
            \State {$\hat w_t \gets $ Estimate the total probability mass of $S_t$}
            \State {$\hat w_b \gets $ Estimate the total probability mass of $S_b$}
      	    \If{$\hat w_t$ is less than $\hat w_b-\epsilon/2$}
      			\State{\textbf{Return} {\reject}}
          	\EndIf
        \EndIf 
    \EndFor
    \State{\textbf{Return} {\accept}}
\EndProcedure
\end{algorithmic}
\end{algorithm}

\begin{theorem}
We can test whether a distribution $p$ over a bipartite graph $G$ with $n$ vertices is monotone or $\epsilon$-far from any monotone distribution with success probability $2/3$, using $O((\log M)/\epsilon^2)$ samples, where $M$ is the number of pairs of subsets of top and bottom elements respectively for which there exists a perfect matching between them. That is, $O(n/\epsilon^2)$ samples for a worst case graph $G$.
\end{theorem}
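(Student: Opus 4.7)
The plan is to combine the matching characterization of distance-to-monotonicity (Lemma~\ref{lem:violating_matching} / Theorem~\ref{thm:dist_to_mon_matching}) with Hoeffding's inequality and a single union bound over all $M$ candidate pairs. Throughout, I assume the bipartite convention in which every edge $(b,t) \in E$ satisfies $b \preceq t$ with $b \in V_B$, $t \in V_T$, so monotonicity of $p$ means $p(b) \leq p(t)$ on every edge; the opposite convention only flips signs.

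First I would analyze the estimation step. Fix an arbitrary pair $(S_t, S_b)$ admitting a perfect matching between $S_t$ and $S_b$. The quantities $w_t = \sum_{t \in S_t} p(t)$ and $w_b = \sum_{b \in S_b} p(b)$ are each the probability that a single sample lands in a fixed subset of the domain; hence given $s = O(\log M / \epsilon^2)$ i.i.d.\ samples from $p$, the empirical estimators $\hat w_t, \hat w_b$ (as in Algorithm~\ref{alg:bipartite}) satisfy $|\hat w_t - w_t| \leq \epsilon/4$ and $|\hat w_b - w_b| \leq \epsilon/4$ with failure probability at most $\exp(-\Omega(s\epsilon^2)) \leq 1/(6M)$ by Hoeffding's inequality. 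Crucially the same sample batch is reused across all pairs, so a union bound over the $2M$ estimates (two per candidate pair) gives that \emph{all} estimates are simultaneously $\epsilon/4$-accurate with probability at least $2/3$. Condition on this event for the rest of the analysis.

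Next I would verify completeness. Suppose $p$ is monotone on $G$. For any pair $(S_t, S_b)$ the algorithm considers, fix a perfect matching $M$ between them; every edge $(b,t) \in M$ satisfies $p(b) \leq p(t)$, so summing over $M$ yields $w_b \leq w_t$. Combined with our $\epsilon/4$-accurate estimates, $\hat w_t \geq w_t - \epsilon/4 \geq w_b - \epsilon/4 \geq \hat w_b - \epsilon/2$, so the reject condition is never triggered and the algorithm outputs \accept.

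For soundness, suppose $p$ is $\epsilon$-far from monotone. Because $G$ is bipartite we have $TC(G) = G$, so Lemma~\ref{lem:violating_matching} produces a matching $M \subseteq E$ with $\sum_{(b,t) \in M}\bigl(p(b) - p(t)\bigr) \geq d_{TV}(p, \Mon(G)) \geq \epsilon$. Discarding any edges with $p(b) \leq p(t)$ only increases this sum, so we may assume every edge of $M$ is violating. Let $S_b, S_t$ be the endpoints of $M$ in $V_B, V_T$ respectively; then $M$ is a perfect matching between $S_b$ and $S_t$, so this pair is enumerated by the algorithm, and by construction $w_b - w_t = \sum_{(b,t)\in M}(p(b)-p(t)) \geq \epsilon$. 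Applying the estimation guarantee, $\hat w_b - \hat w_t \geq (w_b - \epsilon/4) - (w_t + \epsilon/4) \geq \epsilon/2$, so the algorithm \reject{}s on this pair.

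This establishes the sample complexity $O(\log M / \epsilon^2)$. Since any considered pair $(S_t, S_b)$ is a subset of $V_T \times V_B$, we have $M \leq 2^{|V_T|} \cdot 2^{|V_B|} \leq 4^n$, giving the worst-case bound $O(n/\epsilon^2)$. The only non-routine step is recognizing that (i) soundness needs \emph{no} quantitative control over the full max-weighted matching but only the existence of a single pair with a $w_b - w_t$ gap of $\epsilon$, which Lemma~\ref{lem:violating_matching} delivers, and (ii) because the algorithm compares only the \emph{sums} of probabilities over $S_t$ and $S_b$, we can share a single sample batch and union-bound once over all $M$ pairs, rather than paying a $\mathrm{poly}(M)$ factor.
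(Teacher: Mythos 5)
Your proposal is correct and follows essentially the same route as the paper: enumerate all pairs $(S_t,S_b)$ admitting a perfect matching, estimate $w_t$ and $w_b$ from a single shared sample batch with per-pair failure probability $O(1/M)$, union bound, and invoke Lemma~\ref{lem:violating_matching} to guarantee a witnessing pair with gap at least $\epsilon$ in the far case. The only cosmetic difference is that you apply Hoeffding directly to get the $1/(6M)$ failure probability where the paper amplifies a constant-probability estimator by a median trick; both yield the same $O((\log M)/\epsilon^2)$ bound.
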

\begin{proof} Let $w_t$ and $w_b$ denote the probability mass of $S_t$ and $S_b$ respectively.
Note that if we use $O(1/\epsilon^2)$ samples, we can estimate $w_t$ and $w_b$  within an additive error of  $\epsilon/8$. Thus, we can estimate the difference of the two with error of $\epsilon/4$ with a constant probability. We can amplify the probability of the correctness, by repeating the estimation and taking the median of them. Therefore, for each pair of subsets, the probability that the algorithm fails to estimate the difference of $w_b$ and $w_t$ within an error of $\epsilon/4$ is at most $O(\frac{1}{M})$. By union bound, we distinguish whether $w_b - w_t$ is at least $\epsilon$ or at most zero by comparing the $\hat w_b - \hat w_t$ with $\epsilon/2$, with a constant success probability.

Now, if $p$ is $\epsilon$-far from being monotone with respect to the graph $G$, there exists a matching such that the total difference between the probabilities of the bottom and the top elements, $w_b - w_t$ is at least $\epsilon$ by Lemma~\ref{lem:violating_matching}. Thus, in one of the iteration, we will consider this matching, and output \reject. 
Also, if $p$ is monotone with respect to the graph $G$, there is no  violating edge. Therefore, for each pair  $S_t$ and $S_b$, we have $w_b -  w_t \leq 0$. Thus, in no iteration we output \reject, and the distribution will be accepted at the end. 

Lastly, since there are at most $2^{n_t}\cdot 2^{n_b}=2^{n_t+n_b}=2^n$ pairs of subsets where $n_t,n_b$ is the total number of top and bottom elements respectively, we conclude that the sample complexity is $O(n/\epsilon^2)$.  
\end{proof}

\paragraph{Remark:} Note that in order to execute the above algorithm, it is not required to know the quantity $M$ in advance.  We can instead draw more samples and update all our estimates at the same time to sufficiently reduce the error probability for each estimate for the union bound to work.

\bibliographystyle{alpha}
\bibliography{main.bib}

\end{document}